\newcommand{\dd}{\mathrm{d}}
\newcommand{\beginsupptables}{
  \setcounter{table}{0}
  \renewcommand{\thetable}{S\arabic{table}}}
\newcommand{\beginsuppfigs}{
  \setcounter{figure}{0}
  \renewcommand{\thefigure}{S\arabic{figure}}}
\theoremstyle{definition}
\newtheorem{prop}{Proposition}
\newtheorem{lemm}[prop]{Lemma}
\newtheorem{theo}[prop]{Theorem}
\newtheorem{rema}[prop]{Remark}
\newtheorem{assum}[prop]{Assumption}
\newtheorem{defi}[prop]{Definition}
\theoremstyle{remark}
\newtheoremstyle{modelstyle}
{3pt}{3pt}
{}{}
{\bfseries}{}
{0.5em}
{\thmname{#1}\ \thmnumber{#2}\thmnote{: #3}}
\theoremstyle{modelstyle}
 \patchcmd{\epigraph}{\@epitext{#1}}{\@epitext{#1}}{}{}
\date{Draft manuscript: March, 2026}
\title{Empirical Bayes learning from selectively reported confidence intervals} 
\author{
\begin{tabular}{ll}
\vspace{0.2cm}\\
Hunter Chen & \hspace{1cm} Junming Guan\\
\href{mailto:hunterchen@uchicago.edu}{\textcolor{black}{\texttt{hunterchen@uchicago.edu}}} & \hspace{1cm} \href{mailto:junmingguan@uchicago.edu}{\textcolor{black}{\texttt{junmingguan@uchicago.edu}}}\\
\vspace{0.05cm}\\
Erik van Zwet & \hspace{1cm} Nikolaos Ignatiadis\\
\href{mailto:E.W.van_Zwet@lumc.nl}{\textcolor{black}{\texttt{E.W.van\_Zwet@lumc.nl}}} & \hspace{1cm} \href{mailto:ignat@uchicago.edu}{\textcolor{black}{\texttt{ignat@uchicago.edu}}}\\
\vspace{0.2cm}\\
\end{tabular}
}
\begin{document}

\maketitle

\begin{abstract}
We develop a statistical framework for empirical Bayes learning from selectively reported confidence intervals, and apply it to provide context for interpreting results published in MEDLINE abstracts. We use a collection of 326,060 z-scores from MEDLINE abstracts (2000–2018) as the input for an empirical Bayes analysis, with publication bias as a key methodological challenge. We address publication bias through a selective tilting approach that extends empirical Bayes confidence intervals to truncated sampling. Our framework provides coverage guarantees for functionals including posterior estimands describing idealized replications and the
symmetrized posterior mean, which we justify decision-theoretically as optimal among sign-equivariant (odd) estimators.
\\ 

\noindent \textbf{Keywords:} publication bias, truncation models, selective tilting, $F$-Localization

\end{abstract}

\newpage

\section{Introduction}
\begin{figure}
  \centering
  \includegraphics[width=0.7\textwidth]{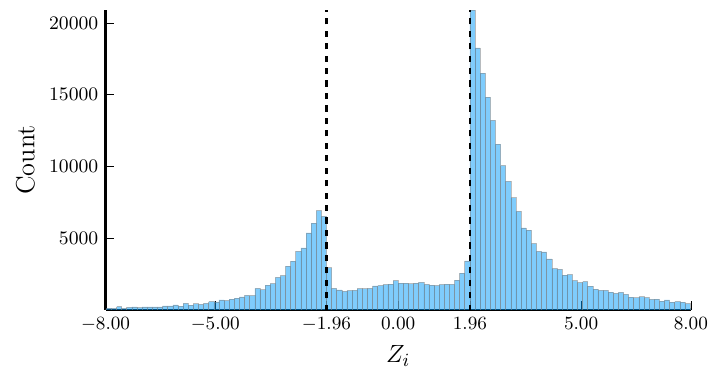}
  \caption{Histogram of $326,060$ z-scores from abstracts (one z-score per abstract) appearing in MEDLINE (2000--2018). See Supplement~\ref{sec:medline} for preprocessing details.}
  \label{fig:hist_med}
\end{figure}

Figure~\ref{fig:hist_med} shows a histogram of z-scores from $326,060$ abstracts indexed in MEDLINE, the bibliographic database of the National Library of Medicine (NLM), between 2000--2018. This now well-known histogram derives from \citet{vanzwet2021significance}, who convert to z-scores the confidence intervals for ratio estimands (hazard ratios, odds ratios, relative risks) originally scraped by \citet{georgescu2018algorithmic, BarnetteMEDLINE2019}; see \citet{vanzwet2025fifth} for further discussion.
Staring at the histogram, we may be tempted to lament the state of academic publishing wherein z-scores just above the cutoff for statistical significance ($|z| \geq 1.96$) are much more likely to appear in abstracts.
Instead, in this paper we ask: what can we learn from this histogram? More concretely, suppose we read the following in another abstract, indexed in MEDLINE in 2019, that we deem exchangeable with the $326,060$ abstracts above, in the sense that we treat its z-score as a draw from the same distribution:

\epigraph{``The hazard of MRSA [methicillin-resistant \emph{Staphylococcus aureus}] infection was significantly lower in the
decolonization group than in the education group (hazard ratio, 0.70; 95\% confidence interval
[CI], 0.52 to 0.96; P=0.03.''~~~~\citep{decolonization2019medline}}

\noindent How can we use the z-scores in Fig.~\ref{fig:hist_med} to provide context for interpreting the result of the abstract by~\citet{decolonization2019medline}?
Our basic supposition is that there is a population of true signal-to-noise ratios (SNRs), denoted by $\mu_i$, drawn from a distribution $G$ that represents studies that could potentially appear in MEDLINE abstracts,
\begin{equation}
\mu_i \sim G.
\label{eq:snr_prior}
\end{equation}
Our definition of the SNR $\mu_i$ is as the $i$-th study's true effect divided by its standard error. The z-score $Z_i$ is equal to $\mu_i$ observed with standard normal noise,
\begin{equation}
Z_i \mid \mu_i \sim \mathrm{N}(\mu_i, 1).
\label{eq:z_normal}
\end{equation}
Going back to the abstract of~\citet{decolonization2019medline}, the z-score that is used to form the confidence interval is equal to the estimated log hazard ratio divided by its standard error. We can read off that \smash{$\log(\mathrm{HR}_i) \stackrel{\cdot}{=} -0.35$} and  \smash{$\mathrm{SE}_i \stackrel{\cdot}{=}  0.16$}, so that \smash{$Z_i \stackrel{\cdot}{=}  -2.22$}.\footnote{\label{footnote:ci_to_z}
Specifically, our computation of z-scores from confidence intervals proceeds as follows. Let $[L_i,U_i]$ be the confidence interval, e.g., $L_i=0.52$, $U_i=0.96$ in our example.  We compute the standard error as \smash{$\text{SE}_i = (\log(U_i) -\log(L_i))/(2\cdot1.96)$}, and the z-score as \smash{$Z_i = (\log(U_i) +\log(L_i))/(2\cdot\text{SE}_i)$.
}} Herein, $\mu_i := \mathbb E[\log(\mathrm{HR}_i)]/\mathrm{SE}_i$.   The normality of $Z_i$ about $\mu_i$ in~\eqref{eq:z_normal} approximately follows from the central limit theorem, which implies that the sampling distribution of the estimator is approximately normal with variance given by the squared standard error. This approximation also underlies the construction of the reported confidence interval.

If we knew the distribution $G$ in~\eqref{eq:snr_prior}, then we could interpret the results of~\citet{decolonization2019medline} via the posterior distribution of $\mu_i \mid Z_i$; 
Bayesian inference
would proceed in the usual way even under selection on $Z_i$~\citep{dawid1994selection, senn2008note}. The difficulty is that $G$ is unknown and must be learned from a selectively observed collection of z-scores. This is a deconvolution problem further complicated by publication bias: studies with non-significant results may go unpublished, and even when published, confidence intervals for non-significant results may be omitted from the abstract.

In the analysis of MEDLINE reported in Section~\ref{sec:estimands_results}, we use the z-scores in Fig.~\ref{fig:hist_med} to draw inference for individual studies. In particular, we reach the following conclusions for the result reported in the abstract of ~\citet{decolonization2019medline}:
\begin{itemize}[noitemsep,leftmargin=*]
\item A shrunken point estimate for $\mu_i$ in the range $[-1.49, -1.39]$ would have lower mean squared error than just estimating $\mu_i$ by $Z_i=-2.22$. This reflects the potential effect of exaggeration of the original point estimate.
\item If we conduct an idealized replication of the study of~\citet{decolonization2019medline}, then the probability that the p-value of the replication will also be significant with the same sign for the hazard ratio is between $30.8\%$ to $33.4\%$.
\item The posterior probability that $\mu_i$ has the same sign as $Z_i$ is at least $90.7\%$.
\end{itemize}

These claims are derived from confidence intervals for the corresponding posterior functionals, rather than from point estimates alone. Such confidence intervals for empirical Bayes (EB) estimands are rarely reported in practice.  As emphasized by~\citet{ignatiadis2022confidence}, EB procedures can be highly sensitive to the estimated prior, so point summaries by themselves can be misleading. Selective reporting only heightens this sensitivity, which is precisely why we report confidence intervals throughout.

This paper introduces a general statistical framework for reaching conclusions of the above type. While our inference requires strong statistical modeling assumptions---especially about selection mechanisms---we carefully develop the rationale and justification for each in Section~\ref{sec:assumptions}.
Our framework integrates three strands of research (see Section~\ref{sec:related_work} for related work): (i) selective inference adjustments that account for publication bias to learn about collections of published studies; (ii) frequentist uncertainty quantification for EB procedures; (iii) truncation models and (empirical) Bayesian inference.
Our main contributions are as follows.
\begin{enumerate}[noitemsep,leftmargin=*]
\item \textbf{Confidence intervals for EB analysis under truncation.}
To generalize the confidence intervals of~\citet{ignatiadis2022confidence},
we exploit an observational equivalence between two Bayesian selection mechanisms discussed by~\citet{yekutieli2012adjusted}, formalized through selective tilting and untilting operations. We construct both simultaneous confidence intervals and shorter pointwise confidence intervals, each with coverage guarantees (Section~\ref{sec:methodology}). (Meanwhile, bootstrap approaches may undercover, see Section~\ref{subsec:bootstrap}.)
\item \textbf{Sign-agnostic and identifiable estimands.} As we do not trust signs in the z-scores (a positive z-score need not indicate benefit, and investigators may choose contrast directions post-hoc), we propose new estimands and show that these and other existing estimands are identifiable without sign information.
Our estimands (Section~\ref{sec:estimands_results}) include:
\begin{enumerate}[noitemsep,leftmargin=*]
\item The symmetrized posterior mean (Section~\ref{sec:symm_posterior_mean}) for shrinkage estimation, previously informally used by~\citet{vanzwet2024evaluating}. We back
it with two decision theoretic justifications: one as the mean squared error optimal denoiser subject to the constraint of being sign-equivariant (odd), and second as the minimax optimal solution among all priors for $\mu$ in~\eqref{eq:snr_prior} that imply the same prior for $\abs{\mu}$ .
\item Posterior probabilities about idealized in-silico replications $Z_i'$ conditional on $\lvert Z_i\rvert$ as surrogates of actual costly replications (Section~\ref{sec:repl_posterior_estimands}).
These answer questions such as: if we  repeat the experiment, then what is the probability that $Z_i'$ will be significant and have the same sign as $Z_i$? What is the probability that the confidence interval for $\mu_i$ based on $Z_i'$ contains $Z_i$? What is the probability that $|Z_i'|\geq |Z_i|$? 
\item The risk ratio of publication of a significant result versus a non-significant result~\citep{hedges1992modeling} (Section~\ref{sec:publication_prob}).
\end{enumerate}
\item \textbf{Analysis of MEDLINE.} We conduct a comprehensive analysis of the MEDLINE dataset. For each estimand, we report confidence intervals under three nested classes of SNR distributions, enabling assessment of sensitivity to the assumed SNR class (Section~\ref{sec:estimands_results}).  We show that substantial uncertainty from extrapolation persists even with our large sample size, in ways that depend on the estimands in a nuanced manner, highlighting the value of reporting confidence intervals rather than point estimates alone.
\end{enumerate}
Finally, in Section~\ref{sec:additional_results} we conduct several supplementary analyses. We analyze the Cochrane Database of Systematic Reviews (Section~\ref{subsec:Cochrane_analysis}), where publication bias is thought to be limited, applying our framework with and without the selection adjustment. This comparison quantifies the precision cost of guarding against publication bias when such bias may be limited. We also conduct simulation studies where the ground truth is known (Section~\ref{subsec:bootstrap}) to empirically demonstrate the coverage of our proposed intervals.

\section{Modeling assumptions}
\label{sec:assumptions}
Our goal is to learn functionals of the SNR distribution from the observed $Z_i$ in the histogram shown in Fig.~\ref{fig:hist_med}. Our basic modeling assumption is encoded in~\eqref{eq:snr_prior} and~\eqref{eq:z_normal}, and is further modified to account for the following two concerns. 
\begin{itemize}[leftmargin=*]
    \item \textbf{Selection into abstracts}: Not all z-scores appear in abstracts. The histogram shows a large gap near $0$, with few z-scores in $(-1.96, 1.96)$, consistent with selection against non-significant results. Studies with statistically significant  results are more likely to be highlighted in abstracts, while null findings may remain unpublished, or be relegated to the main text. Even when null findings are published, confidence intervals for them are less likely to be reported in abstracts. This selection mechanism means the observed $Z_i$ provides a biased sample.
    \item \textbf{Sign information}: The histogram has more positive than negative z-scores, but sign information is difficult to interpret. The direction of a reported effect does not consistently indicate clinical benefit or harm; a positive coefficient might represent increased survival in one abstract but increased mortality in another, depending on outcome coding. Moreover, when comparing treatments $A$ and $B$ with no natural control, investigators may report $\mathrm{A}-\mathrm{B}$ or $\mathrm{B}-\mathrm{A}$ depending on which yields a positive result, introducing a data-driven asymmetry; similarly, investigators may change the coding of binary outcomes. For these reasons, we model only $\abs{Z_i}$, discarding the sign as unreliable.
\end{itemize}

In our modeling, we posit a population of $n_{\text{all}}$ latent studies described as the triplets $(\mu_i, |Z_i|, D_i) \in \RR \times\RR_{\geq 0} \times \cb{0,1}$ drawn from a distribution $\mathbb P$.
Here $\mu_i$ is the SNR of latent study $i$ (as in~\eqref{eq:snr_prior}), $|Z_i|$ is the observed absolute z-score (with $Z_i$ as in \eqref{eq:z_normal} but with its sign discarded), and $D_i \in \cb{0,1}$ is a binary indicator
of whether the study's z-score was published in the abstract of an article in MEDLINE. The indicator $D_i$ depends on multiple factors, e.g., both the researchers' choice and the journal review process. We only get to observe the absolute z-scores $|Z_i|$ of studies with $D_i=1$, and the total number of latent studies $n_{\text{all}}$ is unknown. 
More formally, we consider
a model of publication bias following~\citet{hedges1992modeling,andrews2019identification} in which  $(\mu_i, |Z_i|, D_i)$ are generated as follows for $i=1,\dotsc, n_{\text{all}}$:
\begin{equation}
\label{eq:publication_bias_model}
\mu_i \sim G,\quad\;
|Z_i| \,\,\mid \,\,\mu_i \sim |\mathrm{N}(\mu_i,1)|,\quad\;
D_i \mid (|Z_i|,\mu_i) \sim \mathrm{Ber}\!\left(\pi(|Z_i|)\right).
\end{equation}
The function $\pi(\cdot):\RR_{\geq 0}\to [0,1]$ is defined by $\pi(z):=\mathbb{P}(D=1 \mid \,\, |Z|=z)$, which determines the probability of publication of a study in terms of its absolute z-score. Here $\abs{\mathrm{N}(\mu_i,1)}$ denotes the folded normal distribution. 

To streamline notation, let $(\mu, \abs{Z}, D)$ denote a generic triplet drawn from model~\eqref{eq:publication_bias_model}.
Since we only observe absolute z-scores from $\PP{\cdot \mid D =1}$, but want to learn about $G$,
we need to make an assumption about the selection mechanism.
\begin{assum}[No publication bias after truncation]
  Let $\selection \subset \RR_{\geq 0}$ be a pre-specified measurable set with $\int_{\selection}dz >0$. We assume that:
  $$\cb{\abs{Z} \;\mid \;\p{\abs{Z} \in \selection},\;D=1}\;\;\;\; \stackrel{\mathcal{D}}{=}\;\;\;\; \cb{\abs{Z} \;\mid \; \p{\abs{Z} \in \selection}}.$$
  \label{assum:publication_bias}
\end{assum}
\noindent  The assumption specifies that publication bias does not distort
the distribution of studies with absolute z-score $\abs{Z} \in \selection$. 
Hence we can learn about the distribution $\PP{\cdot \mid \abs{Z} \in \selection}$ by restricting our attention to studies with $D=1$ and $\abs{Z} \in \selection$. Assumption~\ref{assum:publication_bias}
allows for studies with $\abs{Z} \in \selection$ not to be published, i.e., to have $D=0$.\footnote{
\citet{hedges1988selection} notes that assuming that $D=\ind(\abs{Z} \in \selection)$, as made in older literature, is not reasonable. Even studies with very large z-scores may not be published. Assumption~\ref{assum:publication_bias} does not impose such a restriction.
} We have an equivalent characterization.

\begin{prop}[A necessary and sufficient condition for Assumption~\ref{assum:publication_bias}]Under model~\eqref{eq:publication_bias_model}, 
Assumption~\ref{assum:publication_bias} holds if and only if
there exists a constant $a \in (0,1]$ such that $\pi(\abs{z}) = a$ almost everywhere on $\selection$.
\label{prop:iff_cond_a1}
\end{prop}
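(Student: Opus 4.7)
I would translate Assumption~\ref{assum:publication_bias} into an explicit equality of densities and then read off the condition on $\pi$ directly.

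Let $f(z)$ denote the Lebesgue density of $|Z|$ under model~\eqref{eq:publication_bias_model}, i.e., $f(z) = \int [\phi(z-\mu) + \phi(z+\mu)] dG(\mu)$ for $z \geq 0$. A key preliminary observation is that $f(z) > 0$ for every $z \geq 0$ and every prior $G$, since the folded-normal density is strictly positive. Using the publication structure $D \mid (|Z|,\mu) \sim \mathrm{Ber}(\pi(|Z|))$, one computes
\begin{equation*}
f(z \mid |Z| \in \selection) \;=\; \frac{f(z)\,\mathbf{1}\{z \in \selection\}}{\int_{\selection} f(z')\,dz'}, \qquad f(z \mid |Z| \in \selection,\, D=1) \;=\; \frac{\pi(z)\,f(z)\,\mathbf{1}\{z \in \selection\}}{\int_{\selection} \pi(z')\,f(z')\,dz'}.
\end{equation*}
(Implicit in Assumption~\ref{assum:publication_bias} is that the conditioning event $\{|Z|\in\selection,\,D=1\}$ has positive probability, i.e., $\int_{\selection} \pi\,f\,dz > 0$.)

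For the direction $(\Leftarrow)$: if $\pi \equiv a$ a.e.\ on $\selection$, substituting into the second display yields the first, establishing the distributional equality in Assumption~\ref{assum:publication_bias}.

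For the direction $(\Rightarrow)$: equality of the two densities on $\selection$, together with $f(z)>0$, gives
\begin{equation*}
\pi(z) \;=\; \frac{\int_{\selection} \pi(z')\,f(z')\,dz'}{\int_{\selection} f(z')\,dz'} \;=:\; a \qquad \text{for Lebesgue-a.e.\ } z \in \selection.
\end{equation*}
The right-hand side is a constant, lies in $(0,1]$ because $\pi$ is $[0,1]$-valued and $\int_{\selection}\pi f\,dz > 0$, and does not depend on any particular sample point; the hypothesis $\int_{\selection} dz > 0$ ensures this a.e.\ statement is nontrivial.

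The main subtlety (as opposed to a genuine obstacle) is just the careful handling of null sets and of the positivity of the conditioning event—there is no real deconvolution-type difficulty here, because the strict positivity of $f$ on $[0,\infty)$ inherited from the folded-normal kernel lets one cancel $f$ and reduce Assumption~\ref{assum:publication_bias} to a pointwise constraint on $\pi$.
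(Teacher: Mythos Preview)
Your proposal is correct and follows essentially the same route as the paper: both arguments reduce Assumption~\ref{assum:publication_bias} to the statement that $\pi(z)f(z)$ and $f(z)$ differ only by a multiplicative constant on $\selection$, and then use strict positivity of the folded-normal marginal $f$ to conclude that $\pi$ is constant a.e. The paper phrases the necessary direction via measurable sets (showing $\int_A(\pi(z)-c)f_G(z)\,dz=0$ for all $A\subset\selection$), whereas you go directly through equality of densities; these are equivalent, and your explicit emphasis on $f>0$ is a point the paper uses but leaves implicit.
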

\noindent In practical terms, among studies with $\abs{Z} \in \selection$, the publication probability is constant, irrespective of the magnitude of $\abs{Z}$. This equivalence reveals that Assumption~\ref{assum:publication_bias} imposes a strong structural constraint on the publication probability function $\pi(z)$. \citet{Benjamini2014discussion} explain some ways in which this assumption may fail. For instance, a researcher may compute 20 z-scores and only report the largest in the abstract. Our model also does not handle sign-dependent selection (beyond the sign-flips mentioned above).
While we acknowledge that Assumption~\ref{assum:publication_bias} is a strong assumption, we note that is consistent with established practice in the literature on publication bias. For example,~\citet{hedges1992modeling} posits
that $\pi(z)$ is piecewise constant with known discontinuity points, and so Assumption~\ref{assum:publication_bias}
would be applicable if we take $\selection$ to be the halfline from the largest discontinuity point to $\infty$.
\citet{andrews2019identification} consider nonparametric identification of model~\eqref{eq:publication_bias_model}, based on e.g.,
replication studies; however their empirical strategy posits that $\pi(z)$ is constant for $z \geq 1.96$, i.e.,
for studies significant at the conventional $5\%$ level. Phrased in terms of p-values, $P_i = 2\Phi(-\abs{Z_i})$ with $\Phi$ the standard normal distribution function, our Assumption~\ref{assum:publication_bias} is identical to Assumption 1 of~\citet{HungFithian2020} and to the assumption made in Proposition 1 of~\citet[Supplement] {jagerleek2014falsepositive} for the selection rule $P_i \leq 0.05$ (equivalently, $|Z_i| \geq 1.96$). The assumption is also implicitly used in the p-curve method~\citep{Simonsohn2013PCurve} and the z-curve method~\citep{bartos2022zcurve}. 

In most of this work, we take $\selection = [2.1, \infty)$.
Assumption~\ref{assum:publication_bias} with a set $\selection'$ implies the same assumption with $\selection \subseteq \selection'$.
Hence, by choosing a smaller set $\selection$, the assumption becomes more plausible; our choice of $\selection = [2.1, \infty)$ is more
conservative compared to the common choice $\selection' = [1.96, \infty)$. In choosing  $\selection = [2.1, \infty)$, we hope to partially avoid our inference being impacted by studies with z-scores
that barely cross the $1.96$ threshold due to ``slow'' p-hacking. In Supplement~\ref{sec:alterna_trunc_set}, we also consider the alternative truncation set $\selection_\text{half} = [2.24, \infty)$, which corresponds to the ``half p-curve'' of~\citet{Simonshon2015pcurve.025} that restricts attention to $P_i\leq 0.025$ (equivalently, $\abs{Z_i} \geq 2.24$).

Assumption~\ref{assum:publication_bias} enables us to identify salient aspects of the SNR distribution $G$.
In so far as we discard sign information in $Z_i$, we cannot identify any sign information encoded in $G$. This is formalized in the following two definitions.

\begin{defi}[Symmetrized and folded prior]
Let $G$ be a distribution on $\RR$. The symmetrized distribution $\Symm{G}$ is defined via $\Symm{G}(A) := \cb{G(A) + G(-A)}/2$ for all Borel sets $A$. The folded distribution $\Fold{G}$ is defined via $\Fold{G}(A) = G(A) + G(-A)$ for any Borel set \smash{$A \subset [0,\infty)$} and \smash{$\Fold{G}((-\infty,0))=0$}. 
\end{defi}
\noindent In words, if $\mu \sim G$, then $\Fold{G}$ is the distribution of $\abs{\mu}$ and $\Symm{G}$ is the distribution of $\varepsilon \cdot \mu$, where $\varepsilon$ is an independent uniform sign flip ($\varepsilon \in \cb{\pm 1}$ and $\PP{\varepsilon=1}=1/2$). The maps $G \mapsto \Fold{G}$ and $G \mapsto \Symm{G}$ encode the same information, retaining the magnitude of SNRs drawn from $G$ but discarding their sign.
We have the following identification result.

\begin{theo}
  Suppose that model~\eqref{eq:publication_bias_model} holds along with Assumption~\ref{assum:publication_bias}. Then, $\Fold{G}$ (equivalently, $\Symm{G}$), is identified, that is, $\Fold{G}$ is uniquely determined from the observable distribution of $\cb{\abs{Z} \;\mid \;\p{\abs{Z} \in \selection},\;D=1}$.
  \label{theo:Identifiability}
\end{theo}

\noindent Insofar as we already have established identification, we next make some further assumptions that enable statistical inference, i.e., forming confidence intervals about functionals of $\Fold{G}$. Our first such assumption pertains to the independence of the studies. 

\begin{assum}[Independence]
We assume that the triplets $(\mu_i, \abs{Z_i}, D_i)$ for $i=1,\ldots, n_{\text{all}}$ are jointly independent.
\end{assum}
\noindent We consider this assumption to be reasonable to first order. In support of this, from each paper (encoded by a unique PubMed ID), we keep only a single z-score at random (see Supplement~\ref{sec:medline}). Nevertheless, the assumption does not hold exactly, since, for instance, similar patient cohorts or datasets can be used across different papers.

Next we make a structural assumption on the SNR distribution.

\begin{assum}[Class of symmetrized SNR distributions]
    We assume that
    $\Symm{G} \in \mathcal{G}$, a known convex class of distributions. Specifically, we consider the following three choices of $\mathcal{G}$ that impose increasingly weaker assumptions on $G$.
\begin{itemize}[noitemsep]
    \item $\mathcal{G}^{\mathrm{sN}}$: the class of normal scale mixtures centered at $0$.
    \item $\mathcal{G}^{\mathrm{unm}}$: the class of all distributions with a density that is unimodal about $0$.
    \item $\mathcal{G}^{\mathrm{all}}$: the class of all distributions on $\mathbb R$ with a density.
\end{itemize}
\label{assum:class_priors}
\end{assum}
\noindent Notice $\mathcal{G}^{\mathrm{sN}} \subset \mathcal{G}^{\mathrm{unm}} \subset \mathcal{G}^{\mathrm{all}}$. In Section~\ref{sec:estimands_results}, we
report our inference results for all three classes; our confidence intervals become wider as we enlarge $\mathcal{G}$. The last class, \smash{$\mathcal{G}^{\mathrm{all}}$}, imposes effectively no assumption on $\Symm{G}$.\footnote{
\label{footnote:point_null}
One might ask whether $\Symm{G}$ should be allowed to place a point
mass at $0$. Allowing an atom at $0$ would not change most results
in Section~\ref{sec:estimands_results}, since the relevant estimands are
insensitive to replacing a point mass at $0$ with a tightly concentrated
continuous component. The quantity whose interpretation genuinely depends on
the presence of an atom at $0$ is the sign-agreement probability (Section~\ref{sec:prob_same_sign}); see \citet{xie2022discussion}.}
The unimodality assumption \smash{$\mathcal{G}^{\mathrm{unm}}$} has been suggested, for example, by~\citet{cordy1997deconvolution} and \citet{stephens2017false}. Finally, \smash{$\mathcal{G}^{\mathrm{sN}}$} restricts the SNR distribution to the normal scale-mixture family \citep{efron1978how}, a well-studied class that has been used in several EB applications, including \citet{stephens2017false, zwet2021statistical, yang2024largescale}.

\section{Core ideas from prior work}
\label{sec:related_work}

\subsection{Selective inference for understanding collections of studies}
\label{subsec:selective_inf_studies}

Our methodology draws its conceptual foundations from~\citet{jagerleek2014falsepositive}. The authors collect p-values from abstracts and, under Assumption~\ref{assum:publication_bias} stated in terms of p-values (with selection region $P_i \leq 0.05$), estimate the science-wise false discovery rate, defined (in our notation) as $\int \ind\{\mu = 0\}G(\dd\mu)$. Their paper is accompanied by discussion articles~\citep{Benjamini2014discussion, Cox2014discussion, Gelman2014discussion, Goodman2014discussion, Ioannidis2014discussion, schuemie2014discussion} and a rejoinder~\citep{jager2014rejoinder}. In Supplement~\ref{sec:jager_leek} we examine the extent to which our methodology addresses these concerns.

Beyond~\citet{jagerleek2014falsepositive}, other authors have also suggested using p-values smaller than $0.05$ or absolute z-scores above $1.96$ to evaluate results from either a meta-analysis or a larger body of work. One such popular method is the p-curve~\citep{Simonsohn2013PCurve, Simonshon2015pcurve.025} that can be used to assess the evidential value of a set of studies by examining the shape of the histogram of significant p-values.
Meanwhile, z-curve~\citep{brunner2020estimating, bartos2022zcurve} works with significant z-scores and is methodologically closely related to our proposal. Although there is only a sparse description of the formal assumptions underlying z-curve, the method appears to rely on the same assumptions we make in Section~\ref{sec:assumptions}, albeit with Assumption~\ref{assum:class_priors} replaced by a different convex class of SNR distributions for $\Fold{G}$, defined as,
\begin{equation}
\mathcal{G}^{\text{z-curve}} := \cb{ \text{all distributions supported on }\, \{0,1,2,3,4,5,6\}}.
    \label{eq:z_curve_prior}
\end{equation}
Z-curve assesses uncertainty using the bootstrap; we show in Section~\ref{subsec:bootstrap} that the bootstrap can sometimes fail to provide the desired frequentist coverage in our setting. One notable difference between our work and that of~\citet{jagerleek2014falsepositive}, as well as p-curve and z-curve, is that the former focus on global properties of the entire collection of studies. While our framework also enables confidence intervals for such global estimands, we focus on posterior estimands conditional on the observed values of the absolute z-scores.

Finally, we remark that some authors, e.g.,~\citet{andrews2019identification, HungFithian2020}, use both z-scores~$Z_i$ from initial studies and z-scores~$Z_i'$ from replication studies---for instance from the Reproducibility Project: Psychology (RP:P)~\citep{opensciencecollaboration2015estimating}---to learn properties of the publication record. Since we do not have access to replication studies, we instead work under a notion of idealized replications (Section~\ref{sec:repl_posterior_estimands}).

\subsection{Empirical Bayes (EB) and confidence intervals for EB}
\label{subsec:eb_review}

A common starting point for an empirical Bayes (EB) analysis~\citep{robbins1956empirical, efron2019bayes} is to posit that we observe independent observations $X_i$, each with its own unknown parameter $\nu_i$, generated via,
\begin{equation}
    \label{eq:eb_model}
    \nu_i \sim H,\;\; X_i \sim p(\cdot \mid \nu_i),\;\;i=1,\dotsc,n,
\end{equation}
where $H$ is the unknown prior and $p(\cdot \mid \nu_i)$ is a known likelihood. Our earlier normal-SNR model in~\eqref{eq:snr_prior}--\eqref{eq:z_normal}, $\mu_i \sim G$,  $Z_i \sim \mathrm{N}(\mu_i,1)$, arises as the special case of~\eqref{eq:eb_model} obtained by taking $\nu_i = \mu_i$, $X_i = Z_i$, $H = G$, and $p(\cdot \mid \nu_i) = \mathrm{N}(\cdot \mid \mu_i,1)$.\footnote{We use different letters to highlight the generality of the framework and to avoid notational clashes later on. In Section~\ref{sec:methodology}, we will apply this framework with the truncated normal likelihood.
} 

The idea of an EB analysis is that an oracle Bayesian that knows the prior $H$, can automatically take optimal decisions. By contrast, an empirical Bayesian does not have knowledge of $H$, but can use the parallel observations $X_1,\ldots,X_n$ to learn about properties of $H$ and to then mimic decisions of the oracle Bayesian. This imitation is often accomplished by first estimating $H$ as \smash{$\widehat{H}$} and then pretending \smash{$\widehat{H}$} is the true prior.

A recent line of work, relevant to our analysis, proposes using EB methods to complement standard analyses of randomized controlled trials (RCTs)~\citep{zwet2021statistical, zwet2022proposal, vanzwet2024new, Sherry2025oncology}. The key idea is to first learn the distribution of SNRs across RCTs from a context-relevant corpus (e.g., the Cochrane Database of Systematic Reviews), and then use this estimated distribution to contextualize downstream inferences, such as the analysis of a future RCT.

Returning to EB methods more broadly, the predominant practice is to ignore uncertainty in EB estimates (and as mentioned above, to treat \smash{$\widehat{H}$} as the ``true prior''). This is often unwarranted; any uncertainty in estimating $H$ propagates into uncertainty for downstream EB estimands. A common rationale for ignoring uncertainty is that EB analyses often have a large sample size, with $n$ in~\eqref{eq:eb_model} being in the thousands or tens of thousands. Even so, substantial uncertainty may remain since estimating $H$ is a difficult deconvolution problem. In recent work,~\citet{ignatiadis2022confidence} address this shortcoming of common EB analyses by developing a  general framework for constructing confidence intervals for EB estimands.  

The framework of~\citet{ignatiadis2022confidence}, henceforth referred to as IW, requires three ingredients:
\begin{enumerate}[label=(\roman*),noitemsep,leftmargin=*]
\item The known likelihood $p(\cdot \mid \nu)$ from the EB model in~\eqref{eq:eb_model};
\item A known, convex class of priors $\mathcal{H}$ such that $H \in \mathcal{H}$;\footnote{
\label{footnote:convexhull}
In principle, any convex class $\mathcal{H}$ works. In practice, we must be able to efficiently discretize it, typically in the form $\mathcal{H} \approx \mathrm{ConvexHull}(H_1,\ldots,H_K)$ where $H_1,\ldots,H_K$ is a finite dictionary of distributions. See Supplement~\ref{sec:discr_G} for how we discretize the three SNR distributions $\mathcal{G}$ in Assumption~\ref{assum:class_priors}.}
\item A pre-specified estimand $T(H)$ that is a function of the unknown prior $H$ and is either a linear functional of $H$, i.e., $T(H) = \int \psi(\nu)\, H(\dd\nu) $ for known $\psi$, or a ratio functional of $H$, i.e., $T(H)=N(H)/D(H)$, where both $N$ and $D$ are linear functionals.
\end{enumerate}
The class of estimands allowed in (iii) accommodates all estimands that we consider below in Section~\ref{sec:estimands_results}. The class of ratio functionals includes posterior functionals of the form $T(H)=\EE[H]{t(\nu) \mid X=x} = \int t(\nu) p (x \mid \nu) H(\dd\nu) / \int p (x \mid \nu) H(\dd\nu)$, where $t$ is a known function and $x$ is fixed. Such posterior functionals are of particular interest in EB analyses because they are the device through which an empirical Bayesian mimics the oracle Bayesian.

Given the three ingredients above, IW develop 
confidence intervals called $F$-Localization intervals with finite-sample simultaneous coverage and shorter intervals with asymptotic pointwise coverage called AMARI (Affine Minimax Anderson–Rubin
Intervals). 
Earlier work has addressed inference in EB for specific combinations of likelihood, prior class, and estimand (see \citet{ignatiadis2022confidence} for an overview), but IW provide a unified approach that applies broadly. The statistical difficulty of the underlying problems---reflected in their minimax rates---depends on all three ingredients and can range from nearly parametric to ill-posed deconvolution, or even partial identification. This heterogeneity makes it challenging for generic approaches such as the bootstrap to perform reliably across settings. As we explain further below, this generality makes the methods of IW an attractive starting point for our purposes. However, the methods of IW do not account for publication bias; we show how to extend IW's framework to the selective setting.

\subsection{Truncation models and (empirical) Bayesian inference}
\label{subsec:two_trunc_models}

In our paper, we take an EB approach to the truncation problem. As emphasized by~\citet{yekutieli2012adjusted}, the role of Bayesian inference for selection problems depends on the truncation mechanism. The most commonly considered truncation mechanism, which we call ``End truncation,'' proceeds as follows.\footnote{The idea applies to general likelihoods, but we discuss it in the setting with selection on absolute z-scores.}
\begin{equation}
\mbox{\textbf{End truncation:~~~~~}}
 \mu_i \sim \gprior, \ \ \ |Z_i| \sim |\mathrm{N}(\mu_i,1)|,  \ \  \ \text{observe } |Z_i| \text{ only if } |Z_i| \in \selection.\mbox{~~~~~~~~}
\label{eq:end_truncation} \tag{A}
\end{equation}
Under the above mechanism, if we know the data-generating $G$, then Bayesian inference does not need to adjust for selection~\citep{dawid1994selection, senn2008note}, since conditioning on the observed data already renders the selection event redundant. 

There is another common truncation mechanisms that we call ``per-unit truncation.''
Let $|\TruncNormal(\mu, 1; \selection)|$ be the $|\mathrm{N}(\mu, 1)|$ distribution truncated to $\selection$ with density function:
\begin{equation}
\label{eq:selective_likelihood_normal}
p_{\selection}(z \mid \mu) = \frac{\varphi^{\text{fold}}(z;\mu)}{\Phi(\selection; \mu)} \text{ for } z \in \selection,\;\;\; p_{\selection}(z \mid \mu)=0 \text{ for } z \notin \selection.
\end{equation}
Above, $\varphi^{\text{fold}}(z;\mu) := \varphi(z-\mu)+\varphi(z+\mu)$ is the density of the folded normal distribution, where $\varphi(\cdot)$ is the standard normal density
and $\Phi(\selection; \mu) := \int_{\selection} \varphi^{\text{fold}}(z;\mu) \dd z$. Per-unit truncation proceeds as follows.
\begin{equation}
\mbox{\textbf{Per-unit truncation:~~~~~}}\mu_i \sim \gprior, \ \ \ \abs{Z_i} \sim |\TruncNormal(\mu_i,1 ; \selection)|.\mbox{~~~~~~~~~~~~~~~~~~~~~~~~~~~~~~~~~~~~~}    
\label{eq:per_unit_truncation} \tag{B}
\end{equation}
An important insight of~\citet{yekutieli2012adjusted} is that under the truncation model in~\eqref{eq:per_unit_truncation}, Bayesian inference must also adjust for selection, as the selection cannot be expressed as an event on the sample space $(\mu, |Z|)$. While our Assumption~\ref{assum:publication_bias} implies model~\eqref{eq:end_truncation}, in Section~\ref{sec:methodology} we show that model~\eqref{eq:per_unit_truncation} plays a key role in our methodology by enabling the use of IW's framework. The two truncation mechanisms in~\eqref{eq:end_truncation} and~\eqref{eq:per_unit_truncation} appear under various names in the literature; we summarize some of these in Table~\ref{tab:terminology_truncation_model}.

The two models~\eqref{eq:end_truncation} and~\eqref{eq:per_unit_truncation} may be contrasted through the form of their marginal density:
\begin{equation}
  \label{eq:two_marginal_densities}
f^{A}_{G}(z) = \frac{\int  \varphi^{\text{fold}}(z;\mu) G(\dd\mu)}{\int_{\selection} \int \varphi^{\text{fold}}(z;\mu) G(\dd\mu) \dd z} \ind(z\in \selection),\,\;\; f^B_G(z) = \int  \frac{\varphi^{\text{fold}}(z;\mu) \ind(z\in \selection)}{\Phi(\selection; \mu)}\, G(\dd \mu).
\end{equation}
We emphasize that these models are different statistically and conceptually. Model~\eqref{eq:per_unit_truncation} would be justified if each scientific team, upon deciding on their hypothesis and experiment of interest, kept repeating the exact same experiment until they obtained a statistically significant result which is subsequently published. In contrast, Model~\eqref{eq:end_truncation} is justified under a model of scientific discovery wherein each scientific team performs a single experiment and publishes the result if it is statistically significant, 
otherwise the whole hypothesis and experiment are discarded, and a new hypothesis is pursued.

How does EB interact with these truncation models? On one hand, one strand of the literature including, e.g.,~\citet{efron2011tweedie, hwang2013empirical}, considers the case wherein we observe all samples without any truncation and can use these to estimate the prior. However, afterwards we are only interested in inference or estimation of parameters selected as in~\eqref{eq:end_truncation}. In this case, we can proceed using the usual EB rule, without further adjustment for selection. Papers that consider estimation of the prior from only truncated samples include~\citet{park2010estimation, greenshtein2022generalized, greenshtein2024consistent} as well as works that focus on zero-truncation for the Poisson likelihood~\citep{bohning2006equivalence,efron2019bayes}. See~\citet{rasines2022empirical} for a review of EB and selective inference.

\section{Methodology: EB inference with selective tilting}
\label{sec:methodology}

Our high-level strategy to inference proceeds in two main steps.\\

\newlength{\panelht}
\setlength{\panelht}{4.2cm}
\begin{figure}
  \centering
 \subcaptionbox{\strut\label{fig:selectionProcess}}[0.49\textwidth]{
   \begin{minipage}[t][\panelht][t]{\linewidth}
      \begin{adjustbox}{max width=\linewidth, valign=t}
        \begin{tikzpicture}[node distance=1.5cm, auto, baseline=(current bounding box.north)]
        \node (start) {$(\mu_1, \abs{Z_1}, D_1), \dotsc, (\mu_{n_{\text{all}}}, \abs{Z_{n_{\text{all}}}}, D_{n_{\text{all}}}) \sim \mathbb{P}$};
        \node (published) [below of=start] {$\{\abs{Z_{i_1}}, \dotsc, |Z_{i_{n_{\text{published}}}}|\} \subset \{\abs{Z_1}, \dotsc, \abs{Z_{n_{\text{all}}}}\}$};
        \node (final) [below of=published] {$\{\abs{Z_{j_1}}, \dotsc, |Z_{j_{n_{\text{trun}}}}|\} \subset \{\abs{Z_{i_1}}, \dotsc, |Z_{i_{n_{\text{published}}}}|\}$};

        \draw[->] (start) -- (published) coordinate[midway] (mid1);
        \draw[->] (published) -- (final) coordinate[midway] (mid2);

        \node (condition1) [right=0.5cm of mid1] {Selection given $D=1$};
        \node (condition2) [right=0.5cm of mid2] {Selection where $\lvert Z\rvert \in \selection$};
      \end{tikzpicture}
    \end{adjustbox}
    \end{minipage}
  }\hfill
 \subcaptionbox{\strut\label{fig:selection_hist}}[0.49\textwidth]{
    \begin{minipage}[t][\panelht][t]{\linewidth}
      \includegraphics[width=\linewidth]{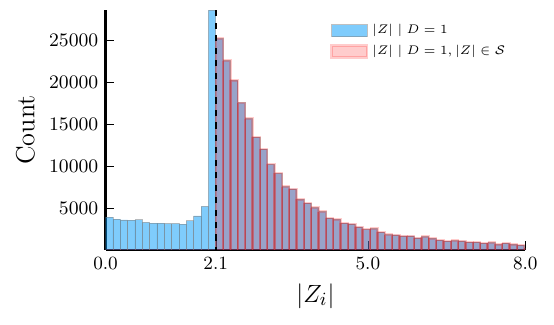}
 \end{minipage}
  }
  \caption{Selection process and resulting distributions: (a) publication selection ($D=1$) then analyst truncation ($\lvert Z\rvert\in\selection$); (b) empirical absolute $z$-score distribution under the analyst’s truncation.}
\end{figure}

\noindent \textbf{Step 1: Further filtering of z-scores.} Recall from model~\eqref{eq:publication_bias_model} that we only get to observe absolute z-scores with $D_i=1$ (first arrow in Fig.~\ref{fig:selectionProcess} and blue histogram in Fig.~\ref{fig:selection_hist}). We further discard all absolute z-scores with $\abs{Z_i} \notin \mathcal{S}$ (second arrow in Fig.~\ref{fig:selectionProcess}, pink histogram in Fig.~\ref{fig:selection_hist}). 
This leaves us
with absolute z-scores $\abs{Z_1},\ldots,\abs{Z_{n_\mathrm{trun}}}$ that satisfy both $D_i=1$ and $\abs{Z_i} \in \selection$,
labeled without loss of generality consecutively as $1,\ldots,  n_{\mathrm{trun}}$. For the MEDLINE dataset, we have $n_{\mathrm{trun}}=247,447$.\\

\noindent \textbf{Step 2: Empirical Bayes (EB) confidence intervals.} Using Assumption~\ref{assum:publication_bias}, we may treat $\abs{Z_1},\ldots,\abs{Z_{n_\mathrm{trun}}}$ above as iid samples from the distribution of $|Z| \mid (\abs{Z} \in \mathcal{S})$, that is, as samples generated via the end-truncation mechanism~\eqref{eq:end_truncation}.  
Hence our next goal is to apply the methods of~\citet{ignatiadis2022confidence} that we reviewed in Section~\ref{subsec:eb_review} to form confidence intervals for EB estimands of interest.\\

\noindent A key technical challenge in implementing Step 2 is that the techniques of IW do not directly apply to model~\eqref{eq:end_truncation} as it is not of the general form specified in~\eqref{eq:eb_model}. Briefly, IW crucially rely on the linearity of the map from the prior distribution to the marginal distribution,
$
H \mapsto f_H(\cdot) := \int p(\cdot \mid \nu) H(\dd\nu)
$. By contrast, the mapping of the prior to the marginal density under model~\eqref{eq:end_truncation}, $G \mapsto f_G^A$, shown in~\eqref{eq:two_marginal_densities}, is not linear. Below we introduce a selective tilting procedure that enables us to extend the methods of IW to model~\eqref{eq:end_truncation}. We describe an observational equivalence result underlying our construction in Section~\ref{subsec:simple_tilting} and then we describe our proposed inference procedures in Section~\ref{subsec:description_inference}.

\subsection{Selective tilting}
\label{subsec:simple_tilting}
Recall the three ingredients for EB confidence intervals from Section~\ref{subsec:eb_review}: (i) the likelihood $p(\cdot \mid \nu)$, (ii) the convex class of priors $\mathcal{H}$, and (iii) the estimand $T(H)$, where $H$ is the unknown prior. 

Next, consider the following thought experiment. Suppose the truncation mechanism would follow per-unit truncation as in~\eqref{eq:per_unit_truncation} instead of end truncation as in~\eqref{eq:end_truncation}. Then the approach of IW (reviewed in Section~\ref{subsec:eb_review}) would be directly applicable without any modifications by setting \smash{$\nu_i = \mu_i$}, \smash{$X_i = \abs{Z_i}$, \smash{$H = G$}}, and \smash{$p(\cdot \mid \nu_i) = |\mathrm{TruncN}(\cdot \mid \mu_i,1; \selection)|$}, where \smash{$|\mathrm{TruncN}(\cdot \mid \mu_i,1; \selection)|$} represents the likelihood of \smash{$|\mathrm{TruncN}(\mu_i,1; \selection)|$} as defined in~\eqref{eq:selective_likelihood_normal}. 

The key idea is that we can actually apply IW under~\eqref{eq:per_unit_truncation} and recover confidence intervals under model~\eqref{eq:end_truncation}. Suppose the analyst specifies model~\eqref{eq:end_truncation} with $G \in \mathcal{G}$ (as in Assumption~\ref{assum:class_priors}) and the estimand $T(G)$ of scientific interest (see Section~\ref{sec:estimands_results}). Then we show that there exist maps $\Tilt{\cdot}$ that act on priors, classes of priors, and functionals (Fig.~\ref{fig:tilting_schematic}), such that 
if we apply IW with (i) $p(\cdot \mid \nu_i) = |\mathrm{TruncN}(\cdot \mid \mu_i,1; \selection)|$, (ii) $\mathcal{H}=\Tilt{\mathcal{G}}$ and (iii) estimand $\Tilt{T}(\Tilt{G})$ , then we recover valid confidence intervals for $T(G)$.\\

\noindent \textbf{Tilting of priors.} We first define the tilting operation, as defined for priors,
\begin{equation}
  \label{eq:tilt_G}
\Tilt{G}(\dd\mu) := \frac{\Phi(\selection; \mu) G(\dd\mu)}{ \int \Phi(\selection; \mu) G(\dd\mu)}. 
\end{equation}
As an example, Fig.~\ref{fig:tilted} shows  $G=\mathrm{N}(0,2)$ and $\Tilt{G}$. Meanwhile, Fig.~\ref{fig:normalizer} shows how $\Phi(\selection; \mu)$ changes with $\mu$. The tilted $\Tilt{G}$ 
places less mass near $0$ at which point the selection probability $\Phi(\selection; \mu)$ is  small. 

The tilting operation in~\eqref{eq:tilt_G} induces the following observational equivalence.

\begin{theo}[Observational equivalence]
Fix a prior $G$. 
Then, the marginal distribution of $|Z|$ under model~\eqref{eq:end_truncation} with prior $G$ is equal to the marginal distribution of $|Z|$ under model~\eqref{eq:per_unit_truncation} with prior $\Tilt{G}$. Stated in terms of marginal densities~\eqref{eq:two_marginal_densities}, $f_G^A(\cdot)= f_{\Tilt{G}}^B(\cdot)$. 

\label{theo:observational_equivalence}
\end{theo}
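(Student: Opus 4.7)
The plan is to verify the identity $f_G^A = f_{\Tilt{G}}^B$ by direct computation, starting from $f_{\Tilt{G}}^B$ and showing it collapses to $f_G^A$. This is really a one-calculation proof whose whole point is that the weighting factor $\Phi(\selection;\mu)$ in the definition of $\Tilt{G}$ is exactly designed to cancel the corresponding denominator inside the per-unit truncated likelihood $p_{\selection}(z \mid \mu)$.

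Concretely, I would first plug the definition of $\Tilt{G}$ from~\eqref{eq:tilt_G} into the expression for $f_{\Tilt{G}}^B$ in~\eqref{eq:two_marginal_densities}:
\begin{equation*}
f_{\Tilt{G}}^B(z) \;=\; \int \frac{\varphi^{\text{fold}}(z;\mu)\,\ind(z \in \selection)}{\Phi(\selection;\mu)} \cdot \frac{\Phi(\selection;\mu)\,G(d\mu)}{\int \Phi(\selection;\mu')\,G(d\mu')}.
\end{equation*}
The $\Phi(\selection;\mu)$ in the numerator (coming from the tilt) cancels with the $\Phi(\selection;\mu)$ in the denominator (coming from the truncated normal density). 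Pulling the $\mu$-independent normalizing constant out front and noting that $\ind(z \in \selection)$ does not depend on $\mu$ yields
\begin{equation*}
f_{\Tilt{G}}^B(z) \;=\; \frac{\ind(z \in \selection)\,\int \varphi^{\text{fold}}(z;\mu)\,G(d\mu)}{\int \Phi(\selection;\mu')\,G(d\mu')}.
\end{equation*}

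The second step is to rewrite the denominator to match the denominator of $f_G^A$ in~\eqref{eq:two_marginal_densities}. Unfolding $\Phi(\selection;\mu') = \int_{\selection}\varphi^{\text{fold}}(z';\mu')\,dz'$ and swapping the order of integration by Tonelli (legitimate because the integrand is nonnegative), we get
\begin{equation*}
\int \Phi(\selection;\mu')\,G(d\mu') \;=\; \int_{\selection} \int \varphi^{\text{fold}}(z';\mu')\,G(d\mu')\,dz',
\end{equation*}
which is precisely the normalizing denominator in $f_G^A(z)$. Comparing numerators and denominators then gives $f_{\Tilt{G}}^B(z) = f_G^A(z)$ for all $z$, which is the claim.

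There is essentially no hard step here; the only thing to check is that Tonelli applies, which is immediate from nonnegativity of $\varphi^{\text{fold}}$. Conceptually the content is that $\Tilt{G}$ is defined to be the Bayes-rule-style reweighting that an end-truncation prior induces on the $\mu$'s of surviving observations, so that a per-unit truncation draw from $\Tilt{G}$ is indistinguishable from an end-truncation draw from $G$ at the level of the observed marginal of $|Z|$.
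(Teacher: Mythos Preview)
Your proof is correct and follows essentially the same route as the paper's own proof: plug the definition of $\Tilt{G}$ into $f_{\Tilt{G}}^B$, cancel the $\Phi(\selection;\mu)$ factors, and identify the result with $f_G^A$. Your version is slightly more detailed in that you explicitly justify the equality of the denominators via Tonelli, whereas the paper treats that step as immediate from the definition of $\Phi(\selection;\mu)$.
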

\noindent A related equivalence has been derived in~\citet{bohning2006equivalence} and~\citet[Remarks D and G]{efron2019bayes} for zero-truncated Poisson samples (see Supplement~\ref{sec:butterflies}). 

\begin{figure}
    \centering
     \begin{subfigure}[t]{0.32\textwidth}
        \centering
        \caption{}
        \includegraphics[height=\panelht, width=\linewidth, keepaspectratio]{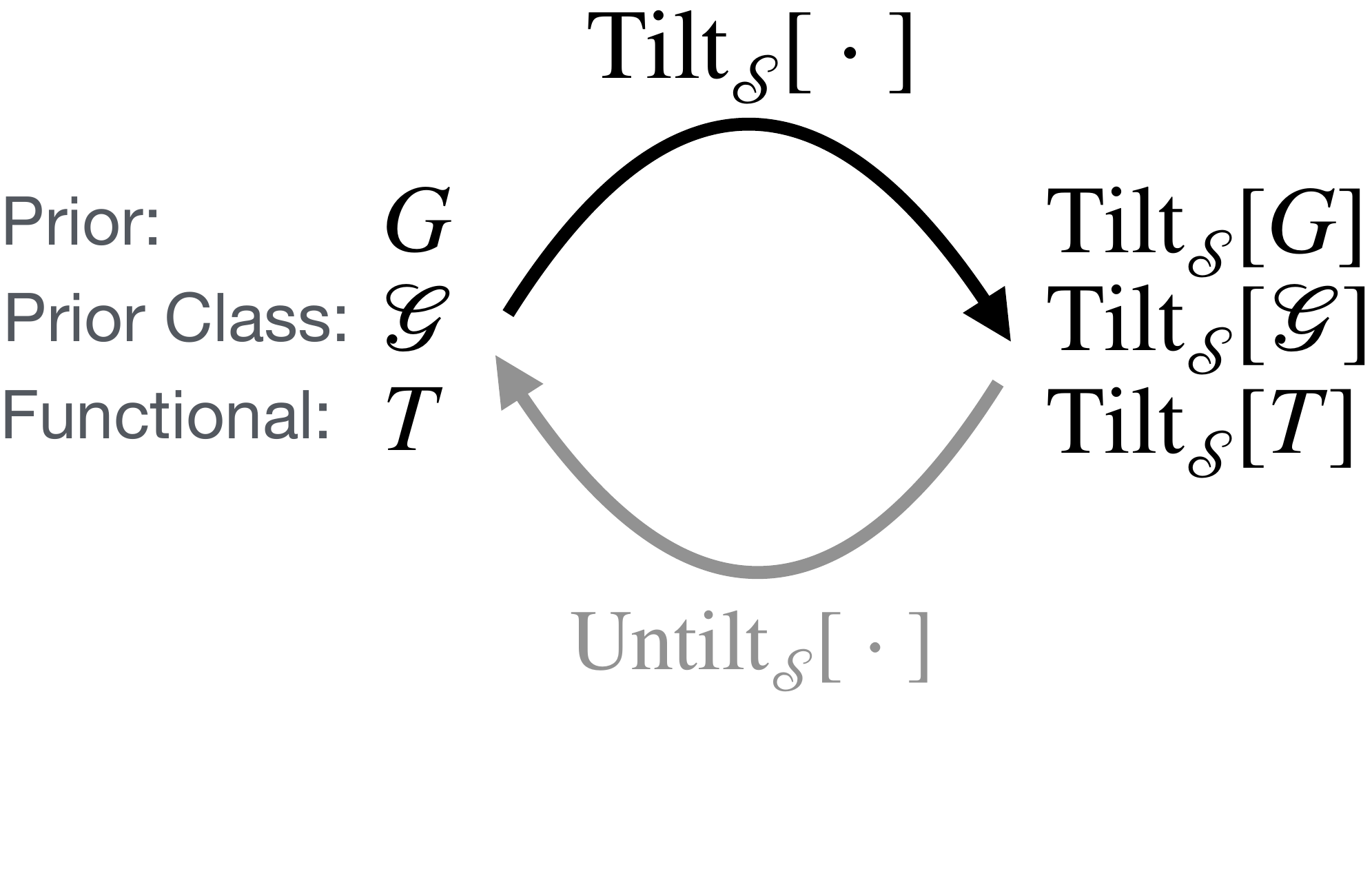}
        \label{fig:tilting_schematic}
    \end{subfigure}
    \begin{subfigure}[t]{0.32\textwidth}
        \centering
        \caption{}
        \includegraphics[height=\panelht, width=\linewidth, keepaspectratio]{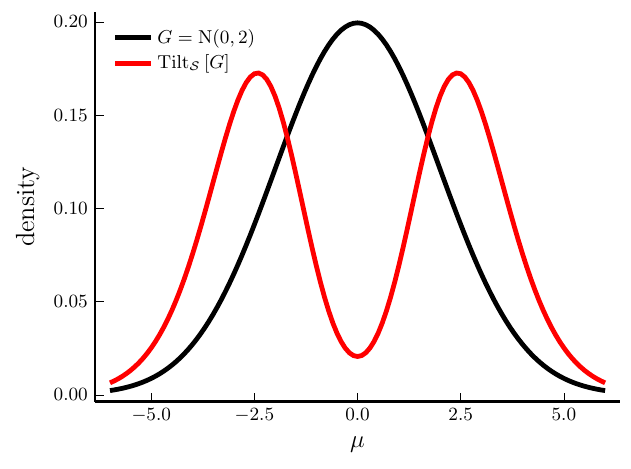}
        \label{fig:tilted}
    \end{subfigure}
    \hfill
    \begin{subfigure}[t]{0.32\textwidth}
        \centering
        \caption{}
        \includegraphics[height=\panelht, width=\linewidth, keepaspectratio]{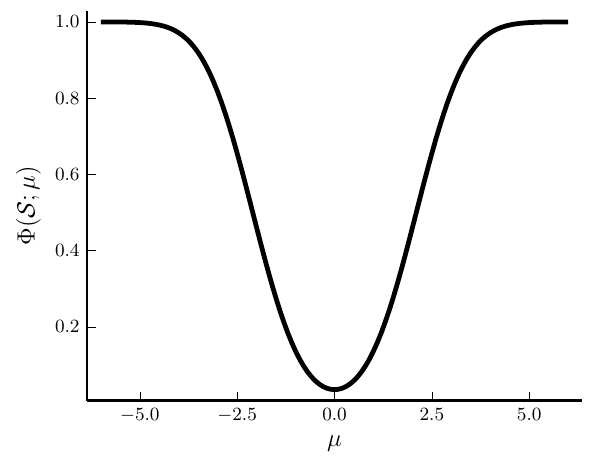}
        \label{fig:normalizer}
    \end{subfigure}
    \caption{Schematic demonstration of selective tilting and example of a $\Tilt{G}$ where $G = \mathrm{N}(0,2)$: (a) illustration of the mapping $\Tilt{\cdot}$ and $\Untilt{\cdot}$; (b) The density of $G = \mathrm{N}(0,2)$ and $\Tilt{G}$; (c) the corresponding $\Phi(\selection; \mu)$, note $\Phi(\selection; 0) \neq 0$}.
    \label{fig:tilting_illustrations}
\end{figure}

\begin{rema}[Untilting]
\label{rema:untilt}
Tilting can be reversed using an untilt operation, defined as \\
$\Untilt{\tG}(\dd\mu) := \Phi(\selection; \mu)^{-1} \tG(\dd\mu)/\int \Phi(\selection; \mu)^{-1} \tG(\dd\mu)$ for any $\tG \in \Tilt{\mathcal{G}}$.
Analogously to Theorem~\ref{theo:observational_equivalence}, we have that the marginal distribution of $|Z|$ under model \eqref{eq:per_unit_truncation} with prior $\tG$ is equal to the marginal distribution of $|Z|$ under model~\eqref{eq:end_truncation} with prior $\Untilt{\tG}$.
\end{rema}

\noindent \textbf{Tilting of convex classes of priors.} Let $\mathcal{G}$ be a class of priors. We define the tilted class of priors as,
\begin{equation}
\Tilt{\mathcal{G}} := \cb{ \Tilt{G}\,:\; G \in \mathcal{G}}.
\end{equation}
Recall that the methods of IW require a convex class of priors. The following proposition establishes that the $\Tilt{\cdot}$ operation maintains convexity of the input class of priors.

\begin{prop}
\label{prop:convex_equivalence}
Suppose $\mathcal{G}$ is a convex class of priors, that is $\lambda G_1 + (1-\lambda)G_2 \in \mathcal{G}$ for any $\lambda \in [0,1]$ and $G_1,G_2 \in \mathcal{G}$. Then $\Tilt{\mathcal{G}}$ is also a convex class of priors.
\end{prop}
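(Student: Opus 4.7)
The plan is to show directly that an arbitrary convex combination of tilted priors lies in $\Tilt{\mathcal{G}}$. Concretely, I would fix $\lambda \in [0,1]$ and $\Tilt{G}_1, \Tilt{G}_2 \in \Tilt{\mathcal{G}}$, with $G_1, G_2 \in \mathcal{G}$, and try to exhibit some $G \in \mathcal{G}$ with $\Tilt{G} = \lambda \Tilt{G}_1 + (1-\lambda)\Tilt{G}_2$. The natural candidate is $G = \alpha G_1 + (1-\alpha) G_2$ for a coefficient $\alpha$ yet to be determined; by the assumed convexity of $\mathcal{G}$, such a $G$ belongs to $\mathcal{G}$ for every $\alpha \in [0,1]$.

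The main (and really the only) subtlety is that $G \mapsto \Tilt{G}$ is not a linear map, because the normalizing denominator $c(G) := \int \Phi(\selection; \mu)\, G(d\mu)$ depends on $G$. So one cannot simply take $\alpha = \lambda$. Writing $c_i := c(G_i)$, the identity I want to force is
\begin{equation*}
\Phi(\selection;\mu)\!\left[\frac{\lambda}{c_1} G_1(d\mu) + \frac{1-\lambda}{c_2} G_2(d\mu)\right]
= \frac{\Phi(\selection;\mu)\,[\alpha G_1(d\mu) + (1-\alpha) G_2(d\mu)]}{\alpha c_1 + (1-\alpha) c_2}.
\end{equation*}
Matching the two weighting ratios leads to the single condition $\alpha/[\alpha c_1 + (1-\alpha)c_2] = \lambda/c_1$, which solves uniquely to $\alpha = \lambda c_2 / [(1-\lambda) c_1 + \lambda c_2]$. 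Since $c_1, c_2 > 0$ (the tilting in~\eqref{eq:tilt_G} is only well-defined on priors with $\int \Phi(\selection;\mu)\,G(d\mu) > 0$, which I would state as a standing regularity), one checks immediately that $\alpha \in [0,1]$, with $1-\alpha = (1-\lambda)c_1 / [(1-\lambda)c_1 + \lambda c_2]$.

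With this choice of $\alpha$, $G := \alpha G_1 + (1-\alpha) G_2 \in \mathcal{G}$ by convexity of $\mathcal{G}$, and substituting back confirms the desired equality $\Tilt{G} = \lambda \Tilt{G}_1 + (1-\lambda)\Tilt{G}_2$, so the latter lies in $\Tilt{\mathcal{G}}$. The main obstacle is purely the reparametrization step: recognizing that the correct tilted-convex coefficient is $\alpha$, not $\lambda$, and that the nonlinearity of $\Tilt{\cdot}$ is absorbed exactly by the difference between the two normalizers $c_1$ and $c_2$. Everything else is a one-line verification.
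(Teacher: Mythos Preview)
Your proposal is correct and follows essentially the same approach as the paper: both arguments exhibit the convex combination $\lambda\,\Tilt{G_1}+(1-\lambda)\,\Tilt{G_2}$ as $\Tilt{G}$ for $G=\alpha G_1+(1-\alpha)G_2$ with the same reparametrized coefficient $\alpha=\lambda c_2/[(1-\lambda)c_1+\lambda c_2]$ (the paper writes this as $\theta=\beta^{-1}\lambda/c_1$ with $\beta=\lambda/c_1+(1-\lambda)/c_2$, which simplifies to your $\alpha$). The only cosmetic difference is that the paper first writes down the tilted mixture and normalizes it to read off $G$, whereas you solve for $\alpha$ directly from the matching condition.
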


\noindent \textbf{Tilting of functionals.} Finally, we explain how to tilt functionals. Let $T(\cdot)$ be a ratio functional, that is, $T(\cdot) = N(\cdot)/D(\cdot): \mathcal{G} \to \mathbb R$, where $N(\cdot)$ and $D(\cdot)$ are linear functionals with $N(G)= \int \nu(\mu)G(\dd\mu)$ and $D(G) = \int \delta(\mu)G(\dd\mu)$ for some known $\nu(\cdot)$, $\delta(\cdot)$. Then $\Tilt{T}: \Tilt{\mathcal{G}} \to \RR$ is the ratio functional defined via
\begin{equation}
    \Tilt{T}(\tG) :=\frac{ \int \nu(\mu)\Phi(\selection; \mu)^{-1}\tG(\dd\mu)}{\int \delta(\mu) \Phi(\selection; \mu)^{-1}\tG(\dd\mu)},\;\; \tG \in \Tilt{\mathcal{G}}.
\label{eq:functional_tilt}
\end{equation}
In case $T(\cdot)$ is a linear functional, i.e., $\delta(\cdot)\equiv 1$, we can also apply the above transformation. In this case, tilting turns the linear functional into a ratio functional.  We have the following identity.

\begin{prop}[Functional equivalence]
Let $T(\cdot)$ be a ratio functional. Then, 
\begin{equation*}
    \Tilt{T}(\Tilt{G}) = T(G) \;\text{ for all }\; G \in \mathcal{G}. \label{eq:linear_ratio_representation}
\end{equation*}
\label{prop:functional_equivalence}
\end{prop}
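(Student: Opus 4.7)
The plan is to unfold the two definitions and observe that the construction of $\Tilt{T}$ in~\eqref{eq:functional_tilt} is engineered precisely so that the $\Phi(\selection;\mu)$ factor introduced by $\Tilt{G}$ in~\eqref{eq:tilt_G} cancels pointwise inside both the numerator and denominator integrals, while the scalar normalizing constant $c := \int \Phi(\selection;\mu)\,G(d\mu)$ appears in both and therefore cancels in the ratio.

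Concretely, first I would substitute $\Tilt{G}(d\mu) = \Phi(\selection;\mu)\,G(d\mu)/c$ into the numerator of $\Tilt{T}(\Tilt{G})$; the $\Phi(\selection;\mu)^{-1}$ in~\eqref{eq:functional_tilt} cancels the $\Phi(\selection;\mu)$ inside $\Tilt{G}(d\mu)$, leaving $\int \nu(\mu)\,G(d\mu)/c = N(G)/c$. The analogous manipulation with $\delta$ in place of $\nu$ produces $D(G)/c$ in the denominator, and dividing yields $N(G)/D(G) = T(G)$. The linear-functional case is the special case $\delta \equiv 1$ and goes through in exactly the same way — tilting a linear functional turns it into a genuine ratio functional whose denominator collapses to $c$, as foreshadowed in the paragraph preceding the proposition.

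The only point requiring a line of justification is that $\Phi(\selection;\mu)^{-1}$ is well-defined and finite for every $\mu$ appearing in the support of $G$, so that the cancellation above is legitimate rather than a $0/0$ manipulation. This follows because $\selection$ has positive Lebesgue measure by assumption and $\varphi^{\text{fold}}(\cdot;\mu)$ is strictly positive on $\RR_{\geq 0}$ for each $\mu$, so $\Phi(\selection;\mu)=\int_{\selection}\varphi^{\text{fold}}(z;\mu)\,dz > 0$ uniformly in $\mu \in \RR$; the normalizer $c$ is then strictly positive as well, and $D(G)\neq 0$ is implicit in calling $T$ a ratio functional.

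Since the proof reduces to a one-line substitution, I do not anticipate any genuine obstacle. The substantive content of the proposition is not the calculation itself but the observation that~\eqref{eq:functional_tilt} is the correct definition of $\Tilt{T}$ to make Proposition~\ref{prop:functional_equivalence} hold; combined with Theorem~\ref{theo:observational_equivalence} and Proposition~\ref{prop:convex_equivalence}, this is exactly what is needed so that running the IW machinery on the per-unit-truncation model with inputs $(|\mathrm{TruncN}(\cdot\mid\mu,1;\selection)|,\,\Tilt{\mathcal{G}},\,\Tilt{T})$ produces valid confidence intervals for the original target $T(G)$ under the end-truncation model~\eqref{eq:end_truncation}.
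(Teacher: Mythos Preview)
Your proposal is correct and matches the paper's own proof essentially line for line: the paper simply substitutes $\Tilt{G}(d\mu)=\Phi(\selection;\mu)G(d\mu)/\int\Phi(\selection;\mu)G(d\mu)$ into the definition~\eqref{eq:functional_tilt}, cancels $\Phi(\selection;\mu)^{-1}\Phi(\selection;\mu)$ in numerator and denominator, and then cancels the common normalizer to recover $N(G)/D(G)$. Your additional remark that $\Phi(\selection;\mu)>0$ for all $\mu$ (so the cancellation is legitimate) is a welcome clarification that the paper leaves implicit.
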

\noindent \textbf{Inference after tilting.} To recap, here's what we have achieved. By Proposition~\ref{prop:functional_equivalence}, we can write our estimand $T(G)$ as $\Tilt{T}(\Tilt{G})$. Next, Theorem~\ref{theo:observational_equivalence} implies that we can pretend our truncated $|Z_i|$ came from model~\eqref{eq:per_unit_truncation} with the tilted prior $\Tilt{G}$.  Therefore, to get a confidence interval for $T(G)$ under model~\eqref{eq:end_truncation} with prior $G$, it suffices to develop a confidence interval for $\Tilt{T}(\Tilt{G})$ under model~\eqref{eq:per_unit_truncation} with prior $\Tilt{G}$. The following theorem formalizes this result.
\begin{theo}
\label{theo:ci_reduction}
Let $|Z_1|,\ldots,|Z_n|$ be nonnegative observations and let $\mathcal{I} \equiv \mathcal{I}(\abs{Z_1},\ldots,\abs{Z_n})$ be a random interval depending on the data. Also let $T(\cdot)$ be a ratio functional. Then,
$$
\mathbb P_G^A[ T(G) \in \mathcal{I}] = \mathbb P_{\Tilt{G}}^B[ \Tilt{T}(\Tilt{G}) \in \mathcal{I}],
$$
where the notation $\mathbb P_G^A$, resp.  $\mathbb P_G^B$ refers to $|Z_1|,\ldots,|Z_n|$ independently generated from~\eqref{eq:end_truncation} with prior $G$, resp. from~\eqref{eq:per_unit_truncation} with prior $\Tilt{G}$.
\end{theo}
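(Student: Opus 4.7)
The plan is to reduce both sides of the claimed equality to a single probability statement by composing the two equivalence results already in hand. First, fix $G$ and set $\tau := T(G)$. By Proposition~\ref{prop:functional_equivalence}, we also have $\Tilt{T}(\Tilt{G}) = T(G) = \tau$, so $\tau$ is a single deterministic scalar that does not depend on the data. Consequently, the events whose probabilities we are comparing, namely $\{T(G) \in \mathcal{I}\}$ and $\{\Tilt{T}(\Tilt{G}) \in \mathcal{I}\}$, are both literally the event $\{\tau \in \mathcal{I}(|Z_1|,\ldots,|Z_n|)\}$.

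Second, since $\mathcal{I}$ is a measurable function of the data alone, the probability of $\{\tau \in \mathcal{I}\}$ under any law is determined by the joint distribution of $(|Z_1|,\ldots,|Z_n|)$ under that law. Under $\mathbb{P}_G^A$, the observations are i.i.d. with marginal density $f_G^A$ from~\eqref{eq:two_marginal_densities}; under $\mathbb{P}_{\Tilt{G}}^B$, they are i.i.d. with marginal density $f_{\Tilt{G}}^B$. Theorem~\ref{theo:observational_equivalence} gives $f_G^A(\cdot) = f_{\Tilt{G}}^B(\cdot)$, so the two one-dimensional marginals coincide, and by independence in both models the full product laws on $\RR_{\geq 0}^n$ coincide. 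Pushing forward through the measurable map $(|z_1|,\ldots,|z_n|) \mapsto \ind\{\tau \in \mathcal{I}(|z_1|,\ldots,|z_n|)\}$ then yields $\mathbb{P}_G^A[\tau \in \mathcal{I}] = \mathbb{P}_{\Tilt{G}}^B[\tau \in \mathcal{I}]$, which is exactly the claim.

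The only step where care is warranted, rather than any genuine obstacle, is verifying that the generative description of model~\eqref{eq:end_truncation} really does produce i.i.d. draws from $f_G^A$. Operationally one first draws the latent triplets and then discards units with $|Z_i| \notin \selection$; one should check that this conditioning operation preserves independence across retained units and yields the marginal density displayed in~\eqref{eq:two_marginal_densities}, so that it is comparable to the direct i.i.d. generative description of model~\eqref{eq:per_unit_truncation}. This is a routine application of the fact that i.i.d. sampling followed by unit-level filtering (or, equivalently, by conditioning on $|Z_i| \in \selection$ without any cross-unit coupling) gives i.i.d. samples from the conditional distribution. With that in place, the theorem follows as a two-line corollary of Theorem~\ref{theo:observational_equivalence} and Proposition~\ref{prop:functional_equivalence}: the functional equivalence makes the target a fixed constant common to both sides, and the marginal equivalence makes the law of the random interval the same.
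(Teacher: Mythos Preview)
Your proof is correct and follows essentially the same approach as the paper: use Proposition~\ref{prop:functional_equivalence} to identify the two target constants, then use Theorem~\ref{theo:observational_equivalence} together with independence to identify the two product laws, and conclude. Your additional remark about verifying that the end-truncation mechanism genuinely yields i.i.d.\ draws from $f_G^A$ is a bit more careful than the paper's version but does not change the argument.
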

The reduction of this theorem applies to any confidence interval procedure $\mathcal{I}$. In our case, our interest is driven by the fact that the general framework of IW (and their coverage theorems) directly apply to any ratio functional in model~\eqref{eq:per_unit_truncation} with a convex class of priors. If we start with $G \in \mathcal{G}$ for convex $\mathcal{G}$, then Proposition~\ref{prop:convex_equivalence} also implies that $\Tilt{G} \in \Tilt{\mathcal{G}}$, another convex class of priors. 

All equivalences above extend to convex hulls of finite dictionaries of priors (as in Footnote~\ref{footnote:convexhull}). See Supplement~\ref{subsec:tilting_computation} on these equivalences and their computational consequences.

\subsection{Description of inferential approaches}
\label{subsec:description_inference}

Given the reduction of Theorem~\ref{theo:ci_reduction}, we now give a brief description of the two methods of IW, $F$-Localization and AMARI, in our setting.\\

\noindent{\textbf{$F$-Localization.}}
Let \smash{$F_{\Tilt{G}}^B(\cdot)$} be the marginal CDF of $\abs{Z}$ under model~\eqref{eq:per_unit_truncation} with prior \smash{$\Tilt{G}$}.
An $F$-Localization is defined as a level $1-\alpha$ confidence set for this marginal distribution (for $\alpha \in (0,1)$). One specific construction of an $F$-Localization uses the Kolmogorov-Smirnov ball around the empirical distribution of truncated samples,
\begin{equation}
\mathcal{F}_{n_\text{trun}}^{\text{DKW}}(\alpha) := \left\{\text{F distribution}\;:\; \sup_{t\in \selection}|F(t) - \hat{F}_{n_\text{trun}}(t)| \leq \sqrt{\frac{\log(2/\alpha)}{2n_\text{trun}}}\right\}
\label{eq:KS-ball},
\end{equation}
where 
$
\hat{F}_{n_\text{trun}}(t) := n_\text{trun}^{-1}\sum_{i=1}^{n_\text{trun}}\ind(\abs{Z_i} \leq t, \abs{Z_i} \in \selection, D_i=1).
$
By the Dvoretzky-Kiefer-Wolfowitz  inequality~\citep{Massart1990DKW}, we have that \smash{$\mathbb P_{\Tilt{G}}^{B}[F_{\Tilt{G}}^B \in \mathcal{F}_{n_\text{trun}}^{\text{DKW}}(\alpha)] \geq 1-\alpha$}, i.e., \smash{$\mathcal{F}_{n_\text{trun}}^{\text{DKW}}(\alpha)$} is an $F$-Localization.
We can then find the smallest possible value of \smash{$\Tilt{T}(\tG)$} among all priors \smash{$\tG \in \Tilt{\mathcal{G}}$} consistent with the $F$-Localization,
\begin{equation}
\hat{T}_\alpha^- := \text{inf}\left \{\Tilt{T}(\tG)\;:\;  \tG \in \Tilt{\mathcal{G}},\, F_{\tG}^B \in \mathcal{F}_{n_\text{trun}}^{\text{DKW}}(\alpha)\right\},
\label{eq:DKW_F_loc_lower}
\end{equation}
and analogously for the upper bound $\hat{T}_\alpha^+$. Using Theorem~\ref{theo:ci_reduction}, it follows that $\mathbb P_G^A[T(G) \in [\hat{T}_\alpha^-,\hat{T}_\alpha^+] \geq 1-\alpha$. Moreover, the probability statement is simultaneous over all possible functionals $T(\cdot)$ we may be interested in.\footnote{This is important for our application, since we report confidence intervals for a lot of different estimands.} We refer to Supplement~\ref{subsec:floc_comp} for computational details.

\noindent{\textbf{AMARI.}}
The second construction from IW is AMARI (Affine Minimax Anderson Rubin Intervals), which aims to construct pointwise confidence intervals for a specific empirical Bayes estimand rather than achieving simultaneous coverage over all estimands, typically resulting in shorter intervals. We refer to~\citet{ignatiadis2022confidence} for details and to Supplement~\ref{subsec:amari_comp} for a brief sketch.

\begin{rema}[Selective tilting for other EB approaches]
\label{rema:tilting_for_other_approaches}
Although we have explained how selective tilting enables us to use the methods of IW, the idea applies more broadly. For example, the nonparametric maximum likelihood estimator (NPMLE) in the EB model~\eqref{eq:eb_model} can be computed via convex optimization~\citep{koenker2014convex}. If instead we seek to compute the NPMLE under model~\eqref{eq:end_truncation}, then selective tilting permits us to compute the NPMLE under model~\eqref{eq:per_unit_truncation}, and to then untilt. For the zero-truncated Poisson problem, this approach has been suggested by~\citet{bohning2006equivalence} and, without details, by~\citet[Section 2]{greenshtein2022generalized} in a model involving post-stratification.
\end{rema}

\section{Inference for estimands of interest in MEDLINE}
\label{sec:estimands_results}

We are ready to turn to our main objective, the analysis of the MEDLINE dataset. We proceed as follows: in each case, we describe the estimand of scientific interest, and then we report confidence intervals for it. By default, these are 95\% $F$-Localization intervals with simultaneous coverage. In our main figure (Fig.~\ref{fig:medline_full_analysis}) for each estimand we thus show three confidence bands, each one corresponding to the three nested assumptions for the SNR distribution (Assumption~\ref{assum:class_priors}). For certain targeted estimands, we  also report AMARI which has asymptotic pointwise coverage. In particular, the confidence intervals from the introduction are constructed using AMARI with $\mathcal{G}^{\mathrm{unm}}$.

In view of our modeling assumptions in Section~\ref{sec:assumptions}, we focus on estimands that are a function of $\Fold{G}$ and are thus identifiable by Theorem~\ref{theo:Identifiability}. Formally:

\begin{prop}[Identifiability of estimands]
    All estimands below are functions of $\Fold{G}$ only,  thus are identifiable.
\label{prop:IdentifiabilityAll}
\end{prop}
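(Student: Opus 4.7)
The guiding observation is that the folded-normal likelihood depends on $\mu$ only through $|\mu|$: since the standard normal density is even, $\varphi^{\text{fold}}(z;\mu) = \varphi(z-\mu)+\varphi(z+\mu)$ is invariant under $\mu \mapsto -\mu$, so $\varphi^{\text{fold}}(z;\mu) = \varphi^{\text{fold}}(z;|\mu|)$ and likewise $\Phi(\selection;\mu) = \Phi(\selection;|\mu|)$. Consequently, for any measurable $h\colon \mathbb{R}_{\geq 0}\to \mathbb{R}$ and any kernel $k(\mu,\cdot)$ that depends on $\mu$ only through $|\mu|$,
\begin{equation*}
\int h(|\mu|)\,k(\mu,\cdot)\,G(d\mu) \;=\; \int_{0}^{\infty} h(m)\,k(m,\cdot)\,\Fold{G}(dm).
\end{equation*}
The plan is to verify, estimand by estimand, that each quantity introduced in Section~\ref{sec:estimands_results} can be cast in this form (or as a ratio of two such integrals), after which identifiability follows immediately from Theorem~\ref{theo:Identifiability}.

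\textbf{Case-by-case reduction.} For the unselected marginal density of $|Z|$ (Section~\ref{subsec:marginal_and_power}), take $h\equiv 1$ and $k(\mu,z) = \varphi^{\text{fold}}(z;\mu)$; the reduction is immediate. The posterior replication estimands (Section~\ref{sec:repl_posterior_estimands}) are Bayes ratios of the form $\int t(|\mu|)\,\varphi^{\text{fold}}(z;\mu)\,G(d\mu) \big/ \int \varphi^{\text{fold}}(z;\mu)\,G(d\mu)$, where the conditional quantity $t(|\mu|)$---for instance $\mathbb{P}[|Z_i'|\geq z \mid \mu_i]$ for an independent replicate---depends on $\mu_i$ only through $|\mu_i|$ by the same folded-normal symmetry; hence both numerator and denominator collapse to integrals against $\Fold{G}$. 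The symmetrized posterior mean (Section~\ref{sec:symm_posterior_mean}) is defined directly through $\Symm{G}$, which is in one-to-one correspondence with $\Fold{G}$. The publication risk ratio (Section~\ref{sec:publication_prob}) is a ratio of the form $\mathbb{P}_G[|Z|\in A_1]/\mathbb{P}_G[|Z|\in A_2]$, each term being the integral of a $\Phi(A_j;\mu)$-type kernel against $G$, and these kernels depend only on $|\mu|$.

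\textbf{The delicate case.} The main obstacle is the sign-agreement probability in Section~\ref{sec:prob_same_sign} (and analogously any replication estimand phrased as ``same sign as $Z_i$''), since on its face these require the sign of $\mu$. Here I would explicitly re-interpret the estimand under the symmetrized prior: because the sign of $Z$ is discarded (Section~\ref{sec:assumptions}) and effectively modeled as an independent uniform flip, the relevant sign-agreement probability is computed under $\Symm{G}$ and hence depends on $G$ only through $\Symm{G}$, equivalently $\Fold{G}$. This symmetrization step---analogous in spirit to the decision-theoretic restriction to odd estimators that motivates the symmetrized posterior mean---is the only substantive work required. Once it is carried out, the remaining argument is mechanical: for each estimand, apply the display above and invoke Theorem~\ref{theo:Identifiability}.
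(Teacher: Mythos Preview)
Your reduction for the marginal density, the power-based estimands, the effect-size replication probability, and the publication risk ratio is essentially the same as the paper's and is fine: those really are integrals of $|\mu|$-dependent kernels against $G$, hence against $\Fold{G}$.

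The gap is in your treatment of the sign-agreement probability, and you have also miscategorized the replication probability $\PP[G]{|Z'|>1.96,\,ZZ'>0 \mid |Z|=z}$ as an ``easy'' $t(|\mu|)$ case when it has the same issue. Your proposed resolution---``reinterpret under $\Symm{G}$ because the sign of $Z$ is modeled as an independent uniform flip''---conflates two different things. The paper discards the sign of the \emph{observed} z-scores when learning about $G$, but the estimand $\PP[G]{\mu Z>0 \mid |Z|=z}$ is a functional of the \emph{true} $G$, computed in the model $Z=\mu+\epsilon$ where the signs of $\mu$ and $Z$ are correlated, not independently flipped. Asserting that this functional equals $\PP[\Symm{G}]{\mu Z>0 \mid |Z|=z}$ is exactly what needs to be proved, and your stated justification does not prove it.

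The paper's argument supplies the missing step: write $Z=\mu+\epsilon$ with $\epsilon\sim\mathrm{N}(0,1)$ independent of $\mu$, observe that $-\mathrm{sign}(\mu)\,\epsilon\stackrel{\mathcal D}{=}\epsilon$, and substitute to obtain $(\mu Z,|Z|)\stackrel{\mathcal D}{=}(|\mu|^2-|\mu|\epsilon,\,||\mu|-\epsilon|)$, which manifestly depends on $\mu$ only through $|\mu|$. The same trick handles $(ZZ',|Z|,|Z'|)$ for the replication probability. Without this distributional substitution (or an equivalent computation), the ``delicate case'' is not closed.
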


\begin{figure}[p]
\centering
\begin{adjustbox}{max totalsize={\textwidth}{0.9\textheight}, center}
\begin{minipage}{\textwidth}
    \begin{subfigure}[t]{0.45\textwidth}
        \centering
        \caption{Marginal density}
        \includegraphics[width=\linewidth]{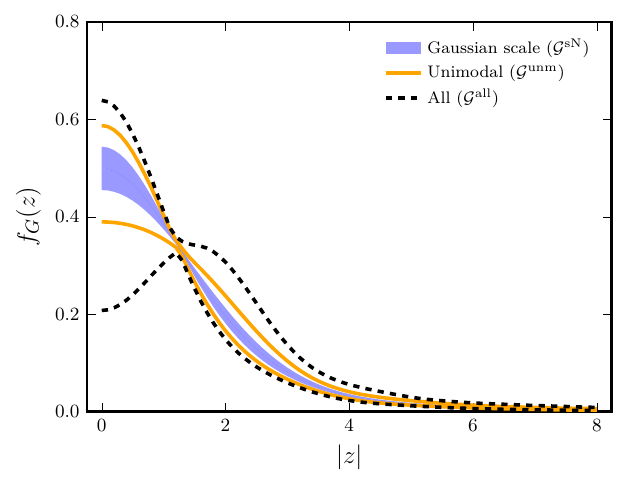}
        \label{fig:marginal_density_unnormalized_full}
    \end{subfigure}
    \hfill
    \begin{subfigure}[t]{0.45\textwidth}
        \centering
        \caption{Normalized marginal density}
        \includegraphics[width=\linewidth]{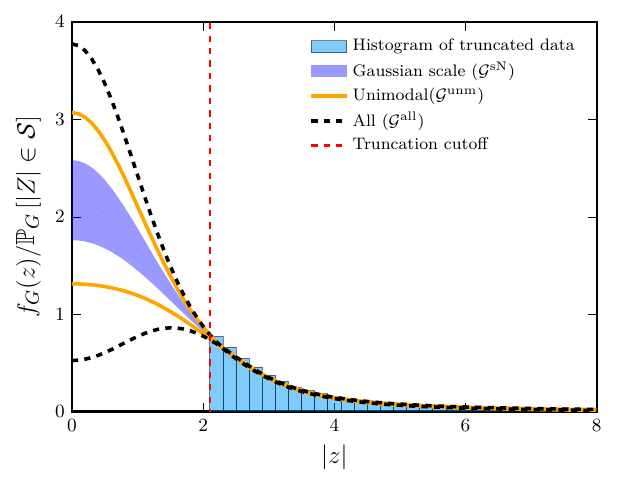}
        \label{fig:marginal_density_normalized_full}
    \end{subfigure}
    \hfill
    \begin{subfigure}[t]{0.45\textwidth}
        \centering
        \caption{Binned density of power}
        \includegraphics[width=\linewidth]{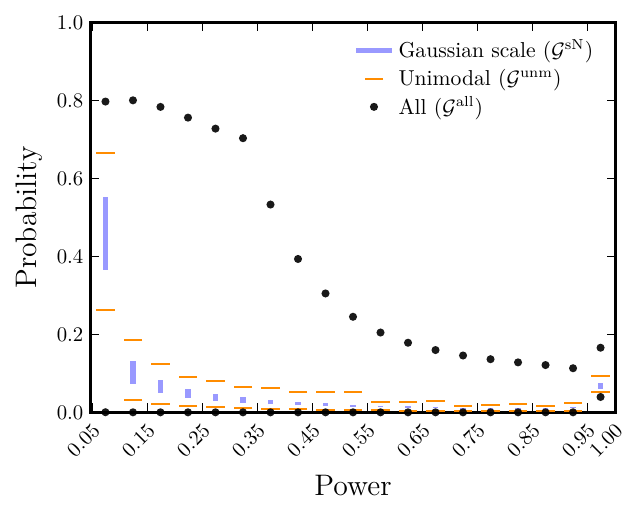}
        \label{fig:binned_power_full}
    \end{subfigure}
    \hfill
    \begin{subfigure}[t]{0.45\textwidth}
        \centering
        \caption{Probability of same sign}
        \includegraphics[width=\linewidth]{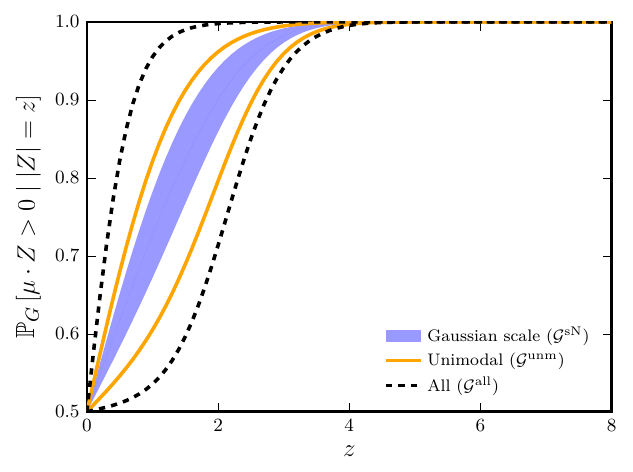}
        \label{fig:same_sign_full}
    \end{subfigure}
    \hfill
    \begin{subfigure}[t]{0.45\textwidth}
        \centering
        \caption{Symmetrized posterior mean}
        \includegraphics[width=\linewidth]{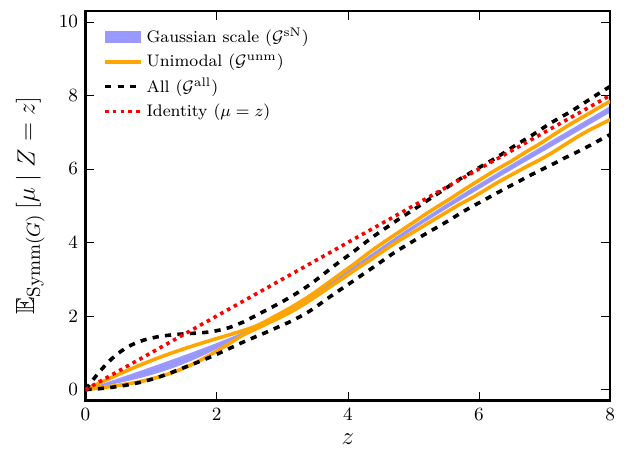}
        \label{fig:symmetrized_posterior_full}
    \end{subfigure}
    \hfill
    \begin{subfigure}[t]{0.45\textwidth}
        \centering
        \caption{Replication probability}
        \includegraphics[width=\linewidth]{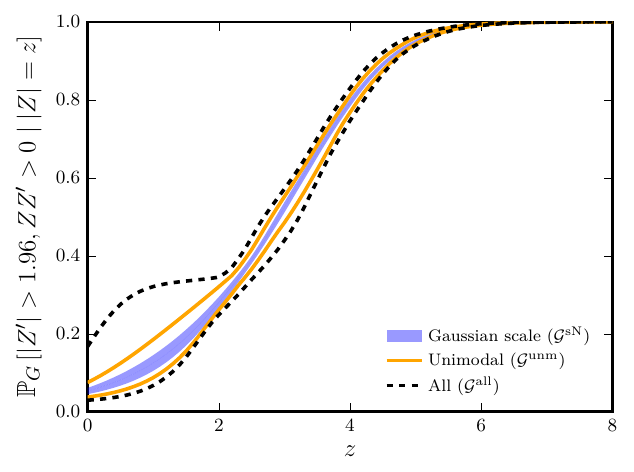}
        \label{fig:repl_prob_full}
    \end{subfigure}
    \hfill
    \begin{subfigure}[t]{0.45\textwidth}
        \centering
        \caption{Future coverage probability}
        \includegraphics[width=\linewidth]{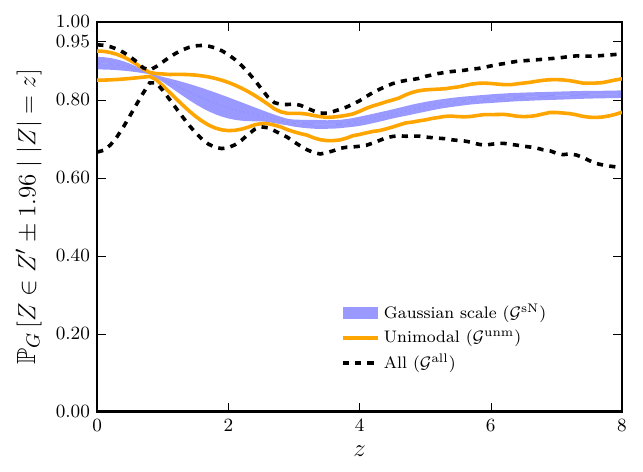}
        \label{fig:future_coverage_prob_full}
    \end{subfigure}
    \hfill
    \begin{subfigure}[t]{0.45\textwidth}
        \centering
        \caption{Effect size replication probability}
        \includegraphics[width=\linewidth]{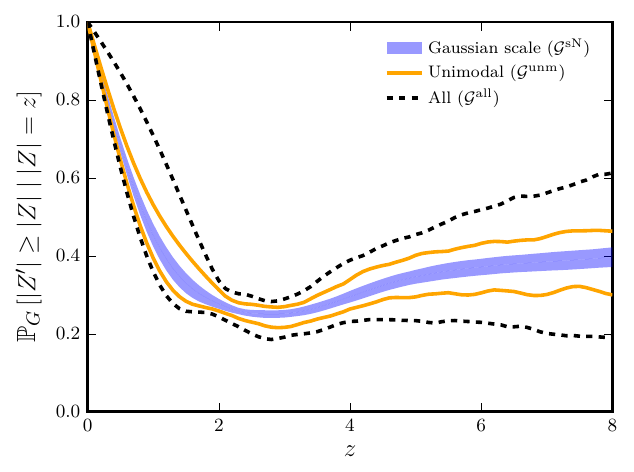}
        \label{fig:effect_repl_prob_full}
    \end{subfigure}
    
\end{minipage}
\end{adjustbox}
\caption{95\% Confidence interval analyses for MEDLINE (2000-2018): Each panel presents one estimand of interest, accompanied by 95\% confidence intervals under different assumptions for the SNR distribution.}
\label{fig:medline_full_analysis}
\end{figure}

\subsection{Marginal densities and power}
\label{subsec:marginal_and_power}
Our first estimands inform us about the distribution of z-scores in MEDLINE abstracts. We discuss estimands for posterior inference afterwards.\\

\noindent \textbf{Marginal density.}
Let $f_G(z)$ be the marginal density of $\abs{Z}$ under \eqref{eq:publication_bias_model}, 
$$
f_G(z) := \int\varphi^{\text{fold}}(z;\mu)G(\dd\mu),\;\; z \geq 0.
$$
The marginal density characterizes the distribution of absolute z-scores in the complete population of studies (both published and unpublished). This estimand for $z \notin \mathcal{S}$ makes it clear that we are dealing with an extrapolation problem.

As shown in Fig.~\ref{fig:marginal_density_unnormalized_full}, we see that different prior class assumptions noticeably impact confidence interval widths that widen under increasingly flexible prior classes. For $\mathcal{G}^{\mathrm{all}}$ that makes almost no assumption at all, the resulting intervals are wide since we only work with $|Z_i| \geq 2.1$ and need to infer the density at  $z$ near $0$. By contrast, the structure afforded by $\mathcal{G}^{\mathrm{sN}}$ and $\mathcal{G}^{\mathrm{unm}}$ means that we can more accurately conduct this extrapolation task, at the cost of stronger structural assumptions. 

Beyond the issue of extrapolation, we observe that 
that there is also substantial uncertainty for $z \in \mathcal{S}$. The reason is that the available samples only indirectly provide information on the normalizing constant in the denominator of $f_G^A(z)$ in \eqref{eq:two_marginal_densities}; we do not know how many samples are actually truncated. To address this, we define the normalized marginal density
\begin{equation}
\begin{aligned}
f_G^{\mathrm{n}}(z):=\frac{f_G(z)}{\PP[G]{\abs{Z} \in \selection}} \;\text{ with }\;\PP[G]{\abs{Z} \in \selection} =  \int_\selection \int \varphi^{\text{fold}}(z;\mu)G(\dd\mu) \dd z.
\end{aligned}
\label{eq:marginal_density_normalized}
\end{equation}
In words, this estimand is the marginal density normalized such that it integrates to $1$ over $\selection$. In this way, the samples we observe (after the preprocessing of Fig.~\ref{fig:selectionProcess}) follow precisely this normalized density over $\selection$. To also interpret the marginal density outside $\selection$, we extend the definition to all $z \geq 0$. In Fig.~\ref{fig:marginal_density_normalized_full}, we accompany the confidence intervals for the normalized density with the histogram of the truncated observations. We observe that for $z \in \selection$ our intervals track the histogram closely with little uncertainty. As we move away from $z \in \selection$ and toward $z\approx 0$, our confidence intervals become wider. 
The substantial uncertainty near $0$ even with our large sample size underscores the fundamental difficulty of extrapolating the density of $\abs{Z}$.\\

\begin{table}
\centering
\caption{95\% Confidence intervals for proportion of studies with at least 80\% power under different prior classes on MEDLINE (2000-2018) data.}
\label{tab:power_above_80_medline}
\begin{tabular}{lcc}
\toprule
\textbf{Prior}               & \textbf{FLOC} & \textbf{AMARI} \\
\midrule
$\mathcal{G}^{\mathrm{sN}}$  & (0.091, 0.108)  & (0.106, 0.109) \\
$\mathcal{G}^{\mathrm{unm}}$ & (0.077, 0.125)  & (0.103, 0.126) \\
$\mathcal{G}^{\mathrm{all}}$ & (0.047, 0.209)  & (0.049, 0.212) \\
\bottomrule
\end{tabular}
\end{table}

\noindent \textbf{Power-based estimands.}
Denote the power function for a two-sided z-test as 
$$\beta(\mu) := \PP{\abs{Z} \geq 1.96 \mid \mu} =  1- \Phi(1.96-\mu)+\Phi(-1.96-\mu),$$
where $\Phi(\cdot)$ is the standard normal CDF. The power measures the probability of a two-sided z-test rejecting the null hypothesis $\mathrm{H}_0:\mu=0$ at level $\alpha=0.05$ when the true SNR is $\mu$. We let $\beta(0)\stackrel{\cdot}{=}0.05$.
To better understand the power distribution across studies in MEDLINE, we define the binned density of power as,
$$
\mathrm{Pow}_G(\mathcal{I}) := \PP[G]{ \beta(\mu) \in \mathcal{I}} = \int \ind( \beta(\mu) \in \mathcal{I} )G(\dd\mu),
$$
where we set $\mathcal{I}$ by partitioning $[0.05, 1.00]$ into intervals of width $0.05$. Results are shown in Fig.~\ref{fig:binned_power_full}. For instance, the confidence interval for $\mathcal{I}=[0.05,0.1]$ using $\mathcal{G}^{\mathrm{sN}}$ shows that about 37\%-55\% of results reported in MEDLINE abstracts have underlying power below $10\%$. 

While Fig.~\ref{fig:binned_power_full} looks at this more fine-grained partitioning of the density of power, we also consider this quantity for $\mathcal{I}=[0.8,1.0]$. In this case the estimand $\PP[G]{\beta(\mu) \geq 0.8}$ represents the proportion of studies with at least 80\% power. In the confidence intervals reported in Table~\ref{tab:power_above_80_medline}, we see that only between 4.7\%-20.9\% of studies (using F-Localization and $\mathcal{G}^{\mathrm{all}}$) have at least 80\% power. In interpreting this interval, we remind the reader that our analysis applies to z-scores that could have been in an abstract.

We next start discussing posterior estimands.

\subsection{Posterior sign-agreement probability}
\label{sec:prob_same_sign}

We define the sign-agreement probability as 
$$\PP[G]{\mu \cdot Z > 0 \mid \abs{Z}=z},$$
which represents the probability that the z-score and the true SNR share the same sign, conditional on a particular value of the observed absolute z-score. The reason we condition on $\abs{Z}$ instead of $Z$ is that only the former estimand is identified in our setting.
This quantity addresses a fundamental question in scientific interpretation: when a study reports a positive effect, how confident can we be that the true effect is indeed positive? This quantity relates to the Type S error advocated by~\citet{gelman2000types, gelman2014typem} (which is equal to $\PP[G]{\mu \cdot Z > 0 \mid \abs{Z}> 1.96}$ in our notation), and is also closely related to the local false sign rate of~\citet{stephens2017false}.
A probability just above 0.5 indicates that the sign is highly uncertain, which may occur e.g., for $z \approx 0$. As $|z|$ increases, this probability approaches one due to increasing strength in evidence. The rate of increase, however, depends on $\Fold{G}$. Therefore, by examining how the sign-agreement probability changes with $\abs{z}$, we can evaluate directional reliability across the spectrum of observed z-scores.
From Fig.~\ref{fig:same_sign_full}, the probability of sign-agreement is quite high for studies that just reach the 0.05 level of significance. The lower end of the confidence interval (using $\mathcal{G}^{\mathrm{sN}}$) is above 80\%.

Our claim made in the introduction that the posterior probability that $\mu_i$ has the same sign as $Z_i$ is at least $90.7\%$ (where $Z_i = -2.22$ refers to the z-score from the abstract of~\citet{decolonization2019medline}) is derived from the confidence interval $[90.7\%, 97.5\%]$ for the above using AMARI and $\mathcal{G}^{\mathrm{unm}}$. See Table~\ref{tab:example_inference} for all CIs for this specific estimand.

\subsection{The symmetrized posterior mean}
\label{sec:symm_posterior_mean}

An important question is how we should use our MEDLINE analysis to develop better shrinkage estimators. Specifically, suppose we seek to estimate $\mu$ from a study based on its z-score $Z$. Of note, for this study, we assume that its sign is reliable. The natural estimator is given by the posterior mean $\hat{\mu} := \EE[G]{\mu \mid Z}$. The following optimization problem over all possible functions $\delta:\RR \to \RR$,
\begin{equation}
\underset{\delta:\RR \to\RR}{\textnormal{minimize}}\quad\EE[G]{\p{\delta(Z)-\mu}^2},
\label{eq:posterior_mean_computation}
\end{equation}
is solved by the posterior mean function $\delta_{G}(z):= \EE[G]{\mu \mid Z=z}$. However, the posterior mean is not identifiable in our setting, because it relies on sign-information of $G$. Recalling from Theorem~\ref{theo:Identifiability} that $\Symm{G}$ is identified,  we define the symmetrized posterior mean:
\begin{equation}
\delta_{G}^{\mathrm{Symm}}(z) := \EE[ \Symm{G}]{\mu \mid Z=z}.
\end{equation}
In words, \smash{$\delta_{G}^{\mathrm{Symm}}(z)$} is the posterior mean if the prior had been the symmetrized version $\Symm{G}$ of $G$ instead of $G$ itself.\footnote{
We note that the symmetrized posterior mean is not equal to \smash{$\EE[G]{|\mu| \mid |Z|=|z|}\mathrm{sign}(z)$}. To wit, \smash{$\EE[G]{|\mu| \mid |Z|=|z|} = \EE[\Symm{G}]{|\mu| \mid |Z|=|z|} \geq |\delta_{G}^{\mathrm{Symm}}(z)|$}, where the inequality follows from Jensen's inequality and is strict as long as $G$ is not a point mass at $0$.
} Our proposal is to use \smash{$\delta_{G}^{\mathrm{Symm}}(z)$} for estimation of $\mu$ when only $\Fold{G}$ (equivalently $\Symm{G}$) is identified. While the symmetrized posterior mean has been previously used by~\citet{vanzwet2024evaluating} without being given an explicit name, here we provide a decision theoretic foundation.

\begin{prop}
Consider the following two optimization problems over functions $\delta:\RR \to \RR$ in lieu of~\eqref{eq:posterior_mean_computation}:

\begin{align*}
\text{(a)}\quad & \underset{\delta:\mathbb{R} \to \mathbb{R}}{\textnormal{minimize}}\quad \mathbb{E}_G\left[(\delta(Z)-\mu)^2\right] \quad \textnormal{s.t.} \quad \delta(\cdot) \textnormal{ is odd, i.e., } \delta(-z)=-\delta(z) \textnormal{ for all } z,\\[4pt]
\text{(b)}\quad & \underset{\delta:\mathbb{R} \to \mathbb{R}}{\textnormal{minimize}}\quad \sup\left\{ \mathbb{E}_{\tG}\left[(\delta(Z)-\mu)^2\right] \,: \,\tG \,\textnormal{ with }\, \mathrm{Symm}\,[\tG]=\mathrm{Symm}\,[G] \right\}.
\end{align*}
Then the solution of both optimization problems is given by $\delta_G^{\mathrm{Symm}}(\cdot)$, even if the original prior $G$ is not symmetric.
\label{prop:interpretation_symm_postmean}
\end{prop}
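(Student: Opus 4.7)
The engine driving both parts is a single symmetrization identity: for any odd measurable function $\delta$ and any prior $\tilde{G}$,
\begin{equation*}
\EE[\tilde{G}]{(\delta(Z)-\mu)^2} \;=\; \EE[\Symm{\tilde{G}}]{(\delta(Z)-\mu)^2}.
\end{equation*}
To see this, let $\varepsilon \in \cb{\pm 1}$ be an independent uniform sign flip and set $(\mu',Z')=(\varepsilon\mu,\varepsilon Z)$. By construction $\mu' \sim \Symm{\tilde{G}}$ and $Z'\mid \mu' \sim \mathrm{N}(\mu',1)$. Oddness gives $\delta(Z')-\mu'=\varepsilon(\delta(Z)-\mu)$, so squaring kills the sign and the identity follows by taking expectations (first in $\varepsilon$, then in $(\mu,Z)$).

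\textbf{Part (a).} Applying the identity to every odd $\delta$ converts the constrained problem into the \emph{unconstrained} posterior-mean problem under $\Symm{G}$, whose unique minimizer is $\delta_G^{\mathrm{Symm}}(z) = \EE[\Symm{G}]{\mu \mid Z=z}$. It remains to verify that $\delta_G^{\mathrm{Symm}}$ is itself odd, so that it actually lies in the feasible set. This is a short direct computation: if $h$ is the density of $\Symm{G}$ then $h(\mu)=h(-\mu)$, and the Gaussian likelihood gives the posterior density relation $h(\mu \mid -z) = h(-\mu \mid z)$, from which $\delta_G^{\mathrm{Symm}}(-z)=-\delta_G^{\mathrm{Symm}}(z)$ follows by a change of variables.

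\textbf{Part (b).} I will bracket the minimax value from both sides. For the lower bound, note that $\Symm{G}$ is idempotent under the symmetrization map, i.e., $\Symm{\Symm{G}}=\Symm{G}$, so $\Symm{G}$ itself is a feasible adversarial prior. Hence
\begin{equation*}
\inf_{\delta}\sup_{\tilde{G}:\Symm{\tilde{G}}=\Symm{G}} \EE[\tilde{G}]{(\delta(Z)-\mu)^2} \;\geq\; \inf_{\delta}\EE[\Symm{G}]{(\delta(Z)-\mu)^2} \;=\; R^\star,
\end{equation*}
where $R^\star$ is the Bayes risk at $\Symm{G}$, attained by $\delta_G^{\mathrm{Symm}}$. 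For the matching upper bound, plug the candidate $\delta_G^{\mathrm{Symm}}$ (which is odd, by part (a)) into the symmetrization identity: for \emph{every} feasible $\tilde{G}$,
\begin{equation*}
\EE[\tilde{G}]{(\delta_G^{\mathrm{Symm}}(Z)-\mu)^2} \;=\; \EE[\Symm{\tilde{G}}]{(\delta_G^{\mathrm{Symm}}(Z)-\mu)^2} \;=\; \EE[\Symm{G}]{(\delta_G^{\mathrm{Symm}}(Z)-\mu)^2} \;=\; R^\star.
\end{equation*}
So the worst-case risk of $\delta_G^{\mathrm{Symm}}$ equals $R^\star$, matching the lower bound and finishing the proof.

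\textbf{Main obstacle.} There is no real analytic obstacle; once the symmetrization identity is in hand, both parts collapse to one-line arguments. The only fiddly bookkeeping is confirming oddness of the posterior mean under $\Symm{G}$, which I would handle via the posterior density symmetry $h(\mu\mid -z)=h(-\mu\mid z)$ rather than trying to manipulate the Bayes formula in closed form. A pleasant conceptual payoff of the argument is that it exhibits (a) and (b) as dual formulations of the same phenomenon: restricting to odd estimators is risk-equivalent to taking the worst case over the sign-ambiguous class $\cb{\tilde{G} : \Symm{\tilde{G}}=\Symm{G}}$.
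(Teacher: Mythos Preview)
Your proof is correct. The central symmetrization identity, established via the sign-flip coupling $(\mu',Z')=(\varepsilon\mu,\varepsilon Z)$, is a clean device that lets you treat both parts uniformly.

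The route differs from the paper's, especially for part~(a). The paper does not invoke your identity there; instead it rewrites the risk as $\EE[G]{(\delta(Z)-\EE[G]{\mu\mid Z})^2}$, splits the $z$-integral over $[0,\infty)$ and $(-\infty,0)$, substitutes $\delta(-z)=-\delta(z)$, and minimizes the resulting quadratic in $\delta(z)$ pointwise. This yields the explicit formula
\[
\delta^*(z)=\frac{\EE[G]{\mu\mid Z=z}\,f_G^Z(z)-\EE[G]{\mu\mid Z=-z}\,f_G^Z(-z)}{f_G^Z(z)+f_G^Z(-z)},
\]
which is then checked by hand to equal $\delta_G^{\mathrm{Symm}}$. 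Your approach bypasses this computation entirely by reducing to the unconstrained Bayes problem under $\Symm{G}$ and checking oddness of its solution; the price is that you do not recover the explicit mixing formula above. For part~(b) the two arguments are closer in spirit: the paper shows that $L^*(\mu):=\EE[Z\sim\mu]{(\delta^*(Z)-\mu)^2}$ is even in $\mu$ and then integrates against $\tilde G$ by splitting over signs, which is the integral-calculus version of your coupling step specialized to $\delta=\delta^*$. Your identity is slightly more general (it holds for every odd $\delta$), which is precisely what lets you reuse it for part~(a) as well.
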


The meaning of part (a) of the proposition is that the symmetrized posterior mean minimizes mean squared error among all functions that are constrained to satisfy the natural equivariance requirement of being odd (and so treating $z$ and $-z$ in a symmetric way). The result provides a rationale for caring about \smash{$\delta_G^{\mathrm{Symm}}(\cdot)$} even when $G$ is not symmetric. We also briefly refer to~\citet{jaffe2025constrained} who study mean squared error optimal denoising subject to various constraints (that are however different from the sign-equivariance constraint we impose). The interpretation of part (b) is that it minimizes the worst-case risk over all possible priors that induce the same distribution for $\abs{\mu}$ as $G$ does. 

Fig.~\ref{fig:symmetrized_posterior_full} shows the confidence intervals for the symmetrized posterior mean. We only show the result for $z \geq 0$, since by symmetry the result for $z\leq0$ is the same with flipped signs. We observe that all intervals suggest stronger shrinkage for moderate $z$, with smaller shrinkage for larger $z$. At $z  = -2.22$, our confidence interval for the symmetrized posterior mean is $[-1.49, -1.39]$ (using AMARI and $\mathcal{G}^{\mathrm{unm}}$), indicating noticeable shrinkage towards zero. We refer to Table~\ref{tab:example_inference} for the CIs for the symmetrized posterior mean at $z  = -2.22$.

\subsection{Posterior estimands of idealized replications}
\label{sec:repl_posterior_estimands}

We now consider hypothetical idealized replications under~\eqref{eq:snr_prior} and~\eqref{eq:z_normal} according to,
$$
\mu \sim G,\;\; Z,Z' \mid \mu \simiid \mathrm{N}(\mu, 1).
$$
We only observe $\abs{Z}$ (or $Z$), but imagine we have a perfect replication that would yield $Z'$. We can then use our framework to ask questions about posterior probabilities involving the unobserved replication $Z'$. In this sense, the following estimands seek to mimic what would happen under an actual replication study without actually running it. For interpretation, it is important to keep in mind that there is no such thing as perfect replication. For instance, the results on replication probabilities below, show that  \emph{even if it were possible to do a perfect replication}, the probability of replicating is much smaller than one might anticipate.

We focus on three posterior estimands that are motivated by the three reported measures of the~\citet{opensciencecollaboration2015estimating} and attempt to provide a rigorous definition for them for the empirical Bayes setting we consider. \citet{HungFithian2020} also define estimands motivated by~\citet{opensciencecollaboration2015estimating} that are purely frequentist in nature, but require access to actual replications.\\

\noindent{\textbf{Replication probability.}} We define the replication probability as 
$$\PP[G]{|Z'| > 1.96, ZZ'>0\mid |Z|=z}$$ 
to describe the probability that a z-score from a replication achieves statistical significance 
and has the same sign as the original z-score. As before, for identifiability reasons, although our counterfactual question pertains to the actual z-score $Z$ and its replication $Z'$, we condition on the absolute value of the observed z-score. 

The posterior probability that an exact replication study will produce a z-score $Z'$ that achieves statistical significance in the same direction is small for most values of the original study z-score $Z$, as shown in Fig.~\ref{fig:repl_prob_full}. For studies with absolute z-scores just above 1.96, we obtain intervals centered around 25\% (using $\mathcal{G}^{\mathrm{sN}}$). Thus, even under perfect replication, the probability that a just-significant result replicates is small. The~\citet{opensciencecollaboration2015estimating} reports that only 36\% of replications produced statistically significant effects in the same direction as the original studies.~\Citet{erik2022replprob} also estimates the same quantity based on the Cochrane database, where they report a replication probability of about 29\% for studies that just achieve statistical significance. If we measure it at $\abs{z}  = 2.22$ (as in the study of~\citet{decolonization2019medline}), we have a confidence interval of $[30.8\%, 33.4\%]$ (using AMARI and $\mathcal{G}^{\mathrm{unm}}$).  We refer to Table~\ref{tab:example_inference} for all CIs for this specific estimand.\\

\noindent{\textbf{Future coverage probability.}}
The future coverage probability is defined as
$$
\PP[G]{ Z \in Z' \pm 1.96       \mid \abs{Z}=z},
$$
and measures the probability that the 95\% confidence interval from a replication study contains the initial z-score $Z$, again conditional on the absolute value of the initial z-score.\footnote{A related estimand is the conditional coverage of the usual 95\% confidence interval for the true SNR $\mu$, i.e., $\PP[G]{ \mu \in Z \pm 1.96  \mid \abs{Z}=z}$, previously considered e.g., by~\citet{zwet2021statistical}. This estimand can also be handled by our methods but is omitted for brevity.}

Overall, the future coverage probability in Fig.~\ref{fig:future_coverage_prob_full} is around 80\% using $\mathcal{G}^{\mathrm{sN}}$, with a dip for moderate absolute z-scores around $2$ to $4$. At absolute z-scores of around $1.96$, this probability is much higher than the replication probability. A related estimand is reported by~\citet{opensciencecollaboration2015estimating}, where 47\% of 95\% confidence intervals of the replicated effect size estimates contain the original estimate. This number is quite a bit lower than what our confidence intervals indicate; a potential explanation lies in the difference between real replications and the idealized replications we consider.
\\

\noindent{\textbf{Effect size replication probability.}}
\label{sec:effect_repl_prob}
We define the probability that the replication absolute z-score will be larger than the original absolute z-score as
$$ \PP[G]{   |Z'| \geq |Z| \mid \abs{Z}=z}.$$
This estimand is closely related to Type M errors~\citep{gelman2014typem} and the general issue of exaggeration of significant point estimates~\citep{vanzwet2021significance}, i.e., a low effect size replication probability could indicate exaggeration in original estimates. 

As shown in Fig.~\ref{fig:effect_repl_prob_full}, the effect size replication probability diminishes rapidly as $\abs{z}$ increases and reaches a trough at absolute z-scores that just reached 0.05 statistical significance, it then increases slowly as $\abs{z}$ increases. Specifically, for studies with absolute z-scores just above 1.96, the probability that an exact replication produces a z-score of greater magnitude is about 25\% under $\mathcal{G}^{\mathrm{sN}}$. Such a low probability indicates that their original estimates are likely exaggerated. The~\citet{opensciencecollaboration2015estimating} estimates that only 17\% of the replicated effect size estimates are greater than the original studies.

\begin{table}
\centering
\caption{Confidence intervals for each estimand under different priors on MEDLINE (2000-2018) data.  
CIs for \(\omega_1\) and \(\omega_2\) are at the 97.5\% level;  
CI for \(\omega\) is at the 95\% level.}
\label{tab:omega_medline_full}
\begin{tabular}{lccccc}
\toprule
 & & \multicolumn{2}{c}{\(\omega_{2}\) (97.5\%)} & \multicolumn{2}{c}{\(\omega\) (95\%)} \\
\cmidrule(lr){3-4} \cmidrule(lr){5-6}
\textbf{Prior} & \(\omega_{1}\) (97.5\%) & \textbf{FLOC} & \textbf{AMARI} & \textbf{FLOC} & \textbf{AMARI} \\
\midrule
$\mathcal{G}^{\mathrm{sN}}$  & \multirow{3}{*}{(5.46, 5.58)} & (2.46, 3.35) & (2.44, 2.76) & (13.40, 18.67) & (13.26, 15.37) \\
$\mathcal{G}^{\mathrm{unm}}$ & & (2.02, 3.79) & (2.00, 2.77) & (11.02, 21.11) & (10.93, 15.43) \\
$\mathcal{G}^{\mathrm{all}}$ & & (1.26, 4.43) & (1.23, 3.14) & (6.86, 24.67) & (6.72, 17.51) \\
\bottomrule
\end{tabular}
\end{table}

\subsection{Risk ratio of publication of significant vs. nonsignificant results}
\label{sec:publication_prob}
Following~\citet{hedges1992modeling}, we consider the risk ratio of publication
of a significant vs.\ a non-significant result:
$$
\omega := \frac{\PP{D=1 \mid \abs{Z} \geq 1.96}}{\PP{D=1 \mid \abs{Z} < 1.96}} = \frac{\PP{\abs{Z} \geq 1.96 \mid D=1} \big  / \PP[G]{\abs{Z} \geq 1.96}}{\PP{\abs{Z} < 1.96 \mid D=1} \big / \PP[G]{\abs{Z} < 1.96}  }.
$$
We can break it up as $\omega =  \omega_1 \cdot \omega_2$, where $\smash{\omega_1=\PP{\abs{Z}\ge1.96\mid D=1}/\PP{\abs{Z}<1.96\mid D=1}}$ and $\smash{\omega_2=\PP[G]{\abs{Z}<1.96}/\PP[G]{\abs{Z}\ge1.96}}$.
The value of $\omega$ informs us on the extent of publication bias: $\omega > 1$ indicates that significant results are more likely to be published than non-significant ones, providing evidence of publication bias.

To obtain a 95\% confidence interval for $\omega$, we first derive two 97.5\% confidence intervals for $\omega_1$ and $\omega_2$, then the product yields the desired 95\% interval by Bonferroni adjustment. Specifically the interval for $\omega_2$ can be constructed by methods in Section~\ref{sec:methodology}, and we use the classical Wald's interval for $\omega_1$. We refer to Supplement~\ref{sec:omega_detail} for details.

The results are summarized in Table~\ref{tab:omega_medline_full}. Across all prior classes, the intervals for the risk ratio $\omega$ lie entirely above 1, providing evidence that statistically significant results have a higher probability of publication than non-significant findings. This result is not surprising and consistent with the visual evidence of heaping around $\abs{z} = 1.96$ observed in Fig.~\ref{fig:hist_med}.

\section{Further numerical results and analyses}
\label{sec:additional_results}
In addition to the analysis of the MEDLINE dataset, we conduct a variety of supplementary analyses addressing three practical questions. First, how sensitive are our conclusions to the sample size? Second, what is the price of robustness when we account for potential publication bias versus ignoring it? Third, how do our proposed confidence intervals compare to z-curve 2.0~\citep{bartos2022zcurve} for an estimand for which both methods are applicable, and how do our proposed methods perform compared to the bootstrap in our setting?

\subsection{Sensitivity to Sample Scope: Single-Year Analysis}
\label{subsec:single_year_analysis}
We replicate our procedure in the subset of MEDLINE studies published in 2018 (recall that in our analysis above, we pooled the years 2000-2018). There are two reasons why such an analysis is of interest. First, the publishing pattern could change over the years. Second, comparing the confidence intervals derived from this subset to those obtained from the full dataset (2000–2018) provides an empirical assessment of how interval width scales with sample size. 
We show and discuss the results in Supplement~\ref{subsec:2018_analysis} (Fig.~\ref{fig:medline_2018_CIs}, Tables~\ref{tab:power_above_80_medline2018} and~\ref{tab:omega_medline_2018}).
The resulting intervals provide no evidence of a different publishing pattern in 2018 relative to 2000-2018, but are substantially wider, illustrating the precision gains from aggregating more years of data.

\subsection{Cochrane robustness analysis}
\label{subsec:Cochrane_analysis}
In Supplement~\ref{subsec: supp_Cochrane_analysis} (Figs.~\ref{fig:Cochrane_CIs_1} and~\ref{fig:Cochrane_CIs_2}, Tables~\ref{tab:power_above_80_Cochrane_combined} and~\ref{tab:omega_cochrane}), we analyze the Cochrane Database of Systematic Reviews (CDSR), 
which is the leading database for systematic reviews in healthcare that contains results of more than 20,000 RCTs.
The CDSR is often perceived to exhibit comparatively small publication bias~\citep{zwet2021statistical, Schwabe2021Cochrane}. We first consider two settings: one employing our proposed selective tilting procedure to account for potential publication bias, and another without any truncation adjustment. This contrast quantifies the methodological cost of truncation, that is, the degree to which interval widths increase to ensure robustness against selective reporting, when applied to a corpus where such adjustments may be less necessary. Typically, the intervals constructed with truncation are much wider, partly because of the reduced sample size after truncation. We use the subset of the Cochrane database analyzed in~\citet{zwet2021statistical}, which contains 23,551 z-scores; after truncation, 6,119 samples remain. To isolate the effect of sample size reduction, we introduce a third setting where we analyze the data without truncation but on a random subset of size 6,119. This comparison shows that sample size reduction explains only part of the increased interval width; the remaining inflation is driven by extrapolation toward the truncated region, which is most apparent for the marginal density. Specifically, intervals constructed with truncation widens substantially as we move towards $z\approx 0$, whereas in the settings without truncation, the interval widths stay roughly constant across $z$. 
In addition, our intervals for $\omega$ based on Cochrane include 1 across all prior classes, with upper bounds ranging from 2 to 3, which is consistent with empirical findings in literature that the Cochrane database suffers from some, but limited publication bias. 

\subsection{Simulation study: z-curve 2.0 and the bootstrap} 
\label{subsec:bootstrap}

In Supplement~\ref{subsec:simulation_detail} (Fig.~\ref{fig:simulation_results}), we conduct simulation studies to assess frequentist coverage of our proposed methods and compare them with bootstrap-based methods, including z-curve, under both their setting and ours. The z-curve method,  as mentioned in Section~\ref{subsec:selective_inf_studies}, relies on assumptions similar to ours, but uses a different prior class $\mathcal{G}^{\text{z-curve}}$ as in~\eqref{eq:z_curve_prior} and the truncation set $\selection = [1.96, 6]$ (instead of our choice of $[2.1, \infty]$). It fits truncated folded z-scores using the expectation-maximization (EM) algorithm and constructs confidence intervals via the multinomial bootstrap~\citep{efron1979bootstrap}. Since $\mathcal{G}^{\text{z-curve}}$ is convex, we can apply our methods under the z-curve specifications. In addition, we consider a variant of z-curve that replaces the EM algorithm with the interior point (IP) method of~\citet{koenker2014convex} for computing the NPMLE (see Remark~\ref{rema:tilting_for_other_approaches}) while still using the bootstrap for inference. In our simulations we consider different estimands, and vary the specification to either match z-curve ($\mathcal{G}^{\text{z-curve}}$, $\selection=[1.96,6]$), or our modeling assumptions ($\mathcal{G}^{\mathrm{sN}}$, $\selection=[2.1, \infty)$). 

Across all settings, $F$-Localization provides near 100\% point-wise coverage rate due to its simultaneous coverage guarantee, while AMARI attains coverage at or above the nominal level with substantially narrower intervals. Coverage for AMARI can be larger than 95\% by design of the method in~\citet{ignatiadis2022confidence} to account for worst-case bias, which means that if bias is smaller than worst-case, the coverage can be larger than nominal. In contrast, bootstrap-based methods produce notably shorter intervals but exhibit substantial undercoverage, particularly outside the truncation region where extrapolation is needed.  Meanwhile, z-curve has lower coverage than the corresponding bootstrap method that uses IP instead of EM; the EM implementation of z-curve often does not converge to the maximizer of the marginal likelihood.

Although these simulations illustrate that bootstrap-based methods may fail even when the ground-truth prior is well-specified within the assumed convex class, we do not dismiss the bootstrap as a viable alternative in practice. Rather, our results underscore the absence of general theoretical guarantees for the bootstrap in nonparametric EB settings (see~\citet{ignatiadis2022rejoinder} for further discussion and~\citet{karlis2018confidence} for a rare result on the validity of the bootstrap in the Poisson EB problem) and demonstrate the potential of noticeable undercoverage. Caution is therefore warranted when relying on bootstrap-based uncertainty quantification in these problems.
\paragraph{Reproducibility.} All results in this paper are fully third-party reproducible with the code we provide under the following Github repository:\\ 
\url{https://github.com/huNterrchen/selective-eb-confidence-intervals-paper}

\paragraph{Acknowledgments.} We would like to thank Sowon Jeong and Jerry Zhu for helpful feedback on an earlier version of this manuscript. NI would like to thank Sifan Liu, Snigdha Panigrahi, and Asaf Weinstein for helpful conversations about empirical Bayes and selective inference. NI gratefully acknowledges support from the U.S. National Science Foundation (DMS-2443410). Part of the computing for this project was conducted on UChicago's Data Science Institute cluster.

\bibliographystyle{abbrvnat}
\bibliography{truncated_eb}

\appendix

\setcounter{equation}{0} 
\renewcommand{\theequation}{S\arabic{equation}}

\section{More details on preprocessing:  MEDLINE abstracts}
\label{sec:medline}

\citet{georgescu2018algorithmic} extract statistical measures representing ratios (e.g., odds ratios, hazard ratios) and their associated confidence intervals from MEDLINE abstracts. From the dataset constructed by~\citet{BarnetteMEDLINE2019} using the same extraction algorithm, we retain only abstracts from 2000-2018 that report a 95\% confidence interval. When multiple confidence intervals are present in a single abstract, we select one of them at random. For each reported ratio estimate $\hat{\theta}_i$ with a two-sided 95\% confidence interval $[L_i,U_i]$, we reconstruct the corresponding z-score as follows:  we compute the standard error as \smash{$\text{SE}_i = (\log(U_i) -\log(L_i))/(2\cdot1.96)$}, and the z-score as \smash{$Z_i = (\log(U_i) +\log(L_i))/(2\cdot\text{SE}_i)$}. After filtering, the final data set consists of $326,060$ z-scores.

\section{Discussion points from Jager and Leek (2014)}
\label{sec:jager_leek}
We summarize concerns raised by discussion articles of~\citet{jagerleek2014falsepositive} and examine the extent to which our methodology addresses them.

\begin{itemize}[noitemsep, leftmargin=*]
\item \citet{Benjamini2014discussion} draw a distinction between confidence intervals and p-values reported in abstracts. \citet{jagerleek2014falsepositive} work with p-values, while we only work with z-scores converted from reported confidence intervals (as described in Footnote~\ref{footnote:ci_to_z}). \citet{Ioannidis2014discussion} and \citet{Benjamini2014discussion} note that only extracting p-values from abstract both ignores vast amount of published studies that only reported confidence intervals, and results in selecting systematically lower p-values.
\item \citet{Gelman2014discussion} emphasize the shortcomings of defining false discoveries with respect to a point null hypothesis ($\mu_i=0$) and instead advocate for the importance of estimating Type S (sign) and Type M (magnitude) errors~\citep{gelman2014typem, gelman2000types}. By working in terms of z-scores we are able to go beyond the point null, and indeed, as discussed in Footnote~\ref{footnote:point_null}, most of our estimands do not depend at all on the existence of exact null effects. The sign-agreement probability estimand in Section~\ref{sec:prob_same_sign} relates to Type S errors. The effect size replication probability estimand (Section~\ref{sec:effect_repl_prob}) and the symmetrized posterior mean (Section~\ref{sec:symm_posterior_mean}) are closely connected to Type M errors.
\item \citet{Ioannidis2014discussion} questions the assumption in~\citet{jagerleek2014falsepositive} that the distribution of alternative p-values follows a Beta distribution. By considering three broad convex classes of priors in Assumption~\ref{assum:class_priors}, we are able to assess sensitivity to specific distributional assumptions on the prior.
\item \citet{Ioannidis2014discussion} and~\citet{Goodman2014discussion} note that the analysis of~\citet{jagerleek2014falsepositive} provides estimates for effects that are reported in abstracts and that such effects are not representative of the science-wise record. For this reason, it is important to interpret $G$ in~\eqref{eq:snr_prior} as pertaining specifically to effects that could have appeared in abstracts.
\item \citet{Goodman2014discussion} notes that p-values in abstracts can be highly correlated. We seek to avoid this pitfall by randomly choosing one p-value per abstract (see Supplement~\ref{sec:medline}). Moreover, \citet{Goodman2014discussion} notes that some p-values in abstracts are not for the main findings. This concern also applies to our analysis, since~\citet{BarnetteMEDLINE2019} do not extract information about which findings are primary. Future work could redo the extraction using, e.g., large language models, enabling subsetting to primary effects.
\item Throughout, we take the reported z-scores and standard errors at face value. In particular, we do not assess whether studies lack external validity, are confounded~\citep{schuemie2014discussion}, or ignore some sources of uncertainty~\citep{Cox2014discussion, Gelman2014discussion}. In this sense, our reported results should be interpreted as optimistic upper bounds on the replicability of results appearing in abstracts.
\end{itemize}
\beginsupptables
\section{Existing terminology for truncation mechanisms}
In table~\ref{tab:terminology_truncation_model}, We summarize some of the existing terminologies for the two truncation mechanisms we considered in Section~\ref{subsec:two_trunc_models}.
\begin{table}
\caption{Terminology used in the literature to refer to two alternative truncation mechanisms.}
\begin{tabular}{lll}
\toprule
Model~\eqref{eq:end_truncation} & Model~\eqref{eq:per_unit_truncation} & Reference\\ 
\toprule
\makecell[l]{Truncated mixture of\\untruncated densities} & 
\makecell[l]{Mixture of\\truncated densities} & 
\citet{bohning2006equivalence}\\
\midrule
\makecell[l]{Random-parameter\\truncated sampling model} & 
\makecell[l]{Fixed-parameter\\truncated sampling model} & 
\makecell[l]{\citet{yekutieli2012adjusted},\\\citet{rasines2022empirical}}\\ 
\midrule
Joint selection & Conditional selection & \citet{woody2022optimal}\\
\midrule
End truncation & Per-unit truncation & Here \\
\bottomrule
\end{tabular}
\label{tab:terminology_truncation_model}
\end{table}
\section{Further computational considerations}
\label{sec:comp_details}
\subsection{Discretization of prior classes:}
\label{sec:discr_G}
The prior classes we considered in Assumption~\ref{assum:class_priors} are infinite-dimensional, so our actual implementation approximates these by convex hulls of finite dictionaries. Specifically, for each class of random effects, the following finite dictionaries of distributions are used to allow tractable computation:
\begin{itemize}
    \item $\mathcal{G}^{\mathrm{sN}}$: We construct a finite mixture model where each component is a normal distribution with mean zero and variance $\sigma^2$, and $\sigma$ varies over a discretized grid of positive values. Specifically, we define a geometrically spaced grid $\left \{\sigma_i  \right \}_{i=1}^K$ where $\sigma_i = \sigma_\text{min} \cdot \gamma^{i-1}$ for $i=1,...,K$, with a constant factor $\gamma > 1$. The grid starts at a predetermined $\sigma_\text{min}$ and increases until it meets or exceeds a fixed $\sigma_\text{max}$. The number of grid points $K$ is the smallest integer such that $\sigma_\text{min} \cdot \gamma^{K-1} \geq \sigma_\text{max}$, i.e., $K=\lceil \log(\sigma_\text{max}/\sigma_\text{min})/\log(\gamma) \rceil$. In our setting, we set $\sigma_\text{min} = 0.001$, $\sigma_\text{max} = 100$ with $\gamma = 1.2$ to ensure wide coverage while keeping computational cost low. 
    \item $\mathcal{G}^{\mathrm{unm}}$: To approximate the class of all distributions with unimodal densities centered at zero, we rely on Khinchin's theorem, which states that any unimodal distribution is a scale mixture of symmetric uniform distributions. Each component of the scale mixture is a uniform distribution on $[-a, a]$, where $a$ is in a discretized grid of positive values. Following the same discretization approach as for $\mathcal{G}^{\mathrm{sN}}$, we generate a geometrically spaced grid $\left \{a_i  \right \}_{i=1}^K$ that starts from $a_\text{min}$, with each subsequent value scaled by a constant factor $\gamma > 1$ till it it reaches or surpasses $a_\text{max}$. The number of grid points $K$ is determined same way as for $\mathcal{G}^{\mathrm{sN}}$. We choose $a_\text{min} = 0.001$, $a_\text{max} = 100$ with $\gamma = 1.2$ so that our grid is sufficiently fine and wide for good approximation.
    \item $\mathcal{G}^{\mathrm{all}}$: The class of all distributions on $\mathbb R$ with a density can be approximated arbitrarily well by finite normal mixtures, which are known to be universal density approximators. Therefore, we create a normal location-scale mixture consisting of two types of components: (1) a location component, comprising normal distribution with means $\mu$ varying over a positive grid and a small fixed standard deviation $std$ ,and (2) a scale component, corresponding to the zero-mean normal scale mixture defined for $\mathcal{G}^{\mathrm{sN}}$. \\
    The location grid is defined as an equispaced set over [$\mu_\text{min}$, $\mu_\text{max}$] with spacing $\Delta\mu = std/4$, ensuring a sufficiently dense grid. We pick $\mu_\text{min}=0$, $\mu_\text{max}=12$, $std = 0.05$ for our setting. The scale component reuses the same $\left \{\sigma_i  \right \}_{i=1}^K$ grid as in $\mathcal{G}^{\mathrm{sN}}$.
\end{itemize}

\subsection{Details on the computational aspect of tilting operation}
\label{subsec:tilting_computation}
Following our computational approximation of $\mathcal{G}$ by convex hulls of finite dictionaries, we now extend the selective tilting equivalences of Section~\ref{sec:methodology} to this setting. This ensures that our inference procedure remains valid when implemented using convex hull representations. 
\subsubsection{Discretization of the tilted prior classes}
\label{sec:discr_tilted_G}
From above, we follow the computational strategy from~\citet{ignatiadis2022confidence} that fix a dictionary of priors $G_1,\dotsc,G_K$ and then take as the convex class the convex hull of the dictionary, that is,
$\mathcal{G} = \mathrm{ConvexHull}(G_1,\dots,G_K)$. The upshot is that any $G \in \mathcal{G}$ can be parameterized as $G = \sum_{j=1}^K \pi_j G_j$ where $(\pi_1,\dots,\pi_K)$ lie on the probability simplex ($\pi_j \geq 0$ and $\sum_{j=1}^K \pi_j = 1$), a constraint that can be handled by convex programming solvers.

Moreover, we can show that if we start with such a dictionary for $\mathcal{G}$, then we can lift all our computations on the dictionary $\Tilt{G_1},\dotsc,\Tilt{G_K}$ and in particular, we have the following.
\setcounter{prop}{0}
\renewcommand{\theprop}{S\arabic{prop}}
\begin{prop}
\label{prop:convex_hull_equivalence}
Suppose $\mathcal{G}= \mathrm{ConvexHull}(G_1,\dots,G_K)$, then
$$
\Tilt{\mathcal{G}} = \mathrm{ConvexHull}(\Tilt{G_1},\dotsc,\Tilt{G_K}). 
$$
\end{prop}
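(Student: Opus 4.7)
The plan is to prove the two set inclusions $\Tilt{\mathcal{G}} \subseteq \mathrm{ConvexHull}(\Tilt{G_1},\dots,\Tilt{G_K})$ and $\mathrm{ConvexHull}(\Tilt{G_1},\dots,\Tilt{G_K}) \subseteq \Tilt{\mathcal{G}}$, by working directly from the definition of the tilting map in~\eqref{eq:tilt_G}. The computation boils down to showing that the map between the simplex coordinates $(\pi_1,\ldots,\pi_K)$ of $G$ and the simplex coordinates $(\tilde{\pi}_1,\ldots,\tilde{\pi}_K)$ of $\Tilt{G}$ is a bijection of the probability simplex onto itself. Throughout, I abbreviate $Z_j := \int \Phi(\selection;\mu)\, G_j(d\mu)$, noting that $Z_j \in (0,1]$ under the mild nondegeneracy $\int_{\selection} dz > 0$ assumed in Assumption~\ref{assum:publication_bias}.

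For the forward inclusion, I would take an arbitrary $G \in \mathcal{G}$ and write $G = \sum_{j=1}^K \pi_j G_j$ with $\pi \in \Delta^{K-1}$. Applying~\eqref{eq:tilt_G} and using linearity of integration in the denominator, I would rearrange to get
\begin{equation*}
\Tilt{G}(d\mu) \;=\; \frac{\sum_j \pi_j\, \Phi(\selection;\mu) G_j(d\mu)}{\sum_j \pi_j Z_j} \;=\; \sum_{j=1}^K \tilde{\pi}_j\, \Tilt{G_j}(d\mu), \quad \tilde{\pi}_j := \frac{\pi_j Z_j}{\sum_k \pi_k Z_k}.
\end{equation*}
Since each $\tilde{\pi}_j \geq 0$ and they sum to $1$, this exhibits $\Tilt{G}$ as a convex combination of $\Tilt{G_1},\ldots,\Tilt{G_K}$.

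For the reverse inclusion, given any $\tilde{G} = \sum_{j=1}^K \tilde{\pi}_j \Tilt{G_j}$ with $\tilde{\pi} \in \Delta^{K-1}$, I would define candidate weights via the inverse map
\begin{equation*}
\pi_j \;:=\; \frac{\tilde{\pi}_j / Z_j}{\sum_{k=1}^K \tilde{\pi}_k / Z_k},
\end{equation*}
which is well defined because each $Z_j > 0$ and at least one $\tilde{\pi}_j > 0$. Setting $G = \sum_j \pi_j G_j \in \mathcal{G}$, a direct substitution using the forward computation shows that the tilt of $G$ recovers $\tilde{G}$; algebraically, the normalizing constants $Z_j$ and $\sum_k \pi_k Z_k$ cancel out exactly as needed. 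This gives $\tilde{G} = \Tilt{G} \in \Tilt{\mathcal{G}}$ and completes the proof.

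I do not anticipate a major obstacle: the argument is a clean bookkeeping computation about how mixture weights transform under the tilting operation, with the key observation being that the denominator of $\Tilt{\cdot}$ is linear in $G$ so that it factors through mixtures. The only point needing a brief sanity check is the strict positivity of each $Z_j$, which follows because $\Phi(\selection;\mu) > 0$ for all $\mu$ whenever $\selection$ has positive Lebesgue measure (since the folded normal has full support on $\mathbb{R}_{\geq 0}$), guaranteeing that all the denominators encountered above are strictly positive.
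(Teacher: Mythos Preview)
Your proposal is correct and matches the paper's proof essentially line for line: both establish the two inclusions by exhibiting the explicit forward map $\tilde{\pi}_j = \pi_j Z_j / \sum_k \pi_k Z_k$ and its inverse $\pi_j = (\tilde{\pi}_j/Z_j) / \sum_k (\tilde{\pi}_k/Z_k)$ on the simplex, with the paper using the notation $\int \Phi(\selection;\mu) G_j(d\mu)$ in place of your $Z_j$. Your remark on the strict positivity of each $Z_j$ is a nice addition that the paper leaves implicit.
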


Some care is needed, however, because the bijection between $\Tilt{\mathcal{G}}$ and $\mathcal{G}$ is in fact not linear, that is, in general,
$$
\Tilt{\sum_{j=1}^K \pi_j G_{j}} \neq \sum_{j=1}^K \pi_j \Tilt{G_j}.
$$

\subsubsection{Establishing the mapping between \texorpdfstring{$\mathcal{G}$}{G} and \texorpdfstring{$\Tilt{\mathcal{G}}$}{Tilted G}}
\label{subsec: mapping_convex_hull}
Our confidence interval builds on the techniques of IW, which involve solving a convex optimization problem over a discretized convex class of priors. Specifically, Consider model~\eqref{eq:end_truncation} with prior $G \in \mathcal{G}$, we can leverage the observational equivalence in Theorem~\ref{theo:observational_equivalence} by working under model~\eqref{eq:per_unit_truncation} with priors $\Tilt{G} \in \Tilt{\mathcal{G}}$ to utilize the techniques. Based on Proposition~\ref{prop:functional_equivalence}, all estimands of interest in this paper can be reparametrized with some $\Tilt{G}$. By doing so, we lift all our computations on the tilted space $\Tilt{\mathcal{G}}$, where the optimizer returns a $\Tilt{G} = \sum_{i=1}^{K}\widetilde\pi_i \Tilt{G_i} \in \Tilt{\mathcal{G}}$. However, to acquire the confidence interval for each estimand with some prior $G = \sum_{i=1}^{K} \pi_iG_i$, we need to establish the mapping between $G$ and $\Tilt{G}$ so that we can properly apply the probabilities $\left \{\widetilde\pi_i  \right \}_{i=1}^K$ from the optimizer.
\paragraph{Mapping between $G = \sum_{i=1}^{K} \pi_iG_i$ and $\Tilt{G} = \sum_{i=1}^{K}\widetilde\pi_i \Tilt{G_i}$.} \mbox{}\\
By the definition of $\Tilt{G}$ in~\eqref{eq:tilt_G}:
\begin{align*}
\Tilt{G} &= \frac{\sum_{i=1}^{n} \Phi(\selection; \mu)\pi_iG_i }{ \sum_{j=1}^{n}\pi_j\int \Phi(\selection; \mu) G_j(d\mu)}=\sum_{i=1}^{n}\frac{\pi_i\PP[G_i]{\abs{Z} \in \selection}}{ \sum_{j=1}^{n}\pi_j\PP[G_j]{\abs{Z} \in \selection}} \frac{\Phi(\selection; \mu)G_i}{\PP[G_i]{\abs{Z} \in \selection}} \\
&=\sum_{i=1}^{n}\widetilde\pi_i \Tilt{G_i},
\end{align*}
where $\PP[G_i]{\abs{Z} \in \selection} =\int \Phi(\selection; \mu) G_i(d\mu)$ and $\widetilde\pi_i =\frac{\pi_i\PP[G_i]{\abs{Z} \in \selection}}{ \sum_{j=1}^{n}\pi_j\PP[G_j]{\abs{Z} \in \selection}}$.

\noindent So the final mapping between $\pi_i$ and $\widetilde\pi_i$ is as follows:
$$
\widetilde\pi_i =\frac{\pi_i\PP[G_i]{\abs{Z} \in \selection}}{ \sum_{j=1}^{n}\pi_j\PP[G_j]{\abs{Z} \in \selection}},\;\;\; 
\pi_i= \frac{\widetilde\pi_i/\PP[G_i]{\abs{Z} \in \selection}}{\sum_{j=1}^{n} \frac{\widetilde\pi_j}{\PP[G_j]{\abs{Z} \in \selection}}} .
$$

\noindent Based on this observation, we can rewrite every quantity of interest in the results sections in terms of both $\pi_i$ and $\widetilde\pi_i$ as follows:
\paragraph{Extended marginal density.}
\begin{align*}
\frac{ f_G(z)}{\PP[G]{\abs{Z} \in \selection} } &= \frac{\sum_{i=1}^{K} \pi_if_{G_i}(z)}{\sum_{j=1}^{K}\pi_j\PP[G_j]{\abs{Z} \in \selection}} \\
&=\sum_{i=1}^{K} \widetilde\pi_i\frac{f_{G_i}(z)}{\PP[G_i]{\abs{Z} \in \selection}}.
\end{align*}

\paragraph{Marginal density.}
\begin{align*}
f_G(z) &= \sum_{i=1}^{K} \pi_if_{G_i}(z)\\
&=  \frac{\sum_{i=1}^{K}\widetilde\pi_i\frac{f_{G_i}(z)}{\PP[G_i]{\abs{Z} \in \selection}}}{\sum_{j=1}^{K} \frac{\widetilde\pi_j}{\PP[G_j]{\abs{Z} \in \selection}}}. \\
\end{align*}

\paragraph{Power related quantities.}
Denote power function as $\beta(\mu) = 1- \Phi(1.96-\mu)+\Phi(-1.96-\mu)$ for a two-sided z-test, for any $B \subset [0.05,1]$, we consider the quantity $\PP[G]{\beta(\mu) \in B}$:
\begin{align*}
\PP[G]{\beta(\mu)  \in B} &= \sum_{i=1}^{K}\pi_i\PP[G_i]{\beta(\mu)  \in B} \\
&= \sum_{i=1}^{K}\frac{\widetilde\pi_i/\PP[G_i]{\abs{Z} \in \selection}}{\sum_{j=1}^{K} \frac{\widetilde\pi_j}{\PP[G_j]{\abs{Z} \in \selection}}}\PP[G_i]{\beta(\mu) \in B}.
\end{align*}

\paragraph{General Posterior quantity.}
Consider a general posterior functionals of $\mu$, denoted by $\theta_G(z) =\mathbb{E}[h(\mu) \mid \abs{Z}=z] $ in the following form:
$$
\theta_G(z) = \frac{\int h(\mu)\varphi^\text{fold}(z;\mu) G(\dd\mu) }{f_G(z)}.
$$
Then,
\begin{align*}
\theta_G(z) &=\frac{\sum_{i=1}^{K}\pi_i\int h(\mu)\varphi^\text{fold}(z;\mu) G_i\dd\mu}{\sum_{j=1}^{K} \pi_jf_{G_j}(z)}\\ 
&=  \sum_{i=1}^{K}\frac{\widetilde\pi_i\frac{f_{G_i}(z)}{\PP[G_i]{\abs{Z} \in \selection}}}{\sum_{j=1}^{K} \widetilde\pi_j\frac{f_{G_j}(z)}{\PP[G_j]{\abs{Z} \in \selection}}}\theta_{G_i}(z).\\
\end{align*}

\subsubsection{Exact equivalence of functionals on convex hulls}
\label{subsec:functional_equivalence_convex_hull}
We now extend the Proposition~\ref{prop:functional_equivalence} in Section~\ref{subsec:simple_tilting} to the scenario where we take $\mathcal{G}$ and thus $\Tilt{\mathcal{G}}$ as the convex hull. Let $T(\cdot) = N(\cdot)/D(\cdot): \mathcal{G} \to \mathbb R$ , where $N(\cdot)$ and $D(\cdot)$ are linear functionals, that is, $N(G)= \int \nu(\mu)G(\dd\mu)$ and $D(G) = \int \delta(\mu)G(\dd\mu)$ for some known $\nu(\cdot)$, $\delta(\cdot)$. By the definition of $\Tilt{T}$ in~\eqref{eq:functional_tilt}, we have the following.

\begin{prop}[Functional equivalence on convex hulls]
Suppose \\
$\mathcal{G}= \mathrm{ConvexHull}(G_1,\dots,G_K)$
and $\Tilt{\mathcal{G}} = \mathrm{ConvexHull}(\Tilt{G_1},\dotsc,\Tilt{G_K})$, then
\begin{equation*}
    \Tilt{T}(\Tilt{G}) = T(G) \;\text{ for all }\; G \in \mathcal{G}.
\end{equation*}
\label{prop:functional_equivalence_convexhull}
\end{prop}
\noindent Computationally, for linear and ratio functionals of $G$, evaluating them on the tilted space $\Tilt{\mathcal{G}}$ is quantitatively the same as the evaluating them with the corresponding $G$. 

\subsection{Computational details for \texorpdfstring{$F$-Localization}{F-Localization}}
\label{subsec:floc_comp}
Here we address how to compute these intervals in practice. Our overall approach is to build on the discretization and computation strategies of IW. We first discretize $\mathcal{S}$ as $s_1,\ldots,s_L$ and 
replace the supremum over $t \in \mathcal{S}$ in~\eqref{eq:KS-ball} by a maximum over $t \in \cb{s_1,\ldots,s_L}$.\footnote{This step is conservative and can only make our intervals longer. By default we take about $1,000$ grid points starting from the smallest up to the largest among truncated folded observations, filling in the interior grid points as sample quantiles at intermediate equispaced levels.} Second, we discretize
$\mathcal{G}$ as the convex hull of finite dictionary $\left \{G_1,...,G_K\right\}$. In Supplement~\ref{sec:discr_G} we provide this discretization for each of the classes in Assumption~\ref{assum:class_priors}. By Proposition~\ref{prop:convex_hull_equivalence} in Supplement~\ref{sec:discr_tilted_G}, we can write any $\tG \in \Tilt{\mathrm{ConvexHull}(G_1,\ldots,G_K)}$ as $\tG = \sum_{j=1}^{K} \tilde{\pi}_j \Tilt{G_j}$.

Suppose moreover that $T(\cdot)$ is the ratio functional $N(\cdot)/D(\cdot)$ with numerator $N(G) = \int \nu(\mu)G(\mu)$ and $D(G)=\int \delta(\mu)G(\dd\mu)$. With the discretization above, as explained in IW, we can solve the discretized version of the optimization problem in~\eqref{eq:DKW_F_loc_lower}
as a simple linear program using the~\citet{CharnesCooper1962} transformation. It takes the following form in our setting:
\[
\begin{aligned}
\underset{\zeta \geq 0,\{\tilde{\pi}_j\}_{j=1}^K \geq 0}{\text{minimize}}\quad
& \sum_{j=1}^K \tilde{\pi}_j \int \nu(\mu)\,\Phi(\selection;\mu)^{-1}\,\Tilt{G_j} (\dd\mu) \\[2pt]
\text{subject to}\quad
& \sum_{j=1}^K \tilde{\pi}_j = \zeta, \;\; \sum_{j=1}^K \tilde{\pi}_j \int\delta(\mu)\,\Phi(\selection;\mu)^{-1}\,\Tilt{G_j}(\dd\mu) = 1,\\
& \left\lvert\sum_{j=1}^K\tilde{\pi}_j\int_0^{s_l} f_{\Tilt{G_j}}^B(z)\dd z - \zeta\hat{F}_{n_\text{trun}}(s_{l})\right\rvert \leq \sqrt{\frac{\log(2/\alpha)}{2n_\text{trun}}}\; \text{for $l = 1 \ldots L$}.
\end{aligned}
\]

\subsection{Computational details for AMARI}
\label{subsec:amari_comp}
The basic idea of AMARI is to form confidence sets by test inversion. Specifically, suppose we seek to test $H_0: T(G)=c$ for some $c \in \RR$. Then the returned confidence interval consists of all $c$ that are not rejected. Now, for fixed $c$, using~\eqref{eq:functional_tilt}, Proposition~\ref{prop:functional_equivalence}, by rearranging (following~\citet{Fieller1940} and \citet{AndersonRubin1949}, $H_0$ is equivalent to the following null hypothesis
$$
H_0\,:\, \int \nu(\mu)\Phi(\selection; \mu)^{-1}\Tilt{G}(\dd\mu) - c \int \delta(\mu) \Phi(\selection; \mu)^{-1}\Tilt{G}(\dd\mu) = 0.
$$
The upshot is that we are testing whether a linear functional of $\Tilt{G}$ is equal to $0$. 
Then, IW build on classical ideas of~\citet{donoho1994statistical} on affine minimax estimators for linear functionals over convex spaces alongside ideas from bias-aware inference~\citep{Imbens2004bias, Armstrong2018,Imbens2019bias} to test the above hypotheses for all $c \in \RR$.

\subsection{Details on inference method for publication probability}
\label{sec:omega_detail}
Building on the definition of the risk ratio $\omega = \omega_1 \cdot \omega_2$ introduced in Section~\ref{sec:publication_prob}, we now detail the inference procedures. We can estimate $\omega_1$ by using all published absolute z-scores (before truncation to $\selection$),
$\{|Z_{i_1}|\;, \dotsc,\; |Z_{i_{n_{\text{published}}}}|\}$ (recall Fig.~\ref{fig:selectionProcess}) as follows:
$$ \widehat{\omega}_1 = \frac{ \#\cb{i \in \cb{i_1, \dotsc, i_{n_{\text{published}}}}\,:\, \abs{Z_i} \geq 1.96}}{\#\cb{i \in \cb{i_1, \dotsc, i_{n_{\text{published}}}}\,:\, \abs{Z_i} < 1.96}}.$$

Let us go back to our motivating example of the MEDLINE abstracts. We empirically estimate the numerator of $\omega_1$ using z-scores that exceed the 1.96 threshold. Denoting this proportion as $\widehat{p} = \frac{\#\cb{i \in \cb{i_1, \dotsc, i_{n_{\text{published}}}}\,:\, \abs{Z_i} \geq 1.96}}{n_\text{published}}$, we get $\widehat{\omega}_1 = \frac{\widehat{p}}{1-\widehat{p}} $. To construct a $97.5\%$ for $\PP{\abs{Z} \geq 1.96 \mid D=1}$ using the Wald's interval: $$\widehat{p}\pm z_{\alpha/2}\sqrt{\frac{\widehat{p}(1-\widehat{p})}{n}},\; \alpha = 0.025.$$ As $\omega_1$ is a strictly monotonic increasing function of $\PP{\abs{Z} \geq 1.96 \mid D=1}$, the corresponding $97.5\%$ confidence interval for $\omega_1$ is obtained by applying the same functional transformation to the confidence bounds of $\PP{\abs{Z} \geq 1.96 \mid D=1}$. 

We can conduct inference for $\omega_2$ as in Section~\ref{sec:methodology} as it is the ratio 
of two linear functionals of $G$. By leveraging the methodologies in Section~\ref{sec:methodology}, we can obtain a $97.5\%$ confidence interval for $\PP[G]{\abs{Z} < 1.96}$ under various prior classes. Since $\omega_2$ similarly represents a strictly monotonic transformation of this measure, its $97.5\%$ confidence interval is obtained by applying the same functional transformation to the confidence bounds of $\PP[G]{\abs{Z} < 1.96}$. Consequently, we compute the product of the lower bounds and the product of the upper bounds from the $97.5\%$ confidence intervals for $\omega_1$ and $\omega_2$. By the Bonferroni adjustment, this yields a  $95\%$ confidence interval for $\omega$.

The confidence interval for $\omega_1$ is substantially narrower than that for $\omega_2$. This is because $\omega_1$ is estimated directly from the full sample of published z-scores, thus its uncertainty is driven only by sampling variability. In contrast, $\omega_2$ depends on the unknown SNR distribution $G$, which we infer from selectively reported samples using the method described in Section~\ref{sec:methodology}. This inference problem is inherently more challenging, as it involves both deconvolution and extrapolation. As a result, most of the uncertainty in $\omega$ is driven by uncertainty in estimating $\omega_2$.

\section{Proofs}
For $\mu\sim G$, We denote the distribution of $-\mu$ by $G^-$. So we have $\Symm{G}(\dd\mu)=\frac{1}{2}G(\dd\mu)+\frac{1}{2}G^-(\dd\mu)$. 
\subsection{Proof of Proposition~\ref{prop:iff_cond_a1}}
\begin{proof}
($\implies$):
   Suppose there exists a constant $a \in (0,1]$ such that $\pi(z) = \mathbb{P}[D=1 \mid \abs{Z} =z] = a$, almost everywhere on $\selection$. For any measurable set $A \subset \selection$,
    \begin{align*}
        \mathbb{P}[\abs{Z} \in A \mid \abs{Z} \in \selection, D=1] &= \frac{\mathbb{P}[\abs{Z} \in A, \abs{Z} \in \selection, D=1]}{\mathbb{P}[\abs{Z} \in \selection, D=1]} \\
        &=\frac{\mathbb{P}[\abs{Z} \in A, D=1]}{\mathbb{P}[\abs{Z} \in \selection, D=1]} \\
        &=\frac{\mathbb{P}[\abs{Z} \in A]\mathbb{P}[D=1 \mid \abs{Z} \in A]}{\mathbb{P}[\abs{Z} \in \selection]\mathbb{P}[D=1 \mid \abs{Z} \in \selection]} \\
        &=\frac{\mathbb{P}[\abs{Z} \in A]a}{\mathbb{P}[\abs{Z} \in \selection]a} \\
        &=\frac{\mathbb{P}[\abs{Z} \in A]}{\mathbb{P}[\abs{Z} \in \selection]} \\
        &=\mathbb{P}[\abs{Z} \in A \mid \abs{Z} \in \selection].
    \end{align*}
The second equality holds since $\abs{Z} \in A$ implies $\abs{Z} \in \selection$. The fourth equality utilizes the fact that $ \mathbb{P}[D=1 \mid \abs{Z} \in A] = \mathbb{E}[\pi(z) \mid \abs{Z} \in A]  = \mathbb{E}[ a\mid \abs{Z} \in A] = a$. And similarly for $\mathbb{P}[D=1 \mid \abs{Z} \in \selection]$.

Since this equality holds for any $A \subset \selection$, we have $\cb{\abs{Z} \;\mid \;\p{\abs{Z} \in \selection},\;D=1}\;\;\;\; \stackrel{\mathcal{D}}{=}\;\;\;\; \cb{\abs{Z} \;\mid \; \p{\abs{Z} \in \selection}}$.

($\impliedby$): Suppose for any measurable set $A \subset \selection$, $\mathbb{P}[\abs{Z} \in A \mid \abs{Z} \in \selection, D=1] = \mathbb{P}[\abs{Z} \in A \mid \abs{Z} \in \selection]$. By decomposition above, this implies that $\mathbb{P}[D=1 \mid \abs{Z} \in A] = \mathbb{P}[D=1 \mid \abs{Z} \in \selection], \forall A \subset \selection$. Denote $\mathbb{P}[D=1 \mid \abs{Z} \in \selection] = c$, for some $c \in (0,1]$:
$$
\mathbb{P}[D=1 \mid \abs{Z} \in A] = \frac{\int_{A}\pi(z)f_{G}(z)\dd z}{\int_{A}f_{G}(z)\dd z} = c,
$$
which means $\int_{A}(\pi(z)-c)f_{G}(z)\dd z = 0, \forall A \subset \selection$. Hence $\cb{z \in \selection: \pi(z)\neq c}$ must be a measure-0 set, so $\pi(z)= c$ a.e. on $\selection$.    
\end{proof}

\subsection{Proof of Theorem~\ref{theo:Identifiability}}
\begin{proof}
By Assumption~\ref{assum:publication_bias} and Proposition~\ref{prop:iff_cond_a1},
    the density of $\cb{\abs{Z} \;\mid \;\p{\abs{Z} \in \selection},\;D=1}$ is the same as that of $\cb{\abs{Z} \;\mid \;\p{\abs{Z} \in \selection}}$, and is given by
    \begin{align*}
    f_{\mathcal{S}, G} (z) &:= \frac{\pi(z)  f_G(z) }{\int_\mathcal{S} \pi(z)  f_G(z) \dd z } \stackrel{(*)}{=}\frac{1}{C_{\selection, G}}\int_{\mu >0}\varphi^{\text{fold}}(z;\mu)\Fold{G}(\dd\mu) \qquad\text{for} \,\,z \in \mathcal{S}.  
    \end{align*}
    where ($*$) follows from Lemma~\ref{lemm:fold-marginal}, and $$C_{\selection, G}=\int_\mathcal{S} \int \varphi^{\text{fold}}(z;\mu)\Fold{G}(\dd\mu)\dd z$$ is the normalizing constant.
    Now suppose there exists $H$ such that
    \[
    {f}_{\mathcal{S},H}(z) =\frac{1}{C_{\selection, H}} \int_{\mu >0}\varphi^{\text{fold}}(z;\mu)\Fold{H}(\dd\mu) = f_{\mathcal{S}, G}(z) \qquad \forall \,z\in \mathcal{S}.
    \]
    We want to show that this implies $\Fold{G} = \Fold{H}$, hence the map: $G\mapsto f_{\selection, G}$ is injective and $\Fold{G}$ is identified by the observable $\cb{\abs{Z} \;\mid \;\p{\abs{Z} \in \selection},\;D=1}$. We proceed as follows: (1) first we consider the analytic continuation of $f_{\selection, G}$ and $f_{\selection, H}$ to $\RR$, and show that they also agree on $\RR$; (2) we then show that the agreement of the Fourier transforms of the analytic continuation, $f^*_{\mathcal{S},H}={f}^*_{\mathcal{S},G}$, implies $\Symm{\Fold{H}}=\Symm{\Fold{G}}$; (3) since $\Symm{F} = \frac{1}{2}(F+F^-)$ for any $F\in \mathcal{P}(\RR_{\geq 0})$, it then follows that $\Fold{H} = {\Fold{G}}$ as desired.   
    
    (1): we first note that $\varphi^{\text{fold}}$ (and hence $f_{\mathcal{S}, G}$ and $f_{\mathcal{S}, H}$) is analytic on $\RR$. Since $\mathcal{S}$ has nonzero Lebesgue measure by Assumption~\ref{assum:publication_bias}, $\mathcal{S}$ is uncountable, and there exists an accumulation point of $\RR$ in $\mathcal{S}$. Indeed, write $\RR=\cup_{i=1}^\infty A_i$, where $A_i$'s are disjoint left half-open bounded intervals, so $\mathcal{S}= \mathcal{S}\,\cap\, \cup_{i=1}^\infty A_i = \cup_{i=1}^\infty S_i$, where $S_i=\mathcal{S}\, \cap\, A_i$ is bounded. Then at least one of $S_i$'s is uncountable (otherwise $\mathcal{S}$ is countable). Consider an arbitrary infinite sequence in $S_i$. By Bolzano–Weierstrass theorem, there exists a convergent subsequence, and hence an accumulation point exists in $S_i \subset \mathcal{S}$. This allows us to conclude $f_{\mathcal{S},G}(z)=f_{\mathcal{S},H}(z)$ for all $z\in\RR$ by invoking the identity theorem. 
    
    (2): The previous step justifies computing Fourier transforms on all of $\RR$:
    \begin{equation*}
    \begin{aligned}
        f^*_{\mathcal{S}, G}(t)&=\int_\RR f^*_{\mathcal{S},G}(z)\dd z
        =\int_\RR e^{izt}   \int_{\mu >0} \frac{1}{C_{\selection, G}}\varphi^{\text{fold}}(z;\mu) \Fold{G}(\dd\mu) \dd z \\
        &= \frac{1}{C_{\selection, G}}\int_{\mu > 0} \Bigg[\int_\RR\varphi^{\text{fold}}(z;\mu) e^{izt}\dd z \Bigg]\Fold{G}(\dd\mu) \\
        &= \frac{1}{C_{\selection, G}}\int_{\mu > 0} \Bigg[\int_\RR\ \Big(\varphi(z;\mu)+\varphi(-z;\mu)\Big) e^{izt}\dd z \Bigg]\Fold{G}(\dd\mu)\\
        &= \frac{e^{-t^2/2}}{C_{\selection, G}}\int_{\mu > 0} \Big[ e^{i\mu t} + e^{-i\mu t}  \Big]\Fold{G}(\dd\mu)
        = \frac{2e^{-t^2/2}}{C_{\selection, G}} \int_{\mu>0} \cos(\mu t) \Fold{G}(\dd\mu) \\
        &= \frac{e^{-t^2/2}}{C_{\selection, G}} \Bigg[\int_{\mu>0} \cos(\mu t) \Fold{G}(\dd\mu) + \int_{\mu\leq 0} \cos(\mu t) \Fold{G}^-(\dd\mu)\Bigg] \\
        &= \frac{e^{-t^2/2}}{C_{\selection, G}} \Bigg[\int_{\RR} \cos(\mu t) \Symm{\Fold{G}}(\dd\mu) + i\int_\RR \sin(\mu t) \Symm{\Fold{G}}(\dd\mu) \Bigg] \\
        &= \frac{e^{-t^2/2}}{C_{\selection, G}} \int_{\RR} e^{i\mu t} \Symm{\Fold{G}}(\dd\mu) =: e^{-t^2/2} \cdot {\Symm{\Fold{G}/C_{\selection, G}}}^*(t). 
    \end{aligned}
    \end{equation*}
    Similarly,
    \begin{align*}
        {f}^*_{\mathcal{S},H}(t)&=\frac{2e^{-t^2/2}} {C_{\selection, H}} \int_{\mu>0} \cos(\mu t) \Fold{H}(\dd\mu) \\
    &=\frac{e^{-t^2/2}}{C_{\selection, H}} \int_{\RR} e^{i\mu t}\Symm{ \Fold{H}}(\dd\mu)=:e^{-t^2/2} \cdot{\Symm{\Fold{H}/C_{\selection, H}}}^*(t). 
    \end{align*}
    So
    \[
    {\Symm{\Fold{G}/C_{\selection, G}}}^*(t) = {\Symm{\Fold{H}/C_{\selection, H}}}^*(t) \quad \forall \, t \in \RR.
    \]
    It then follows that
    \[
    \frac{1}{C_{\selection, G}}\Symm{\Fold{G}} = \frac{1}{C_{\selection, H}} \Symm{\Fold{H}}.
    \]
    The fact the $\int_\RR \Symm{\Fold{G}} (\dd\mu) = \int_\RR \Symm{\Fold{H}} (\dd\mu)= 1$ implies $C_{\selection, G}=C_{\selection, H}$. So $\Symm{\Fold{G}}=\Symm{\Fold{H}}$.
    
\end{proof}

\subsection{Proof of Theorem~\ref{theo:observational_equivalence}}
\begin{proof}
Suppose $z \in \selection$, for otherwise, $f_G^A(z) = f_G^B(z) = 0$. Then:
$$ 
\begin{aligned}
f_{\Tilt{G}}^B(z) &=  \int  \frac{\varphi^{\text{fold}}(z;\mu)}{\Phi({\selection;\mu})}\Tilt{G}(\dd\mu) \\
& =  \frac{\int \varphi^{\text{fold}}(z;\mu)G(\dd\mu)}{\int \Phi({\selection;\mu})G(\dd\mu)} \\
& = f_G^A(z).
\end{aligned}
$$
\end{proof}

\subsection{Proof of Remark~\ref{rema:untilt}}
\begin{proof}
Suppose $z \in \selection$, then:
$$ 
\begin{aligned}
f_{\Untilt{\tG}}^A(z) &=  \frac{\int  \varphi^{\text{fold}}(z;\mu) \Untilt{\tG}(\dd\mu)}{\int_\selection\int \varphi^{\text{fold}}(z;\mu) \Untilt{\tG}(\dd\mu)\dd z} \\
&= \frac{\int  \varphi^{\text{fold}}(z;\mu) \Phi(\selection; \mu)^{-1}\tG(\dd\mu)/\int \Phi(\selection; \mu)^{-1}\tG(\dd\mu)}{\int_\selection\int \varphi^{\text{fold}}(z;\mu) \Phi(\selection; \mu)^{-1}\tG(\dd\mu)\dd z/\int \Phi(\selection; \mu)^{-1}\tG(\dd\mu)} \\
&= \frac{\int  \varphi^{\text{fold}}(z;\mu) \Phi(\selection; \mu)^{-1}\tG(\dd\mu)}{\int_\selection\int \varphi^{\text{fold}}(z;\mu) \Phi(\selection; \mu)^{-1}\tG(\dd\mu)\dd z} \\
&= \frac{\int  \varphi^{\text{fold}}(z;\mu) \Phi(\selection; \mu)^{-1}\tG(\dd\mu)}{\int \Phi(\selection; \mu)^{-1} \int_\selection \varphi^{\text{fold}}(z;\mu) \dd z\tG(\dd\mu)} \\
&= \frac{\int  \varphi^{\text{fold}}(z;\mu) \Phi(\selection; \mu)^{-1}\tG(\dd\mu)}{\int \Phi(\selection; \mu)^{-1} \Phi(\selection; \mu)\tG(\dd\mu)} \\
&= \frac{\int  \varphi^{\text{fold}}(z;\mu) \Phi(\selection; \mu)^{-1}\tG(\dd\mu)}{\int \tG(\dd\mu)} \\
&= \int \frac{\varphi^{\text{fold}}(z;\mu)}{\Phi(\selection; \mu)}\tG(\dd\mu) \\
&= f_{\tG}^B(z).
\end{aligned}
$$
The fourth equality applies Fubini's theorem, and the fifth equality is by the definition of $\Phi(\selection; \mu)$.
\end{proof}

\subsection{Proof of Proposition~\ref{prop:convex_equivalence}}
\begin{proof}
Take $H_1, H_2 \in \Tilt{\mathcal{G}}$ and $\lambda \in (0,1)$. We need to show that $\lambda H_1 + (1-\lambda)H_2 \in \Tilt{\mathcal{G}}$. To do so, it suffices to find $G \in \mathcal{G}$ such that $\lambda H_1 + (1-\lambda)H_2  = \Tilt{G}$.
Let $H_1 = \Tilt{G_1}, H_2 = \Tilt{G_2}$ for some $G_1,G_2 \in \mathcal{G}$. Then:
$$
\begin{aligned}
    \lambda H_1 + (1-\lambda)H_2&= \frac{\lambda\Phi(\selection; \mu) G_1}{ \int \Phi(\selection; \mu) G_1(\dd\mu)} + \frac{(1-\lambda)\Phi(\selection; \mu) G_2}{ \int \Phi(\selection; \mu) G_2(\dd\mu)} \\
    &= \Phi(\selection; \mu)\left(\frac{\lambda G_1}{ \int \Phi(\selection; \mu) G_1(\dd\mu)} + \frac{(1-\lambda) G_2}{ \int \Phi(\selection; \mu) G_2(\dd\mu)}\right)    
\end{aligned}
$$
Consider $G = \frac{1}{\beta}\left(\frac{\lambda G_1}{ \int \Phi(\selection; \mu) G_1(\dd\mu)} + \frac{(1-\lambda) G_2}{ \int \Phi(\selection; \mu) G_2(\dd\mu)}\right)$, where $\beta = \frac{\lambda}{\int \Phi(\selection; \mu) G_1(\dd\mu)}+\frac{1-\lambda}{\int \Phi(\selection; \mu) G_2(\dd\mu)}$. Since $G = \theta G_1+(1-\theta) G_2$  and $\theta =\frac{1}{\beta}\frac{\lambda}{\int \Phi(\selection; \mu) G_1(\dd\mu)} \in (0, 1)$, we have $G \in \mathcal{G}$ under the assumption that $\mathcal{G}$ is a convex class. Observe that: 
$$
\int \Phi(\selection; \mu) G(\dd\mu) = \frac{1}{\beta}\int \Phi(\selection; \mu)\left(\frac{\lambda G_1}{ \int \Phi(\selection; \mu) G_1(\dd\mu)} + \frac{(1-\lambda) G_2}{ \int \Phi(\selection; \mu) G_2(\dd\mu)}\right)(\dd\mu) = \frac{1}{\beta}.
$$
Hence,
$$
\begin{aligned}
    \Tilt{G}&= \frac{\Phi(\selection; \mu) G}{ \int \Phi(\selection; \mu) G(\dd\mu)} \\
    &= \Phi(\selection; \mu)\frac{1}{\beta}\left(\frac{\lambda G_1}{ \int \Phi(\selection; \mu) G_1(\dd\mu)} + \frac{(1-\lambda) G_2}{ \int \Phi(\selection; \mu) G_2(\dd\mu)}\right)\beta   \\
    &=\lambda H_1 + (1-\lambda)H_2.
\end{aligned}
$$
So our choice of $G$ satisfies all the requirements.
\end{proof}

\subsection{Proof of Proposition~\ref{prop:functional_equivalence}}
\begin{proof}
    Simply plug in the definition of $\Tilt{G}$:
    \begin{align*}
       \Tilt{T}(\Tilt{G}) &= \frac{ \int \nu(\mu)\Phi(\selection; \mu)^{-1}\Tilt{G}(\dd\mu)}{\int \delta(\mu) \Phi(\selection; \mu)^{-1}\Tilt{G}(\dd\mu)}  \\
        &= \frac{ \int \nu(\mu)G(\dd\mu)/\int \Phi(\selection; \mu)G(\dd\mu)}{\int \delta(\mu)G(\dd\mu)/\int \Phi(\selection; \mu)G(\dd\mu)} \\
        &=\frac{\int \nu(\mu)G(\dd\mu)}{\int \delta(\mu)G(\dd\mu)} = T(G).
    \end{align*}
\end{proof}

\subsection{Proof of Theorem~\ref{theo:ci_reduction}}
\begin{proof}
    By Theorem~\ref{theo:observational_equivalence}, we have that $f_G^A(\cdot) = f_{\Tilt{G}}^B(\cdot)$. Since $\abs{Z_1},\ldots, \abs{Z_n}$ are i.i.d, we have that the joint distribution of $(\abs{Z_1},\ldots, \abs{Z_n})$ is the same under both models:
    $$
    \mathbb{P}_G^A[\abs{Z_1},\ldots, \abs{Z_n}] =  \mathbb{P}_{\Tilt{G}}^B[\abs{Z_1},\ldots, \abs{Z_n}].
    $$
    From Proposition~\ref{prop:functional_equivalence}, since $\Tilt{T}(\Tilt{G}) = T(G)$, we have the following equivalence of events:
    $$
    \left\{T(G) \in \mathcal{I}(\abs{Z_1},\ldots, \abs{Z_n}) \right \} = \left\{\Tilt{T}(\Tilt{G}) \in \mathcal{I}(\abs{Z_1},\ldots, \abs{Z_n}) \right \}.
    $$
    Therefore 
    $$\mathbb P_G^A[ T(G) \in \mathcal{I}(\abs{Z_1},\ldots, \abs{Z_n})] = \mathbb P_{\Tilt{G}}^B[ \Tilt{T}(\Tilt{G}) \in \mathcal{I}(\abs{Z_1},\ldots, \abs{Z_n})].$$
\end{proof}

\subsection{Proof of Proposition~\ref{prop:IdentifiabilityAll}}
To prove Proposition~\ref{prop:IdentifiabilityAll}, we check each estimand separately:
\subsubsection{Marginal density}
For the marginal density $f_G(z)$, we have the following lemma:
\begin{lemm}[Representation via $\Fold{G}$]
Under \eqref{eq:publication_bias_model}, the marginal density of $\abs{Z}$ is given by:
$$f_G(z)=\int_{u >0}(\varphi(z;-u) + \varphi(z;u))\Fold{G}(\dd u).$$
So $f_G$ depends only on $\Fold{G}$ the distribution of $|\mu|$.
\label{lemm:fold-marginal}
\end{lemm}
\begin{proof}
    Denote $f_G(z)$ the marginal density of $\abs{Z}$ under \eqref{eq:publication_bias_model}, and $G^{\text{fold}}$ the distribution of $\abs{\mu}$ induced by $G$. Let $\varphi(z;\mu)$ be the normal density with mean $\mu$ and variance $1$ evaluated at $z$. We have that for $z\geq0$:
\begin{align*}
f_G(z) &= \int(\varphi(z;\mu) + \varphi(-z;\mu))G(\dd\mu) \\
&= \int_{\mu \geq0}(\varphi(z;\mu) + \varphi(-z;\mu))G(\dd\mu) + \int_{\mu <0}(\varphi(z;\mu) + \varphi(-z;\mu))G(\dd\mu) \\
&\text{Substitute u = $-\mu$ for the second integral} \\
&=  \int_{\mu \geq0}(\varphi(z;\mu) + \varphi(-z;\mu))G(\dd\mu) + \int_{u >0}(\varphi(z;-u) + \varphi(-z;-u))G(\dd(-u)) \\
&= \int_{\mu \geq0}(\varphi(z;\mu) + \varphi(z;-\mu))G(\dd\mu) + \int_{u >0}(\varphi(z;-u) + \varphi(z;u))G(\dd(-u)) \\
&=\int_{u >0}(\varphi(z;-u) + \varphi(z;u))G^{\text{fold}}(\dd u).
\end{align*}
The fourth equality applies the fact that $\varphi(z;-\mu) = \varphi(-z;\mu)$.
Hence, the marginal density of $\abs{Z}$ is a function of the distribution of $\abs{\mu}$ only.
\end{proof}
Based on this result, let us consider the normalized marginal density $f_G(z)/\PP[G]{\abs{Z} \in \selection}$. Since $\PP[G]{\abs{Z} \in \selection} = \int_\selection f_G(z)\dd z$, it follows that $\PP[G]{\abs{Z} \in \selection}$ is a functional of $G^\text{fold}$, so does the normalized marginal density following Lemma~\ref{lemm:fold-marginal}.

\subsubsection{Power-based estimands}
The power function exhibits symmetry: $\beta(\mu) = \beta(-\mu)$, so $\beta(\mu)$ is a function of $\abs{\mu}$ only. Consequently, for any $B \subset [0.05,1]$, $\PP[G]{\beta(\mu) \in B}$ depends only on the distribution of $\abs{\mu}$.

This property implies that the binned power density, the Cumulative distribution of power, and the proportion of sufficiently powered studies we introduced earlier must also be functions of $\abs{\mu}$.

\subsubsection{Probability of same sign}
\label{sec:same_sign_proof}
Recall that following from model~\eqref{eq:publication_bias_model}, we can write each z-score as $Z=\mu+\epsilon$, where $\epsilon\sim \mathrm{N}(0,1)$ and $\mu \indep \epsilon$.
To show that $\PP[G]{\mu \cdot Z > 0 \mid \abs{Z}=z}$ depends only on the distribution of $\abs{\mu}$, note that it is determined by the joint distribution of $(\mu \cdot Z, |Z|)$. Thus, it suffices to establish that this joint distribution depends solely on the distribution of $\abs{\mu}$. To see this, we start by defining $\sign(\cdot ):= \ind\{\cdot\geq 0\} - \ind\{\cdot < 0\}$. By considering the different cases of the sign of $\mu$, we have $-\sign(\mu)\cdot \epsilon \sim \mathrm{N}(0,1)$,
so $-\sign(\mu)\cdot \epsilon \stackrel{\mathcal{D}}{=} \epsilon \indep \mu$. It then follows that
\begin{align*}
    (\mu\cdot Z, |Z|) &= (\mu^2+\mu\cdot \epsilon, |\mu+\epsilon|) \\
    &\stackrel{\mathcal{D}}{=} (|\mu|^2-\mu\cdot \sign(\mu)\cdot \epsilon, |\mu-\sign(\mu)\cdot \epsilon|) \\
    &=(|\mu|^2-|\mu|\cdot \epsilon, ||\mu|-\epsilon|).
\end{align*}
To justify the last equality: if $\mu \geq0$, then $|\mu-\sign(\mu)\cdot \epsilon| = |\mu - \epsilon| = ||\mu|-\epsilon|$; otherwise, $|\mu-\sign(\mu)\cdot \epsilon| = |\mu + \epsilon| = |-\mu-\epsilon| = ||\mu|-\epsilon|$. Since the distribution of $\epsilon$ is known, the joint distribution only depends on that of $\abs{\mu}$. 

Our computation of this estimand relies on the following result. Denote $f_G^Z(z)$ the marginal density of signed $Z$:

\begin{prop}[Sign–agreement probability decomposition]
\label{prop:sign-agreement}
For any $z > 0$ with $f_G^{Z}(z) + f_G^{Z}(-z)>0$, 
\begin{align*}
    \PP[G]{\mu \cdot Z > 0 \mid \abs{Z}=z}  = \frac{\PP[G]{\mu >0\mid Z = z}f_G^{Z}(z) + \PP[G]{\mu <0\mid Z = -z}f_G^{Z}(-z)}{f_G^{Z}(z) + f_G^{Z}(-z)}.
\end{align*}
\end{prop}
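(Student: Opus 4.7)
The plan is to decompose the conditional probability $\PP[G]{\mu Z > 0 \mid \abs{Z}=z}$ by partitioning the conditioning event according to the sign of $Z$. For $z>0$, the event $\{\abs{Z}=z\}$ splits disjointly as $\{Z=z\} \cup \{Z=-z\}$, and on each piece the event $\{\mu Z > 0\}$ collapses to a pure statement about $\mu$: on $\{Z=z\}$ it is exactly $\{\mu>0\}$, and on $\{Z=-z\}$ it is exactly $\{\mu<0\}$. This sign-partitioning is the whole conceptual content of the formula.

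Concretely, I would first use the standard folding identity to write the marginal density of $\abs{Z}$ as $f_G^{\abs{Z}}(z) = f_G^Z(z) + f_G^Z(-z)$, which accounts for the denominator. Next, I would disintegrate the joint law of $(\mu, Z)$ along $Z$; this is legitimate because $Z = \mu + \varepsilon$ with $\varepsilon \sim \mathrm{N}(0,1)$ independent of $\mu$, so $Z$ admits a density $f_G^Z$ and the conditional law $\mu \mid Z = z$ is a regular conditional probability with a standard Bayes representation. This yields
\begin{equation*}
\PP[G]{\mu Z > 0,\, \abs{Z} \in dz} \;=\; \PP[G]{\mu>0 \mid Z=z}\, f_G^Z(z)\, dz \,+\, \PP[G]{\mu<0 \mid Z=-z}\, f_G^Z(-z)\, dz,
\end{equation*}
and dividing through by $f_G^{\abs{Z}}(z)\, dz = \bigl(f_G^Z(z) + f_G^Z(-z)\bigr)dz$ delivers the claimed expression.

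There is no real obstacle here; the argument is bookkeeping. The only subtlety is conditioning on the measure-zero event $\{\abs{Z}=z\}$, which is why the hypothesis $f_G^Z(z) + f_G^Z(-z) > 0$ is imposed and why the disintegration theorem for absolutely continuous marginals is invoked. Note that, unlike the identifiability analysis in Section~\ref{sec:same_sign_proof}, this proposition is stated in terms of the \emph{signed} marginal $f_G^Z$ rather than $\Fold{G}$; its purpose is computational, reducing the sign-agreement probability to two posterior-sign probabilities on the signed scale that feed naturally into the $F$-Localization and AMARI pipelines via the ratio-functional representation of Section~\ref{sec:methodology}.
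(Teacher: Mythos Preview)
Your proposal is correct and follows essentially the same route as the paper: both arguments partition the conditioning event $\{\abs{Z}=z\}$ into $\{Z=z\}\cup\{Z=-z\}$, observe that on each piece $\{\mu Z>0\}$ reduces to $\{\mu>0\}$ resp.\ $\{\mu<0\}$, and then divide by the folded marginal density $f_G^Z(z)+f_G^Z(-z)$. The paper phrases the disintegration as ``law of total probability plus chain rule'' whereas you phrase it as a disintegration along $Z$, but the content is identical.
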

\begin{proof}
Given $|Z|=z$, the event \(\{\mu \cdot Z>0\}\) is a disjoint union of
\(\{Z=z,\;\mu>0\}\) and \(\{Z=-z,\;\mu<0\}\).
By the law of total probability,
\begin{align*}
\PP[G]{\mu \cdot Z > 0 \mid \abs{Z}=z} &= \PP[G]{\mu >0,  Z =z \mid \abs{Z}=z} +  \PP[G]{\mu <0,  Z =-z \mid \abs{Z}=z} \\
& = \PP[G]{\mu >0\mid Z = z, \abs{Z}=z}\PP[G]{Z =z \mid \abs{Z}=z} \\
&+ \PP[G]{\mu <0\mid Z = -z, \abs{Z}=z}\PP[G]{Z =-z \mid \abs{Z}=z} \\
&=  \PP[G]{\mu >0\mid Z = z}\frac{f_G^{Z}(z)}{f_G^{Z}(z) + f_G^{Z}(-z)} \\
&+ \PP[G]{\mu <0\mid Z = -z}\frac{f_G^{Z}( -z)}{f_G^{Z}(z) + f_G^{Z}(-z)} \\
&= \frac{\PP[G]{\mu >0\mid Z = z}f_G^{Z}( z) + \PP[G]{\mu <0\mid Z = -z}f_G^{Z}(-z)}{f_G^{Z}(z) + f_G^{Z}(-z)}.
\end{align*}
The second equality applied the chain rule, and the third equality holds since $Z = z$ implies $|Z|=z$, so $\PP[G]{\mu >0\mid Z = z, \abs{Z}=z} = \PP[G]{\mu >0\mid Z = z}$, and likewise for $-z$.

\end{proof}
At $z = 0$ we define the sign-agreement probability by convention:
we define by the continuous extension so that it takes the value of
$1/2 (\PP[G]{\mu > 0 \mid Z=0} + \PP[G]{\mu < 0 \mid Z=0}).$

\subsubsection{Replication probability}
Consider an exact replication study that has the same underlying parameter of interest $\mu$ as the original study and the scientific procedure. Denote the z-score from such a replication study by $Z'$, which can be written as $Z' = \mu + \epsilon'$, where $\epsilon' \sim \mathrm{N}(0,1)$ is independent of $\epsilon$.
This estimand $\PP[G]{ \abs{Z'}>1.96, ZZ'>0 \cond \abs{Z}=z}$ is a function of the joint distribution of $(ZZ', |Z'|, |Z|)$. To see that it depends only on the distribution of $|\mu|$, we use similar argument as in previous sections and conclude that
\begin{align*}
    (ZZ', |Z|, |Z'|) &= (\mu^2 +\epsilon\mu+\epsilon'\mu+\epsilon\epsilon', |\mu+\epsilon'|, |\mu+\epsilon|) \\
    &\stackrel{\mathcal{D}}{=} (|\mu|^2 +\epsilon \cdot  \sign(\mu) \cdot |\mu|+\epsilon'\cdot \sign(\mu) \cdot |\mu|+\epsilon\epsilon', ||\mu|-\epsilon'|, ||\mu|-\epsilon|) \\
    &\stackrel{\mathcal{D}}{=} (|\mu|^2 +\epsilon |\mu|+\epsilon'|\mu|+\epsilon\epsilon', ||\mu|-\epsilon'|, ||\mu|-\epsilon|).
\end{align*}

To facilitate computation, we introduced the following result:
\begin{prop}[Replication probability decomposition]
\label{prop:rep-prob}
For any $z> 0$ with with $f_G^{Z}(z) + f_G^{Z}(-z)>0$,
\begin{align*}
    &\PP[G]{ \abs{Z'}>1.96, ZZ'>0 \cond \abs{Z}=z}  \\
    &\qquad =\frac{\int (1-\Phi(1.96-\mu ))\varphi(z;\mu)G(\dd\mu) + \int \Phi(-1.96-\mu )\varphi(-z;\mu)G(\dd\mu)}{f_G^Z(z) + f_G^Z(-z)}.
\end{align*}
\end{prop}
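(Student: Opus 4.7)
The plan is to mirror the decomposition used for Proposition~\ref{prop:sign-agreement}, replacing the sign event $\{\mu>0\}$ by the half-line event on the replicate $Z'$ and exploiting that $Z$ and $Z'$ are conditionally independent given $\mu$. Concretely, given $|Z|=z$, the event $\{ZZ'>0,\,|Z'|>1.96\}$ is the disjoint union of $\{Z=z,\,Z'>1.96\}$ and $\{Z=-z,\,Z'<-1.96\}$, so the law of total probability yields
\begin{align*}
  \PP[G]{|Z'|>1.96,\,ZZ'>0 \mid |Z|=z}
  &= \PP[G]{Z'>1.96 \mid Z=z}\,\PP[G]{Z=z \mid |Z|=z} \\
  &\quad + \PP[G]{Z'<-1.96 \mid Z=-z}\,\PP[G]{Z=-z \mid |Z|=z}.
\end{align*}

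Next, the two conditional probabilities on $|Z|$ are handled exactly as in the proof of Proposition~\ref{prop:sign-agreement}: since $|Z|=z$ restricts $Z$ to $\{z,-z\}$, we have $\PP[G]{Z=\pm z \mid |Z|=z} = f_G^Z(\pm z)/(f_G^Z(z)+f_G^Z(-z))$. The remaining step is to compute, for each sign, the quantity $\PP[G]{Z'>1.96 \mid Z=z}\,f_G^Z(z)$ and its counterpart. Writing $Z=\mu+\varepsilon$ and $Z'=\mu+\varepsilon'$ with $\varepsilon,\varepsilon'\iid \mathrm{N}(0,1)$ and $\mu \indep (\varepsilon,\varepsilon')$, conditional independence of $Z$ and $Z'$ given $\mu$ gives $\PP{Z'>1.96 \mid \mu, Z=z} = \PP{Z'>1.96 \mid \mu} = 1-\Phi(1.96-\mu)$, so integrating against the joint density of $(\mu,Z)$ at $Z=z$ yields $\int (1-\Phi(1.96-\mu))\varphi(z;\mu)\,G(d\mu)$, and analogously $\PP[G]{Z'<-1.96 \mid Z=-z}\,f_G^Z(-z) = \int \Phi(-1.96-\mu)\varphi(-z;\mu)\,G(d\mu)$.

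Assembling these pieces over the common denominator $f_G^Z(z)+f_G^Z(-z)$ produces the claimed identity. The only mildly delicate point is the standard bookkeeping step that $\PP[G]{Z'>1.96 \mid Z=z, |Z|=z} = \PP[G]{Z'>1.96 \mid Z=z}$ (because $\{Z=z\}\subset\{|Z|=z\}$), which also appeared in the proof of Proposition~\ref{prop:sign-agreement}; beyond that, every step is routine conditioning and a use of the Gaussian CDF, so I do not anticipate any genuine obstacle.
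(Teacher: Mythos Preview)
Your proposal is correct and follows essentially the same approach as the paper: the same disjoint decomposition of $\{|Z'|>1.96,\,ZZ'>0\}$ on $\{Z=z\}$ and $\{Z=-z\}$, the same use of $\PP[G]{Z=\pm z\mid |Z|=z}=f_G^Z(\pm z)/(f_G^Z(z)+f_G^Z(-z))$, and the same appeal to conditional independence of $Z$ and $Z'$ given $\mu$ to reduce the half-line probabilities to $1-\Phi(1.96-\mu)$ and $\Phi(-1.96-\mu)$. The bookkeeping point you flag about dropping $|Z|=z$ once $Z=z$ is fixed is exactly the step the paper also highlights.
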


\begin{proof}
Given $|Z|=z$, the event \(\{\abs{Z'}>1.96, ZZ'>0\}\) is a disjoint union of
\(\{Z=z,\;Z'>1.96\}\) and \(\{Z=-z,\;Z'<-1.96\}\).
By the law of total probability,
    \begin{align*}
    &\PP[G]{ \abs{Z'}>1.96, ZZ'>0 \cond \abs{Z}=z}  \\
    &\qquad = \PP[G]{Z=z,\;Z'>1.96\cond \abs{Z}=z}+ \PP[G]{Z=-z,\;Z'<-1.96\cond \abs{Z}=z} \\
    &\qquad =  \PP[G]{Z'>1.96\cond Z=z, \abs{Z}=z}\PP[G]{Z =z \mid \abs{Z}=z} \\
    &\qquad + \PP[G]{Z'<-1.96\cond Z=-z, \abs{Z}=z}\PP[G]{Z =-z \mid \abs{Z}=z} \\
    &\qquad = \frac{\PP[G]{Z'>1.96\cond Z=z}f_G^{Z}( z) + \PP[G]{Z'<-1.96\cond Z=-z}f_G^{Z}(-z)}{f_G^{Z}(z) + f_G^{Z}(-z)} \\
    &\qquad =\frac{\frac{\PP[G]{Z'>1.96,\; Z=z}}{f_G^{Z}( z)}f_G^{Z}( z) + \frac{\PP[G]{Z'<-1.96,\; Z=-z}}{f_G^{Z}(- z)}f_G^{Z}(-z)}{f_G^{Z}(z) + f_G^{Z}(-z)} \\
    &\qquad = \frac{\int\PP[G]{Z'>1.96\cond \mu}f(z \cond \mu)G(\dd\mu)+ \int \PP[G]{Z'<-1.96\cond \mu}f(-z \cond \mu)G(\dd\mu)}{f_G^{Z}(z) + f_G^{Z}(-z)} \\
    &\qquad =\frac{\int (1-\Phi(1.96-\mu ))\varphi(z;\mu)G(\dd\mu) + \int \Phi(-1.96-\mu )\varphi(-z;\mu)G(\dd\mu)}{f_G^{Z}(z) + f_G^{Z}(-z)}.
\end{align*}
The second equality applies the chain rule, and the third equality holds since $Z = z$ implies $|Z|=z$, so $\PP[G]{Z'>1.96\cond Z=z, \abs{Z}=z} = \PP[G]{Z'>1.96\mid Z = z}$, and likewise for $-z$. The fifth equality uses the fact that $Z$ is independent of $Z'$ conditional on $\mu$. Last equality holds as $Z \cond \mu \sim \mathrm{N}(\mu, 1)$ and $Z' \cond \mu \sim \mathrm{N}(\mu, 1)$.
\end{proof}

At $z = 0$ we define the replication probability by convention: we define by the continuous extension so that it takes the value of
$1/2 (\PP[G]{Z' > 1.96 \mid Z=0} + \PP[G]{Z' < -1.96 \mid Z=0})$.
\subsubsection{Future coverage probability}
To see that $\PP[G]{ Z \in Z' \pm 1.96       \mid \abs{Z}=z}$ only depends on $\Fold{G}$, observe that \(\{ Z \in Z' \pm 1.96, \abs{Z}=z\}\) = \(\{\abs{Z-Z'}\leq 1.96, \abs{Z}=z\}\). So the estimand depends on the joint distribution of $(\abs{Z-Z'}, \abs{Z})$, and by similar argument in Section~\ref{sec:same_sign_proof}:
\begin{align*}
    (\abs{Z-Z'}, \abs{Z}) &= (\abs{\mu+\epsilon -\mu-\epsilon'}, |\mu+\epsilon|) \\
    &= (|\epsilon -\epsilon'|, |\mu+\epsilon|) \\
    &\stackrel{\mathcal{D}}{=}(|\epsilon -\epsilon'|, ||\mu|-\epsilon|).
\end{align*}
Since the distribution of $\epsilon$ and $\epsilon'$ are known, the joint distribution only depends on $\abs{\mu}$.

Computationally, we rely on the following decomposition:
\begin{prop}[Future coverage probability decomposition]
For any $z\ge 0$ with with $f_G^{Z}(z) + f_G^{Z}(-z)>0$,
\begin{align*}
    &\PP[G]{ Z \in Z' \pm 1.96       \mid \abs{Z}=z}  \\
    &\qquad =\frac{\int (\Phi(z+1.96-\mu )-\Phi(z-1.96-\mu ))\varphi(z;\mu)G(\dd\mu)}{f_G^{Z}(z) + f_G^{Z}(-z)} \\
    &\qquad + \frac{ \int (\Phi(-z+1.96-\mu )-\Phi(-z-1.96-\mu ))\varphi(-z;\mu)G(\dd\mu)}{f_G^{Z}(z) + f_G^{Z}(-z)}.
\end{align*}
\end{prop}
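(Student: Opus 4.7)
The plan is to mirror the decomposition arguments used for the sign-agreement probability (Proposition~\ref{prop:sign-agreement}) and the replication probability (Proposition~\ref{prop:rep-prob}), exploiting the fact that conditioning on $\abs{Z}=z$ leaves only two possibilities for the signed $Z$, namely $Z=z$ or $Z=-z$.

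First, I would rewrite the event $\{Z \in Z' \pm 1.96\}$ as $\{\abs{Z-Z'} \leq 1.96\}$ and partition it according to the sign of $Z$. Conditional on $\abs{Z}=z$, this yields the disjoint union $\{Z=z,\ \abs{Z'-z}\leq 1.96\} \cup \{Z=-z,\ \abs{Z'+z}\leq 1.96\}$, so by the law of total probability
\begin{align*}
\PP[G]{Z \in Z' \pm 1.96 \mid \abs{Z}=z}
&= \PP[G]{\abs{Z'-z}\leq 1.96 \mid Z=z}\,\PP[G]{Z=z \mid \abs{Z}=z} \\
&\quad + \PP[G]{\abs{Z'+z}\leq 1.96 \mid Z=-z}\,\PP[G]{Z=-z \mid \abs{Z}=z},
\end{align*}
using that $\{Z=\pm z\}$ already determines $\abs{Z}=z$ (so the extra conditioning on $\abs{Z}=z$ can be dropped, just as in Proposition~\ref{prop:rep-prob}). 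The two conditional sign probabilities equal $f_G^{Z}(\pm z)/(f_G^{Z}(z)+f_G^{Z}(-z))$, giving the common denominator in the stated formula.

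Next, I would evaluate the two numerators by conditioning further on $\mu$. Since $Z$ and $Z'$ are conditionally independent given $\mu$, with $Z'\mid\mu\sim \mathrm{N}(\mu,1)$,
\begin{align*}
\PP[G]{\abs{Z'-z}\leq 1.96 \mid Z=z}\, f_G^{Z}(z)
&= \int \PP[G]{\abs{Z'-z}\leq 1.96 \mid \mu}\,\varphi(z;\mu)\,G(d\mu) \\
&= \int \bigl(\Phi(z+1.96-\mu)-\Phi(z-1.96-\mu)\bigr)\,\varphi(z;\mu)\,G(d\mu),
\end{align*}
and analogously, with $z$ replaced by $-z$, for the second numerator. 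Assembling the two pieces yields precisely the claimed expression.

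The argument is essentially routine once the Proposition~\ref{prop:rep-prob} template is in place; the only small point requiring care is the reduction of $\PP[G]{\abs{Z'\mp z}\leq 1.96 \mid Z=\pm z,\ \abs{Z}=z\}$ to $\PP[G]{\abs{Z'\mp z}\leq 1.96 \mid Z=\pm z\}$, which uses that $\{Z=\pm z\}\subseteq\{\abs{Z}=z\}$, and the verification that the conditional probability of $Z'\in[z-1.96,z+1.96]$ given $\mu$ indeed equals $\Phi(z+1.96-\mu)-\Phi(z-1.96-\mu)$ under the normal likelihood. No deeper obstacle is anticipated.
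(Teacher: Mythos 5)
Your proposal is correct and follows essentially the same route as the paper's own proof: a disjoint decomposition over the sign of $Z$ given $\abs{Z}=z$, the law of total probability with the redundant conditioning on $\abs{Z}=z$ dropped, and then conditioning on $\mu$ using the conditional independence of $Z$ and $Z'$ to reduce $\PP[G]{Z'\in z\pm 1.96\mid \mu}$ to $\Phi(z+1.96-\mu)-\Phi(z-1.96-\mu)$. No substantive differences from the paper's argument.
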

\begin{proof}
    Given $|Z|=z$, the event \(\{Z \in Z' \pm 1.96\}\) is a disjoint union of
\(\{Z \in Z' \pm 1.96, Z=z\}\) and \(\{Z \in Z' \pm 1.96, Z=-z\}\).
By the law of total probability,
    \begin{align*}
    &\PP[G]{ Z \in Z' \pm 1.96 \cond \abs{Z}=z}  \\
    &\qquad = \PP[G]{Z \in Z' \pm 1.96, Z=z\cond \abs{Z}=z}+ \PP[G]{Z \in Z' \pm 1.96, Z=-z\cond \abs{Z}=z} \\
    &\qquad =  \PP[G]{Z \in Z' \pm 1.96\cond Z=z, \abs{Z}=z}\PP[G]{Z =z \mid \abs{Z}=z} \\
    &\qquad + \PP[G]{Z \in Z' \pm 1.96\cond Z=-z, \abs{Z}=z}\PP[G]{Z =-z \mid \abs{Z}=z} \\
    &\qquad = \frac{\PP[G]{Z \in Z' \pm 1.96\cond Z=z}f_G^{Z}( z) + \PP[G]{Z \in Z' \pm 1.96\cond Z=-z}f_G^{Z}(-z)}{f_G^{Z}(z) + f_G^{Z}(-z)} \\
    &\qquad = \frac{\frac{1}{f_G^{Z}( z)}f_G^{Z}( z)\int\PP[G]{Z' \in z \pm 1.96\cond \mu}f(z \cond \mu)G(\dd\mu)}{f_G^{Z}(z) + f_G^{Z}(-z)} \\
    &\qquad + \frac{\frac{1}{f_G^{Z}(-z)}f_G^{Z}(-z)\int \PP[G]{Z' \in -z \pm 1.96\cond \mu}f(-z \cond \mu)G(\dd\mu)}{f_G^{Z}(z) + f_G^{Z}(-z)} \\
    &\qquad =\frac{\int (\Phi(z+1.96-\mu )-\Phi(z-1.96-\mu ))\varphi(z;\mu)G(\dd\mu)}{f_G^{Z}(z) + f_G^{Z}(-z)} \\
    &\qquad + \frac{ \int (\Phi(-z+1.96-\mu )-\Phi(-z-1.96-\mu ))\varphi(-z;\mu)G(\dd\mu)}{f_G^{Z}(z) + f_G^{Z}(-z)}.
\end{align*}

The second equality applies the chain rule, and the third equality holds since $Z = z$ implies $|Z|=z$, so $\PP[G]{Z \in Z' \pm 1.96\cond Z=z, \abs{Z}=z} = \PP[G]{Z \in Z' \pm 1.96\mid Z = z}$, and likewise for $-z$. The fourth equality uses the the fact that $Z$ is independent of $Z'$ conditional on $\mu$, and we have $ \PP[G]{z \in Z' \pm 1.96\cond \mu} = \PP[G]{Z' \in z \pm 1.96\cond \mu}$. Last equality holds as $Z \cond \mu \sim \mathrm{N}(\mu, 1)$ and $Z' \cond \mu \sim \mathrm{N}(\mu, 1)$.
\end{proof}

\subsubsection{Effect size replication probability}
Observe that $\PP[G]{   |Z'| \geq |Z| \mid \abs{Z}=z}.$ depends on the joint distribution of $(\abs{Z'}, \abs{Z})$. A similar argument yields that:
\begin{align*}
    (\abs{Z'}, \abs{Z}) &= (\abs{\mu+\epsilon'}, |\mu+\epsilon|) \\
    &\stackrel{\mathcal{D}}{=}(||\mu|-\epsilon'|, ||\mu|-\epsilon|).
\end{align*}
Hence, it only depends on the distribution of $\abs{\mu}$.

Similarly as for future coverage probability, our computation utilizes the following decomposition:
\begin{prop}[Effect size replication probability decomposition]
\begin{align*}
    &\PP[G]{|Z'| \geq |Z|\cond \abs{Z}=z}  \\
    &\qquad = \frac{\int (1-\Phi(z-\mu )+\Phi(-z-\mu ))(\varphi(z;\mu)+\varphi(-z;\mu))G(\dd\mu)}{f_G^{Z}(z) + f_G^{Z}(-z)}.
\end{align*}
\end{prop}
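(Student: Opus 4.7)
The plan is to mirror the decomposition used in the immediately preceding proposition (future coverage probability), since the structure of the estimand is nearly identical. First I would observe that for $z \geq 0$ with $f_G^Z(z) + f_G^Z(-z) > 0$, the event $\{|Z|=z\}$ decomposes (in the conditional-density sense) into the disjoint pieces $\{Z=z\}$ and $\{Z=-z\}$, so that by the law of total probability and the chain rule,
\begin{equation*}
\PP[G]{|Z'| \geq |Z| \cond |Z|=z}
= \PP[G]{|Z'| \geq z \cond Z=z}\,\tfrac{f_G^Z(z)}{f_G^Z(z)+f_G^Z(-z)}
+ \PP[G]{|Z'| \geq z \cond Z=-z}\,\tfrac{f_G^Z(-z)}{f_G^Z(z)+f_G^Z(-z)}.
\end{equation*}
Here I use that $\{|Z|=z\}$ is implied by $\{Z=\pm z\}$, exactly as done in the proofs of Propositions for sign agreement, replication probability, and future coverage.

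Next I would unfold the two conditional probabilities on the right-hand side. Since $Z = \mu+\epsilon$ and $Z' = \mu+\epsilon'$ with $\epsilon,\epsilon'\iid \mathrm{N}(0,1)$ independent of $\mu$, we have conditional independence $Z \indep Z' \cond \mu$ with $Z'\mid\mu \sim \mathrm{N}(\mu,1)$. By the tower property,
\begin{equation*}
\PP[G]{|Z'|\geq z \cond Z=z}
= \frac{\int \PP{|Z'|\geq z \cond \mu}\,\varphi(z;\mu)\,G(d\mu)}{f_G^Z(z)},
\end{equation*}
and a direct Gaussian computation gives $\PP{|Z'|\geq z\cond \mu} = 1 - \Phi(z-\mu) + \Phi(-z-\mu)$. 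The same formula applies on the branch $\{Z=-z\}$, with $\varphi(z;\mu)$ replaced by $\varphi(-z;\mu)$ in the numerator (the probability factor does not change, because the event of interest $|Z'|\geq z$ is symmetric in $z$).

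Finally I would substitute these two expressions into the decomposition, cancel the $f_G^Z(z)$ and $f_G^Z(-z)$ factors with the branch weights, and collect the common factor $1-\Phi(z-\mu)+\Phi(-z-\mu)$ out of both integrals. This yields the claimed identity
\begin{equation*}
\PP[G]{|Z'|\geq |Z| \cond |Z|=z}
= \frac{\int \bigl(1-\Phi(z-\mu)+\Phi(-z-\mu)\bigr)\bigl(\varphi(z;\mu)+\varphi(-z;\mu)\bigr)G(d\mu)}{f_G^Z(z) + f_G^Z(-z)}.
\end{equation*}
There is no real obstacle: the argument is a mechanical adaptation of the preceding proof. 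The only point requiring any care is the treatment of the conditioning event $\{|Z|=z\}$ of zero probability, which is handled rigorously (as throughout the appendix) by interpreting all conditional probabilities through the corresponding ratios of marginal densities, which are well-defined under the assumption $f_G^Z(z)+f_G^Z(-z)>0$.
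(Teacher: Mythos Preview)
Your argument is correct, but it takes a slightly longer route than the paper's own proof. You split $\{|Z|=z\}$ into the two sign branches $\{Z=z\}$ and $\{Z=-z\}$, compute $\PP[G]{|Z'|\geq z\cond Z=\pm z}$ on each branch via the tower property, and then recombine. The paper instead observes directly that, since the event $\{|Z'|\geq |Z|\}$ becomes $\{|Z'|\geq z\}$ on $\{|Z|=z\}$ and this event is symmetric in the sign of $Z$, there is no need to split at all: one can immediately write
\[
\PP[G]{|Z'|\geq z\cond |Z|=z}
= \frac{\int \PP{|Z'|\geq z\cond \mu}\,\bigl(\varphi(z;\mu)+\varphi(-z;\mu)\bigr)\,G(d\mu)}{f_G^Z(z)+f_G^Z(-z)},
\]
using conditional independence of $Z'$ and $|Z|$ given $\mu$ and the folded-normal conditional density of $|Z|$. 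Your branch decomposition is the general template from the preceding propositions and of course still works here; the paper's shortcut just exploits the extra symmetry of this particular estimand to skip the split-and-recombine step.
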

\begin{proof}
    \begin{align*}
    &\PP[G]{|Z'| \geq |Z|\cond \abs{Z}=z}  \\
    &\qquad = \PP[G]{|Z'| \geq z\cond \abs{Z}=z} \\
    &\qquad = \frac{\int\PP[G]{|Z'| \geq z\cond \mu}(f_G^{Z}(z) + f_G^{Z}(-z))G(\dd\mu)}{f_G^{Z}(z) + f_G^{Z}(-z)} \\
    &\qquad = \frac{\int (1-\Phi(z-\mu )+\Phi(-z-\mu ))(\varphi(z;\mu)+\varphi(-z;\mu))G(\dd\mu)}{f_G^{Z}(z) + f_G^{Z}(-z)}.
\end{align*}
The third equality uses the fact that $Z$ is independent of $Z'$ conditional on $\mu$. Last equality holds as $Z \cond \mu \sim \mathrm{N}(\mu, 1)$ and $Z' \cond \mu \sim \mathrm{N}(\mu, 1)$.
\end{proof}

\subsubsection{Publication probability}
Recall that we can break $\omega$ up as two terms, 
 $\omega = \omega_1 \cdot \omega_2$, with:
$$\omega =  \omega_1 \cdot \omega_2, \; \text{ with }\; \omega_1 =  \frac{\PP{\abs{Z} \geq 1.96 \mid D=1}}{\PP{\abs{Z} < 1.96 \mid D=1}},\;\; \omega_2 = \frac{ \PP[G]{\abs{Z} < 1.96}}{\PP[G]{\abs{Z} \geq 1.96}}.$$
where $\omega_1$ is estimated directly from the observed data. Since 
$$\omega_2 = \frac{\int_0^{1.96} f_G(z)\dd z}{\int_{1.96}^{\infty} f_G(z)\dd z},$$
it is a functional of $f_G$ and therefore depends solely on the distribution of $\abs{\mu}$. Hence, $\omega$ is identifiable.

\subsection{Proof of Proposition~\ref{prop:interpretation_symm_postmean} (a)}
\label{sec:proof_interpretation_symm_postmean}
\begin{proof}
Observe that
\begin{equation*}
\begin{aligned}
    \EE[G]{(\mu - \delta(Z))^2} - \EE[G]{(\mu - \EE[G]{\mu \mid Z})^2} &= -2\EE[G]{\mu \cdot \delta(Z)} + \EE[G]{\delta(Z)^2} + \EE[G]{\EE[G]{\mu \mid Z}^2} \\
    &=\EE[G]{(\delta(Z) - \EE[G]{\mu \mid Z})^2},
\end{aligned}
\end{equation*}
so the above optimization problem is equivalent to
$$
\underset{\delta:\RR \to\RR}{\textnormal{minimize}}\quad\EE[G]{\p{\delta(Z)-\EE[G]{\mu \mid Z}}^2}\, \text{ s.t. }\, \delta(-z)=-\delta(z)\, \text{ for all }\, z.
$$
Splitting up the integral with the constraint in mind, we have
\begin{equation*}
\begin{aligned}
  \EE[G]{(\EE[G]{\mu \mid Z} - \delta(Z))^2}
  &= \int_0^\infty (\EE[G]{\mu \mid Z=z}-\delta(z))^2 f_G^Z(z) \dd z \\
  & \qquad \qquad \text{}+ \int_{-\infty}^{0} (\EE[G]{\mu \mid Z=z}-\delta(z))^2 f_G^Z(z) \dd z  \\
  &= \int_0^\infty (\EE[G]{\mu \mid Z=z}-\delta(z))^2 f_G^Z(z) \dd z \\
  &\qquad \qquad \text{}+ \int_0^{\infty} (\EE[G]{\mu \mid Z=-z}+\delta(z))^2 f_G^Z(-z) \dd z \\
  &= \int_0^\infty \Big[ \delta(z)^2\cdot \big(f_G^Z(z)+f_G^Z(-z)\big) \\
  &\qquad \qquad \text{}+2\cdot \delta(z) \cdot \big( \EE[G]{\mu \mid Z=-z} f_G^Z(-z) -\EE[G]{\mu \mid Z=z} f_G^Z(z)\big) \\
  &\qquad \qquad \text{}+\big(\EE[G]{\mu \mid Z=z}^2f_G^Z(z) + \EE[G]{\mu \mid Z=-z}^2f_G^Z(-z) \big)\Big]\dd z.
\end{aligned}
\end{equation*}
It suffices to minimize the integrand for each fixed $z$. In this case, we have a quadratic function of $\delta(z)$, and the minimizer is
$$
\delta^*(z) = \frac{ \EE[G]{\mu \mid Z=z} f_G^Z(z) - \EE[G]{\mu \mid Z=-z} f_G^Z(-z)}{f_G^Z(z)+f_G^Z(-z)}.
$$
Expressing it as a functional of $\mathrm{Symm}[G]$, we have 
\begin{equation*}
\begin{aligned}
    \delta^*(z) &= \frac{ \EE[G]{\mu \mid Z=z} f_G^Z(z) - \EE[G]{\mu \mid Z=-z} f_G^Z(-z)}{f_G^Z(z)+f_G^Z(-z)} \\
    &= \frac{\int_\RR \mu \varphi(z;\mu) G(\dd\mu) - \int_\RR \mu \varphi(-z;\mu) G(\dd\mu)}{\int_\RR \varphi(z;\mu) G(\dd\mu) + \int_\RR  \varphi(-z;\mu) G(\dd\mu)} \\
    &= \frac{\int_\RR \mu \varphi(z;\mu) G(\dd\mu) + \int_\RR \mu \varphi(z;\mu) G^-(\dd\mu)}{\int_\RR \varphi(z;\mu) G(\dd\mu) + \int_\RR  \varphi(z;\mu) G^-(\dd\mu)} \\
    &= \frac{\int_\RR \mu\varphi(z;\mu)\mathrm{Symm}[G](\dd\mu)}{\int_\RR \varphi(z;\mu)\mathrm{Symm}[G](\dd\mu)} \\
    &= \delta_G^{\mathrm{Symm}}(z).
\end{aligned}
\end{equation*}
\end{proof}

\subsection{Proof of Proposition~\ref{prop:interpretation_symm_postmean} (b)}
\begin{proof}
    Denote $L(\delta, \mu):=\EE[Z\sim\mu]{(\delta(Z)-\mu)^2}$ and $\mathcal{A}_G:= \{\tG: \mathrm{Symm}[\tG]=\mathrm{Symm}[G]\}$. Write $\delta^*:=\delta^{\text{Symm}}_G$ and $L^*(\mu):=L(\delta^*, \mu)$. Since $\delta^*$ is an odd function, we have
    \begin{align*}
        L^*(-\mu) &= \EE[-Z\sim-\mu]{(\delta^*(-Z)+\mu)^2} = \EE[-Z\sim-\mu]{(-\delta^*(Z)+\mu)^2} \\
        &= \EE[Z\sim-\mu]{(\delta^*(Z)-\mu)^2} = \EE[Z\sim\mu]{(\delta^*(Z)-\mu)^2} = L^*(\mu),
    \end{align*}
    i.e., $L^*$ is an even function. For any $\tG \in \mathcal{A}_G$, we have 
    \begin{align*}
        \EE[\tG]{(\delta^*(Z)-\mu)^2} &= \int_\RR L^*(\mu) \tG(\dd\mu) \\
        &= \int_{\mu \geq 0} L^*(\mu) \tG(\dd\mu) + \int_{\mu < 0} L^*(\mu) \tG(\dd\mu) \\
         &= \int_{\mu \geq 0} L^*(\mu) \tG(\dd\mu) + \int_{\mu > 0} L^*(-\mu) \tG^-(\dd\mu) \\
         &= \int_{\mu \geq 0} L^*(\mu) \tG(\dd\mu) + \int_{\mu > 0} L^*(\mu) \tG^-(\dd\mu) \\
         &= \frac{1}{2}\Bigg[\int_{\mu \geq 0} L^*(\mu) \tG(\dd\mu)  + \int_{\mu \leq 0} L^*(\mu) \tG^-(\dd\mu) \Bigg] \\
         &\quad\quad  \text{ }+\frac{1}{2}\Bigg[\int_{\mu < 0} L^*(\mu) \tG(\dd\mu) + \int_{\mu > 0} L^*(\mu) \tG^-(\dd\mu) \Bigg] \\
         &= \frac{1}{2} \int L^*(\mu) \tG(\dd\mu) + \frac{1}{2} \int L^*(\mu) \tG^-(\dd\mu) \\
         &= \int L^*(\mu) \operatorname{Symm}\,[\tG]\,(\dd\mu) = \int L^*(\mu) \Symm{G}(\dd\mu) \\
         &=\EE[{\Symm{G}}]{(\delta^*(Z)-\mu)^2} .
    \end{align*}
    Hence, for any $\delta$,
    \begin{align*}
    \sup_{\tG \in \mathcal{A}_G} \Big\{\EE[\tG]{(\delta(Z)-\mu)^2}\Big\}& \geq \EE[\Symm{G}]{(\delta(Z)-\mu)^2} \\
    &\quad \text{ }\stackrel{*}{\geq}  \EE[\Symm{G}]{(\delta^*(Z)-\mu)^2}= \sup_{\tG \in \mathcal{A}_G} \Big\{\EE[\tG]{(\delta^*(Z)-\mu)^2}\Big\},
    \end{align*}
    where $(*)$ is due to the fact that $\delta^*$ is the Bayes estimator with respect to the prior $\Symm{G}$. So $\delta^*$ is $\mathcal{A}_G$-minimax (see Section~4.7.6 of \citet{berger1985}).
\end{proof}   

\subsection{Proof of Proposition~\ref{prop:convex_hull_equivalence}}
\begin{proof}
To establish the class equivalence, we prove the following inclusions: 
\begin{enumerate}
    \item $\Tilt{\mathcal{G}} \subset \mathrm{ConvexHull}(\Tilt{G_1},\dotsc,\Tilt{G_K})$
    \item $\mathrm{ConvexHull}(\Tilt{G_1},\dotsc,\Tilt{G_K}) \subset \Tilt{\mathcal{G}}$
\end{enumerate}
For the first inclusion: Take $G = \sum_{j=1}^K \pi_j G_j \in \mathcal{G}$ with $\pi_j \geq 0$ and $\sum \pi_j = 1$. Then:
$$\Tilt{G}  = \sum_{j=1}^K \pi_j \frac{\Phi(\selection; \mu) G_j}{\sum_{j=1}^K \pi_j\int \Phi(\selection; \mu) G_j(\dd\mu)} = \sum_{j=1}^K \underbrace{\left(\frac{\pi_j\int \Phi(\selection; \mu) G_j(\dd\mu)}{\sum_{j=1}^K\pi_j\int \Phi(\selection; \mu) G_j(\dd\mu)}\right)}_{\theta_j}\frac{\Phi(\selection; \mu) G_j}{\int \Phi(\selection; \mu) G_j(\dd\mu)},$$
where $(\theta_1,\dots,\theta_K)$ lie on the probability simplex. We have that $\Tilt{G} = \sum_{j=1}^K\theta_j\Tilt{G_j} \in \mathrm{ConvexHull}(\Tilt{G_1},\dotsc,\Tilt{G_K})$, and the first inclusion is established.
\medskip

\noindent For the second inclusion: Let $H = \sum_{j=1}^K \lambda_j\Tilt{G_j}$ where $(\lambda_1,\dots,\lambda_K)$ lie on the probability simplex. We have:
$$H =  \sum_{j=1}^K\lambda_j \frac{\Phi(\selection; \mu) G_j}{ \int \Phi(\selection; \mu) G_j(\dd\mu)} =\Phi(\selection; \mu)\sum_{j=1}^K \frac{\lambda_jG_j}{ \int \Phi(\selection; \mu) G_j(\dd\mu)}. $$
Consider $A = \frac{1}{\alpha}\sum_{j=1}^K\frac{\lambda_j}{\int \Phi(\selection; \mu) G_j(\dd\mu)}G_j$, where $\alpha =\sum_{j=1}^K\frac{\lambda_j}{\int \Phi(\selection; \mu) G_j(\dd\mu)}$. Since $A \in \mathcal{G}$, $\Tilt{A} \in \Tilt{\mathcal{G}}$ and:
$$
\int \Phi(\selection; \mu) A(\dd\mu) = \frac{1}{\alpha}\sum_{j=1}^K\frac{\lambda_j}{\int \Phi(\selection; \mu) G_j(\dd\mu)}\int\Phi(\selection; \mu)G_j(\dd\mu) =  \frac{1}{\alpha}.
$$
Then:
$$
\begin{aligned}
    \Tilt{A}&= \frac{\Phi(\selection; \mu) A}{ \int \Phi(\selection; \mu) A(\dd\mu)} \\
    &= \Phi(\selection; \mu)\frac{1}{\alpha}\left(\sum_{j=1}^K \frac{\lambda_jG_j}{\int \Phi(\selection; \mu)(\dd\mu)}G_j\right)\alpha \\
    &=H.
\end{aligned}
$$
\noindent Thus $H \in \Tilt{\mathcal{G}}$, both inclusions are established, proving the class equivalence.
\end{proof}

\subsection{Proof of Proposition~\ref{prop:functional_equivalence_convexhull}}
\begin{proof}
   Suppose $\Tilt{G} = \sum_{i=1}^{K}\widetilde\pi_i \Tilt{G_i} \in \Tilt{\mathcal{G}}$. Using the mapping we established in Section~\ref{subsec: mapping_convex_hull}, we have the corresponding $G = \sum_{i=1}^K \pi_i G_i$. Then we have:
    \begin{align*}
       \Tilt{T}(\Tilt{G}) &= \frac{ \int \nu(\mu)\Phi(\selection; \mu)^{-1}\Tilt{G}(\dd\mu)}{\int \delta(\mu) \Phi(\selection; \mu)^{-1}\Tilt{G}(\dd\mu)}  \\
        &= \frac{\sum_{i=1}^K\widetilde\pi_i \int \nu(\mu)\Phi(\selection; \mu)^{-1}\Tilt{G_i}(\dd\mu)}{\sum_{i=1}^K\widetilde\pi_i \int \delta(\mu)\Phi(\selection; \mu)^{-1}\Tilt{G_i}(\dd\mu)} \\
        &=\frac{\sum_{i=1}^K\frac{\widetilde\pi_i}{\int \Phi(\selection; \mu) G_i(\dd\mu) }\int \nu(\mu)G_i(\dd\mu)}{\sum_{i=1}^K\frac{\widetilde\pi_i}{\int \Phi(\selection; \mu) G_i(\dd\mu) }\int \delta(\mu)G_i(\dd\mu)} \\
        &=\frac{\sum_{i=1}^K\frac{\widetilde\pi_i}{\PP[G_i]{\abs{Z} \in \selection}}\int \nu(\mu)G_i(\dd\mu)/\sum_{j=1}^{K} \frac{\widetilde\pi_j}{\PP[G_j]{\abs{Z} \in \selection}}}{\sum_{i=1}^K\frac{\widetilde\pi_i}{\PP[G_i]{\abs{Z} \in \selection}}\int \delta(\mu)G_i(\dd\mu)/\sum_{j=1}^{K} \frac{\widetilde\pi_j}{\PP[G_j]{\abs{Z} \in \selection}}} \\
        &=\frac{\sum_{i=1}^K \pi_i \int \nu(\mu)G_i(\dd\mu)}{\sum_{i=1}^K \pi_i\int \delta(\mu)G_i(\dd\mu)}  = T(G).
    \end{align*}

The fifth equality is based on the relationship between $\pi_i$ and $\widetilde\pi_i$.

\end{proof}

\section{Supplementary analyses}
\subsection{Inference under a different truncation set}
\label{sec:alterna_trunc_set}
\beginsuppfigs
\begin{figure}
\centering
\begin{adjustbox}{max totalsize={\textwidth}{0.9\textheight}, center}
\begin{minipage}{\textwidth}
    \begin{subfigure}[t]{0.45\textwidth}
        \centering
        \caption{Marginal density}
        \includegraphics[width=\linewidth]{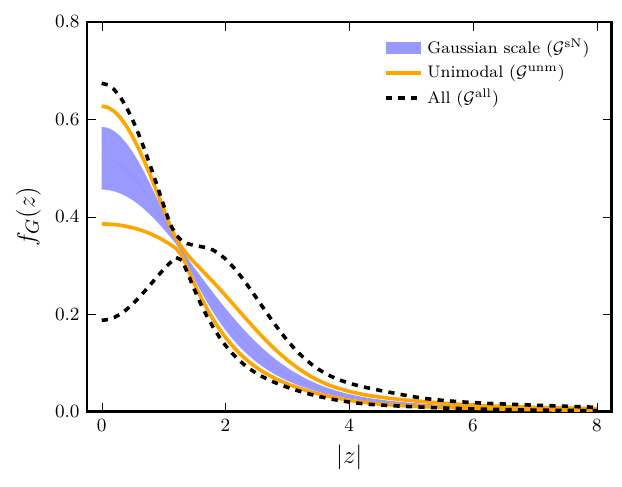}
        \label{fig:marginal_density_unnormalized_224}
    \end{subfigure}
    \hfill
    \begin{subfigure}[t]{0.45\textwidth}
        \centering
        \caption{Normalized marginal density}
        \includegraphics[width=\linewidth]{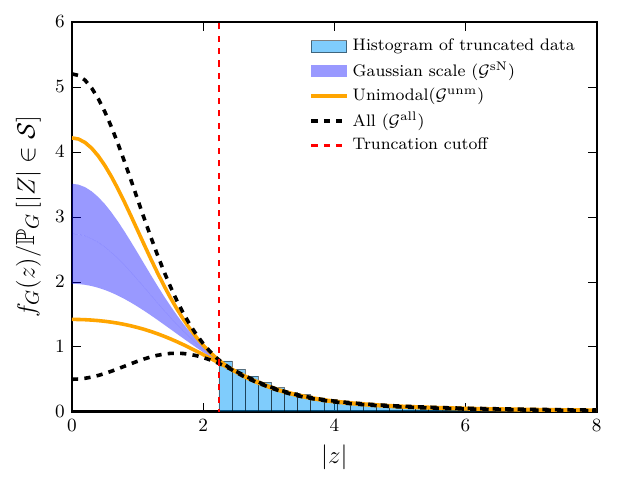}
        \label{fig:marginal_density_normalized_224}
    \end{subfigure}
    \hfill
    \begin{subfigure}[t]{0.45\textwidth}
        \centering
        \caption{Binned density of power}
        \includegraphics[width=\linewidth]{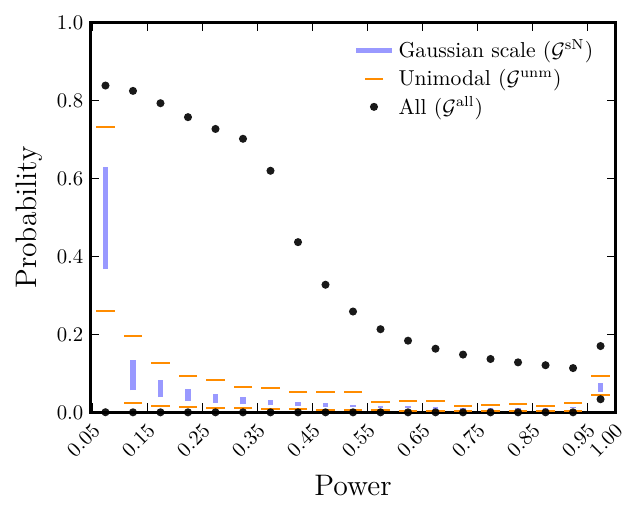}
        \label{fig:binned_power_224}
    \end{subfigure}
    \hfill
    \begin{subfigure}[t]{0.45\textwidth}
        \centering
        \caption{Probability of same sign}
        \includegraphics[width=\linewidth]{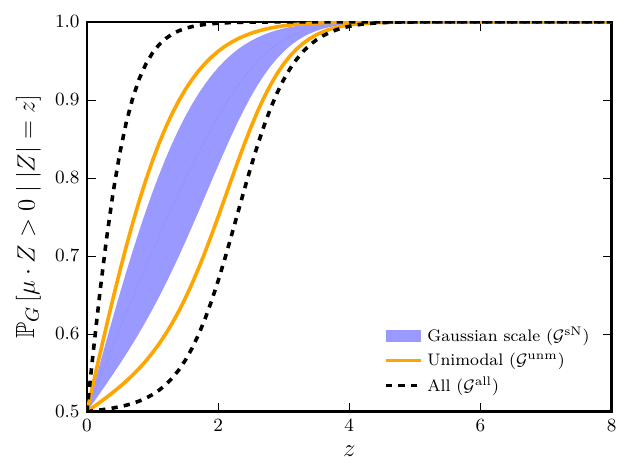}
        \label{fig:same_sign_224}
    \end{subfigure}
    \hfill
    \begin{subfigure}[t]{0.45\textwidth}
        \centering
        \caption{Symmetrized posterior mean}
        \includegraphics[width=\linewidth]{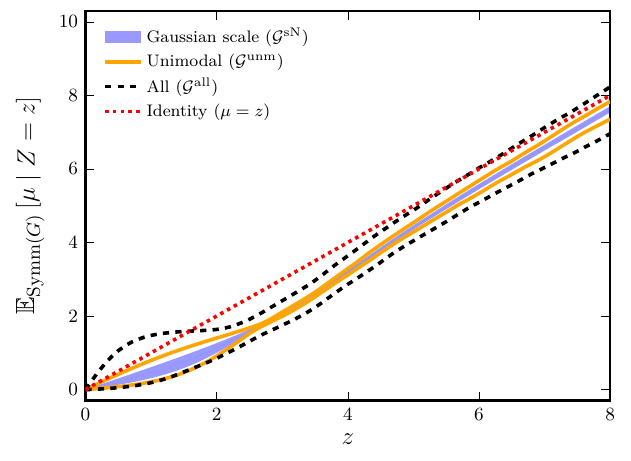}
        \label{fig:symmetrized_posterior_224}
    \end{subfigure}
    \hfill
    \begin{subfigure}[t]{0.45\textwidth}
        \centering
        \caption{Replication probability}
        \includegraphics[width=\linewidth]{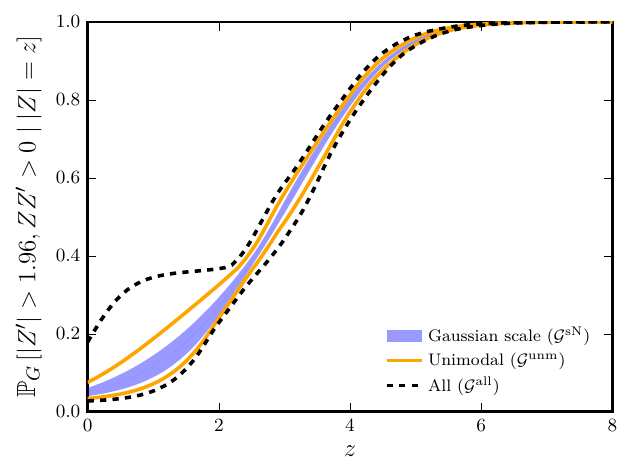}
        \label{fig:repl_prob_224}
    \end{subfigure}
    \hfill
    \begin{subfigure}[t]{0.45\textwidth}
        \centering
        \caption{Future coverage probability}
        \includegraphics[width=\linewidth]{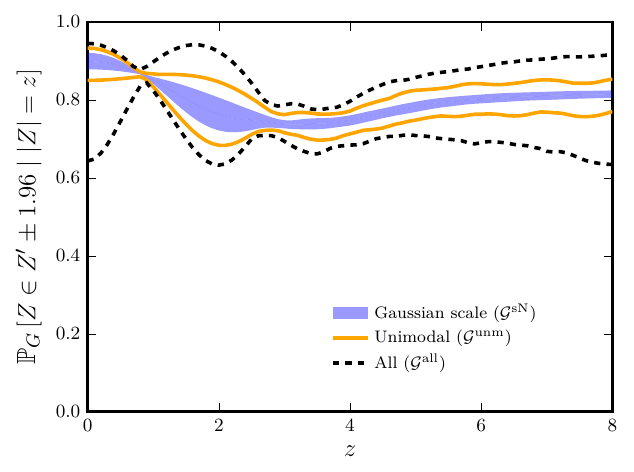}
        \label{fig:future_coverage_prob_224}
    \end{subfigure}
    \hfill
    \begin{subfigure}[t]{0.45\textwidth}
        \centering
        \caption{Effect size replication probability}
        \includegraphics[width=\linewidth]{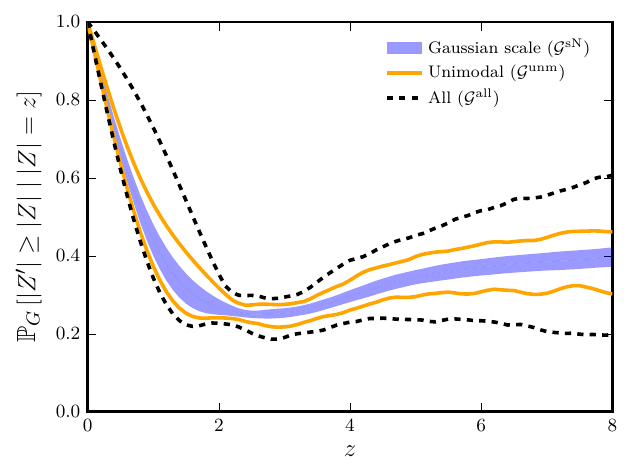}
        \label{fig:effect_repl_prob_224}
    \end{subfigure}
    
\end{minipage}
\end{adjustbox}
\caption{95\% Confidence interval analyses for MEDLINE (2000-2018) with truncation set $\selection_\text{half} = [2.24, \infty)$: Each panel presents one estimand of interest, accompanied by 95\% confidence intervals under different assumptions for the SNR distribution.}
\label{fig:medline_full_analysis_224}
\end{figure}

We replicate our inference on the MEDLINE dataset in Section~\ref{sec:estimands_results} under $\selection_\text{half} = [2.24, \infty)$, as shown in Fig.~\ref{fig:medline_full_analysis_224}. Compared to the confidence intervals constructed under $\selection = [2.1, \infty)$ in Fig.~\ref{fig:medline_full_analysis}, the intervals are very similar to each other, although the width increases slightly under $\selection_\text{half}$ due to the smaller sample size after truncation than under $\selection$.

\subsection{Inference on a future example}
\label{sec:future_example}
We return to the motivating example from the Introduction: the 2019 MEDLINE study by~\citet{decolonization2019medline}. Their
primary outcome is MRSA infection, and they report a hazard ratio of 0.70 with a
95\% confidence interval of [0.52, 0.96], associated with a p-value of 0.03. Since this result is statistically significant at the 5\% level, they conclude that decolonization is effective in reducing the risk of MRSA infection. By using only the observed absolute z-statistics from the study and ignoring all other features,  we can view the study as exchangeable with those in MEDLINE. Hence, we can use what we have learned about studies published in MEDLINE to make further inferences about this particular study.
We first transform their confidence interval for the hazard ratio into a z-score; the corresponding standard error is $\text{SE} = 0.16$, and the z-score is $z = -2.22$.

We calculate the 95\% confidence intervals using the $F$-Localization and AMARI for the sign-agreement probability, and replication probability conditional on $\abs{z}  = 2.22$, along with the symmetrized posterior mean at $z  = -2.22$ as summarized in Table~\ref{tab:example_inference}.

\begin{table}
\centering
\caption{95\% Confidence intervals for multiple estimands at $z = -2.22$ under different priors}
\label{tab:example_inference}
\begin{tabular}{lcccc}
\toprule
 & \multicolumn{2}{c}{Sign-agreement probability} & \multicolumn{2}{c}{Replication probability} \\
\cmidrule(lr){2-3} \cmidrule(lr){4-5}
\textbf{Prior} & \textbf{FLOC} & \textbf{AMARI} & \textbf{FLOC} & \textbf{AMARI} \\
\midrule
$\mathcal{G}^{\mathrm{sN}}$  & (0.894, 0.960) & (0.939, 0.961) & (0.314, 0.338) & (0.316, 0.329) \\
$\mathcal{G}^{\mathrm{unm}}$ & (0.843, 0.974) & (0.907, 0.975) & (0.307, 0.353) & (0.308, 0.334) \\
$\mathcal{G}^{\mathrm{all}}$ & (0.776, 0.999) & (0.851, 0.999) & (0.287, 0.375) & (0.297, 0.350) \\
\bottomrule
\end{tabular}

\bigskip

\begin{tabular}{lcc}
\toprule
 & \multicolumn{2}{c}{\(\EE[ \Symm{G}]{\mu \mid Z=z}\)} \\
\cmidrule(lr){2-3}
\textbf{Prior} & \textbf{FLOC} & \textbf{AMARI} \\
\midrule
$\mathcal{G}^{\mathrm{sN}}$  & (-1.43, -1.31) & (-1.44, -1.37) \\
$\mathcal{G}^{\mathrm{unm}}$ & (-1.51, -1.29) & (-1.49, -1.39) \\
$\mathcal{G}^{\mathrm{all}}$ & (-1.70, -1.14) & (-1.67, -1.10) \\
\bottomrule
\end{tabular}
\end{table}

\subsection{Single-Year analysis}
\label{subsec:2018_analysis}
\begin{figure}
\centering
\begin{adjustbox}{max totalsize={\textwidth}{0.9\textheight}, center}
\begin{minipage}{\textwidth}
    \begin{subfigure}[t]{0.45\textwidth}
        \centering
        \caption{marginal density}
        \includegraphics[width=\linewidth]{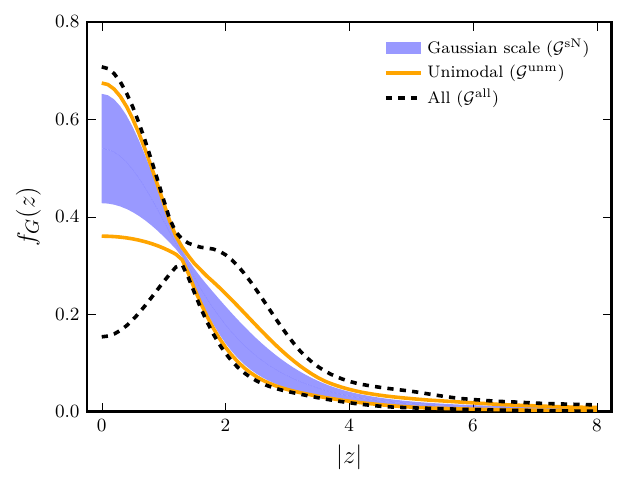}
        \label{fig:marginal_density_unnormalized_2018}
    \end{subfigure}
    \hfill
    \begin{subfigure}[t]{0.45\textwidth}
        \centering
        \caption{Normalized marginal density}
        \includegraphics[width=\linewidth]{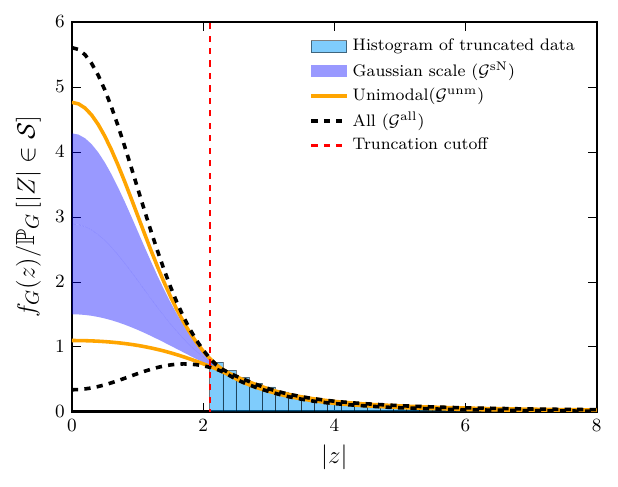}
        \label{fig:marginal_density_normalized_2018}
    \end{subfigure}
    \hfill
    \begin{subfigure}[t]{0.45\textwidth}
        \centering
        \caption{Binned density of power}
        \includegraphics[width=\linewidth]{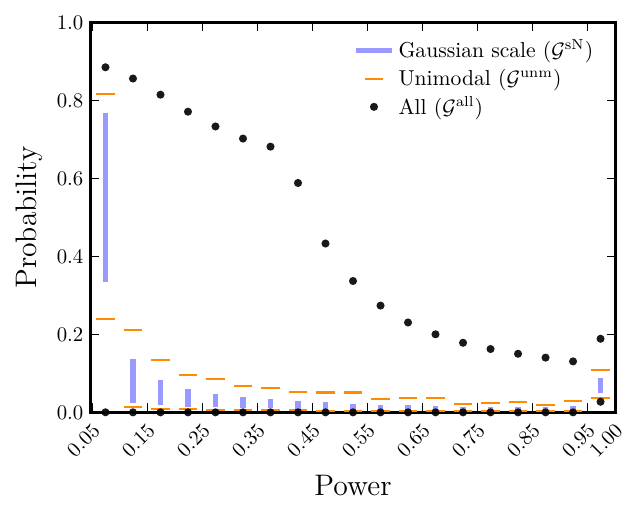}
        \label{fig:binned_power_2018}
    \end{subfigure}
    \hfill
    \begin{subfigure}[t]{0.45\textwidth}
        \centering
        \caption{Probability of same sign}
        \includegraphics[width=\linewidth]{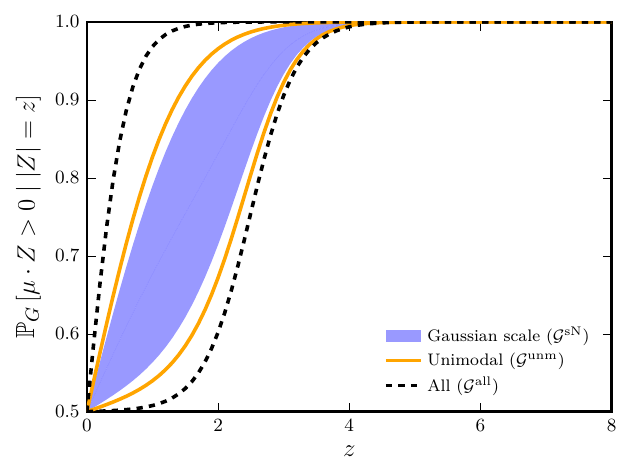}
        \label{fig:same_sign_2018}
    \end{subfigure}
    \hfill
    \begin{subfigure}[t]{0.45\textwidth}
        \centering
        \caption{Symmetrized posterior mean}
        \includegraphics[width=\linewidth]{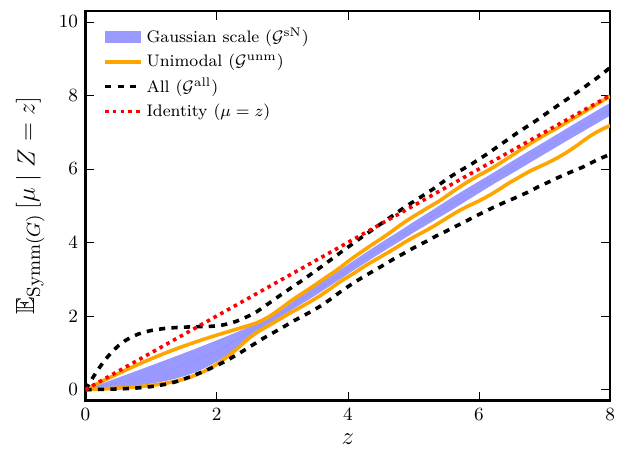}
        \label{fig:symmetrized_posterior_2018}
    \end{subfigure}
    \hfill
    \begin{subfigure}[t]{0.45\textwidth}
        \centering
        \caption{Replication probability}
        \includegraphics[width=\linewidth]{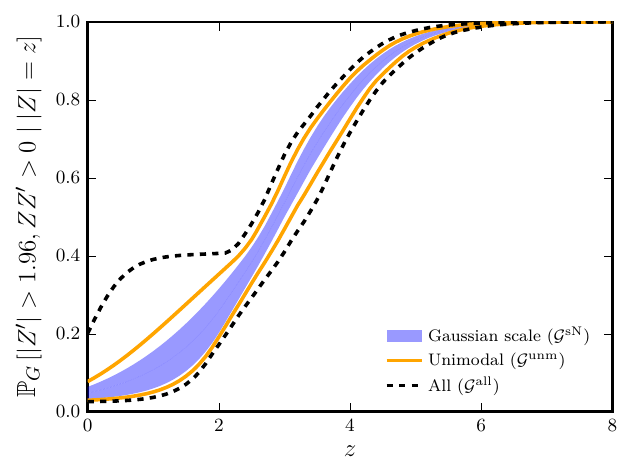}
        \label{fig:repl_prob_2018}
    \end{subfigure}
    \hfill
    \begin{subfigure}[t]{0.45\textwidth}
        \centering
        \caption{Future coverage probability}
        \includegraphics[width=\linewidth]{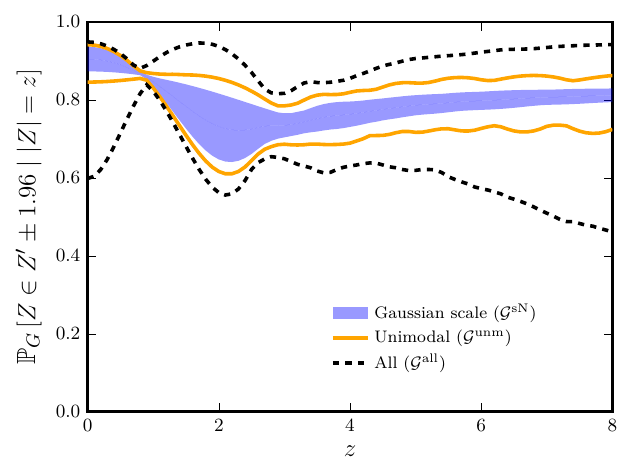}
        \label{fig:future_coverage_2018}
    \end{subfigure}
    \hfill
    \begin{subfigure}[t]{0.45\textwidth}
        \centering
        \caption{Effect size replication probability}
        \includegraphics[width=\linewidth]{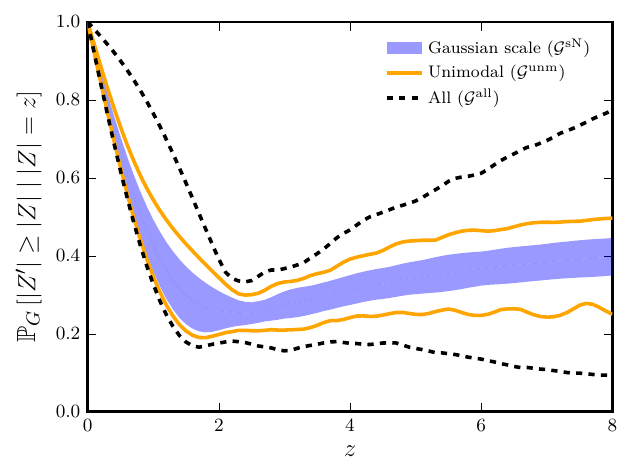}
        \label{fig:effect_size_repl_2018}
    \end{subfigure}
\end{minipage}
\end{adjustbox}
\caption{Confidence intervals analyses for MEDLINE (2018).}
\label{fig:medline_2018_CIs}
\end{figure}

\begin{table}
\centering
\caption{95\% Confidence intervals for proportion of studies with at least 80\% power under different prior classes on MEDLINE (2018).}
\label{tab:power_above_80_medline2018}
\begin{tabular}{lcc}
\toprule
\textbf{Prior}               & \textbf{FLOC} & \textbf{AMARI} \\
\midrule
$\mathcal{G}^{\mathrm{sN}}$  & (0.077, 0.127)  & (0.093, 0.130) \\
$\mathcal{G}^{\mathrm{unm}}$ & (0.073, 0.144)  & (0.084, 0.147) \\
$\mathcal{G}^{\mathrm{all}}$ & (0.034, 0.243)  & (0.033, 0.252) \\
\bottomrule
\end{tabular}
\end{table}

\begin{table}
\centering
\caption{Confidence intervals for each estimand under different priors on MEDLINE (2018).  
CIs for \(\omega_1\) and \(\omega_2\) are at the 97.5\% level;  
CI for \(\omega\) is at the 95\% level.}
\label{tab:omega_medline_2018}
\begin{tabular}{lccccc}
\toprule
 & & \multicolumn{2}{c}{\(\omega_{2}\) (97.5\%)} & \multicolumn{2}{c}{\(\omega\) (95\%)} \\
\cmidrule(lr){3-4} \cmidrule(lr){5-6}
\textbf{Prior} & \(\omega_{1}\) (97.5\%) & \textbf{FLOC} & \textbf{AMARI} & \textbf{FLOC} & \textbf{AMARI} \\
\midrule
$\mathcal{G}^{\mathrm{sN}}$  & \multirow{3}{*}{(5.64, 6.02)} & (2.15, 4.92) & (2.11, 3.66) & (12.12, 29.63) & (11.87, 22.02) \\
$\mathcal{G}^{\mathrm{unm}}$ & & (1.73, 5.39) & (1.69, 3.66) & (9.77, 32.46) & (9.54, 22.02) \\
$\mathcal{G}^{\mathrm{all}}$ & & (0.97, 6.17) & (0.93, 4.53) & (5.47, 37.14) & (5.27, 27.23) \\
\bottomrule
\end{tabular}
\end{table}
Here we demonstrate the results we obtained by focusing only on studies published in 2018 from the MEDLINE. Overall, the intervals are qualitatively similar to those from the full MEDLINE (2000-2018) data set, although substantially wider. Here we present a few interesting findings:
\begin{itemize}
    \item In Fig.~\ref{fig:marginal_density_normalized_2018}, even though our confidence intervals are much wider than the full data outside the truncation set $\selection$ due to reduced sample size. Within $\selection$, our confidence intervals track the empirical estimate of the normalized marginal density as closely as in the full data set, although with limited samples.
    \item Based on Fig.~\ref{fig:binned_power_2018} and Table~\ref{tab:power_above_80_medline2018}, most studies published in 2018 on MEDLINE exhibit low power, and we are 95\% confident that the proportion of studies with at least 80\% power is between 3.4\%-24.3\% (using $F$-Localization and $\mathcal{G}^{\mathrm{all}}$). The corresponding interval from the full data set is [4.7\%, 20.9\%], suggesting that the underlying power distribution of studies published in 2018 is similar to the population of MEDLINE studies published between 2000 to 2018. This consistency over time suggests that low power has been a persistent issue in the medical literature.
    \item We observe an even more noticeable drop in the future coverage probability around 2 than in the full dataset in Fig.~\ref{fig:future_coverage_2018}.
\end{itemize}

\begin{figure}
\setlength{\abovecaptionskip}{0pt}
\setlength{\belowcaptionskip}{0pt}
\centering
    \begin{subfigure}[t]{0.307\textwidth}
        \centering
        \caption{marginal density (with truncation)}
        \includegraphics[width=\linewidth]{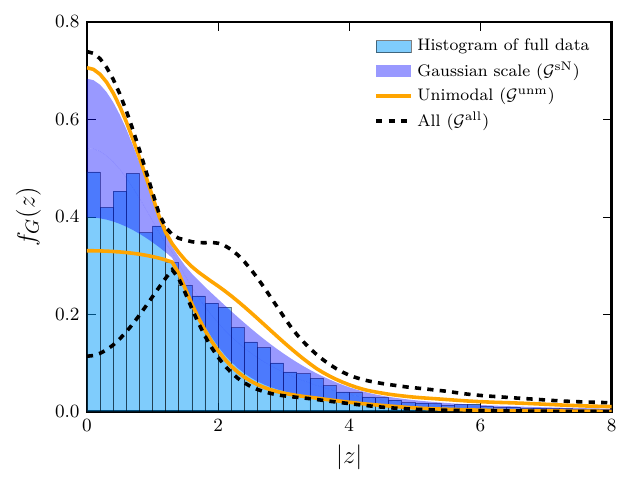}
        \label{fig:marginal_density_unnormalized_with}
    \end{subfigure}
    \hspace{0.01\linewidth}
    \begin{subfigure}[t]{0.307\textwidth}
        \centering
        \caption{marginal density (without truncation)}
        \includegraphics[width=\linewidth]{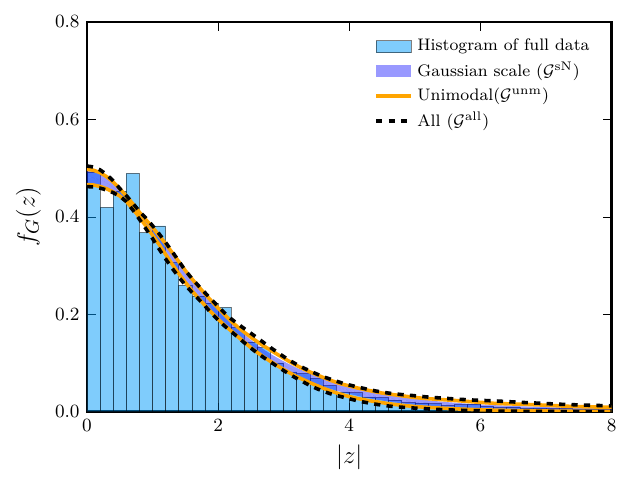}
        \label{fig:marginal_density_unnormalized_without}
    \end{subfigure}
    \hspace{0.01\linewidth}
    \begin{subfigure}[t]{0.307\textwidth}
        \centering
        \caption{marginal density (with sub-sampling)}
        \includegraphics[width=\linewidth]{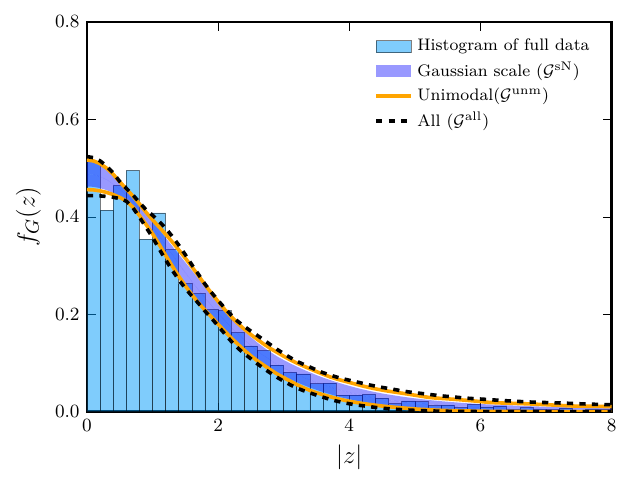}
        \label{fig:marginal_density_unnormalized_subsample}
    \end{subfigure}
    \\[0.005ex]
    \begin{subfigure}[t]{0.307\textwidth}
        \centering
        \caption{Normalized marginal density (with truncation)}
        \includegraphics[width=\linewidth]{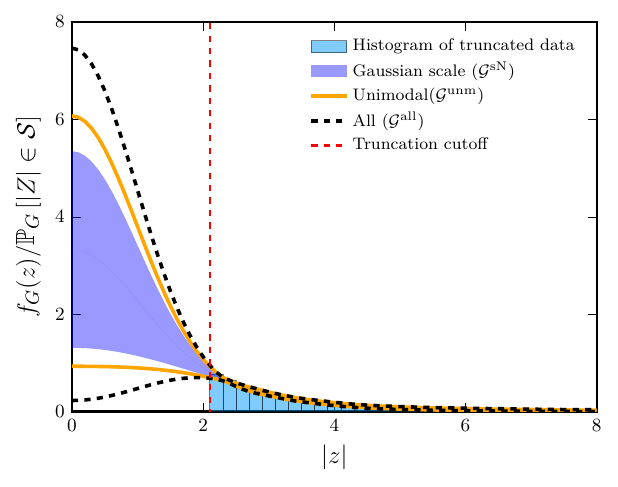}
        \label{fig:marginal_density_normalized_with}
    \end{subfigure}
    \hspace{0.01\linewidth}
    \begin{subfigure}[t]{0.307\textwidth}
        \centering
        \caption{Normalized marginal density (without truncation)}
        \includegraphics[width=\linewidth]{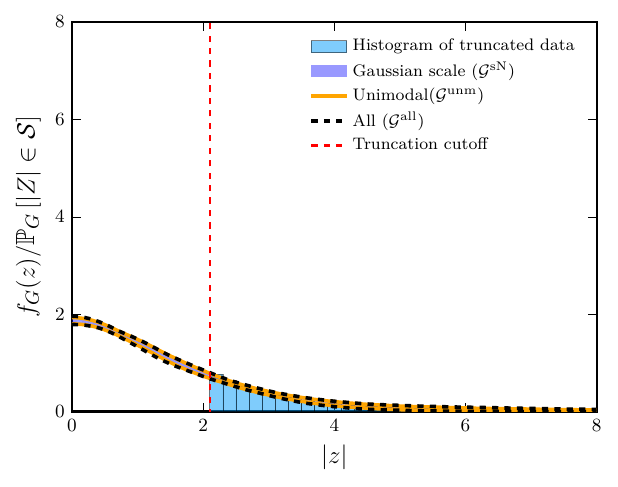}
        \label{fig:marginal_density_normalized_without}
    \end{subfigure}
    \hspace{0.01\linewidth}
    \begin{subfigure}[t]{0.307\textwidth}
        \centering
        \caption{Normalized marginal density (with sub-sampling)}
        \includegraphics[width=\linewidth]{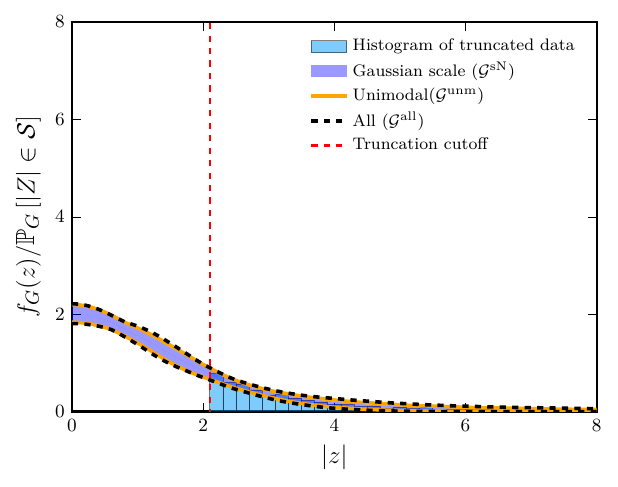}
        \label{fig:marginal_density_normalized_subsample}
    \end{subfigure}
    \\[0.005ex]
    \begin{subfigure}[t]{0.307\textwidth}
        \centering
        \caption{Binned density of power (with truncation)}
        \includegraphics[width=\linewidth]{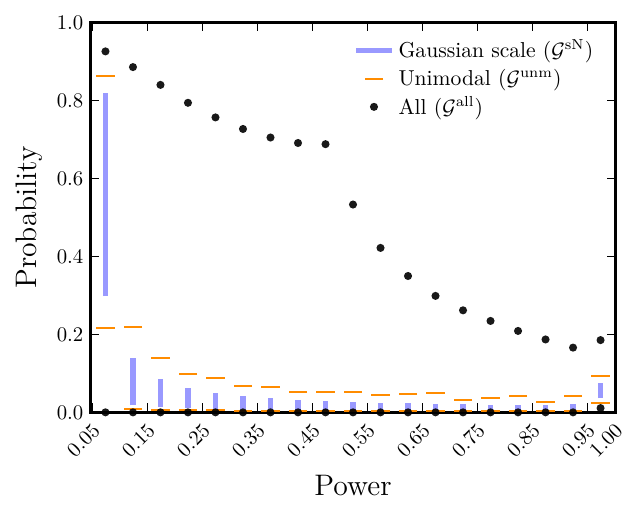}
        \label{fig:binned_power_with}
    \end{subfigure}
    \hspace{0.01\linewidth}
    \begin{subfigure}[t]{0.307\textwidth}
        \centering
        \caption{Binned density of power (without truncation)}
        \includegraphics[width=\linewidth]{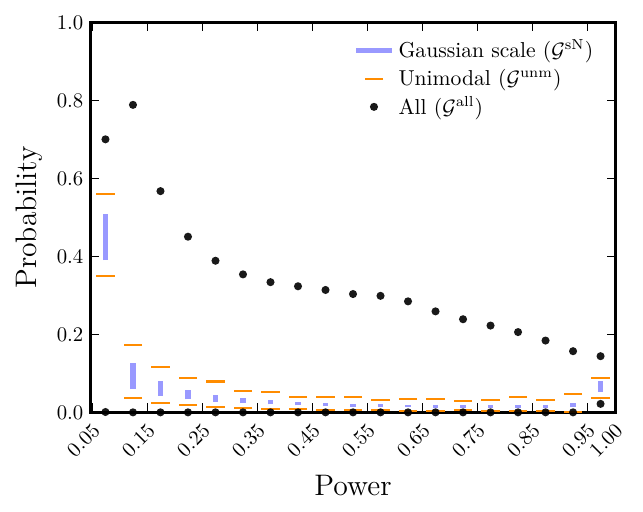}
        \label{fig:binned_power_without}
    \end{subfigure}
    \hspace{0.01\linewidth}
    \begin{subfigure}[t]{0.307\textwidth}
        \centering
        \caption{Binned density of power (with sub-sampling)}
        \includegraphics[width=\linewidth]{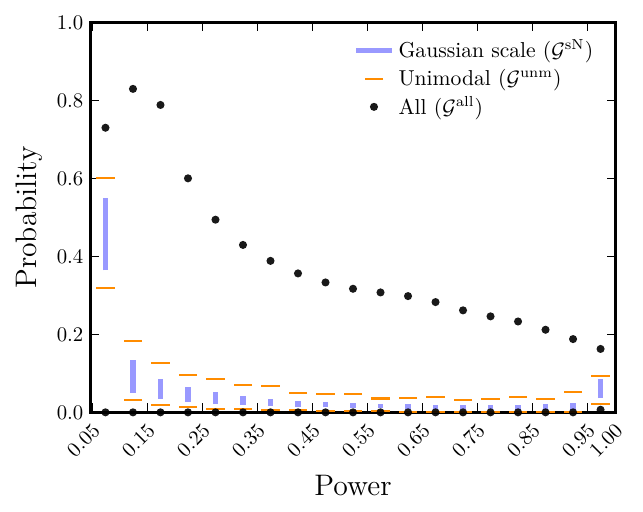}
        \label{fig:binned_power_subsample}
    \end{subfigure}
    \\[0.005ex]
    \begin{subfigure}[t]{0.307\textwidth}
        \centering
        \caption{Probability of same sign (with truncation)}
        \includegraphics[width=\linewidth]{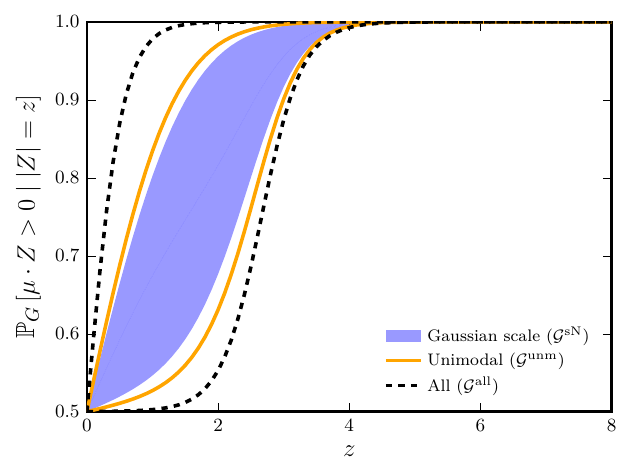}
        \label{fig:same_sign_with}
    \end{subfigure}
    \hspace{0.01\linewidth}
    \begin{subfigure}[t]{0.307\textwidth}
        \centering
        \caption{Probability of same sign (without truncation)}
        \includegraphics[width=\linewidth]{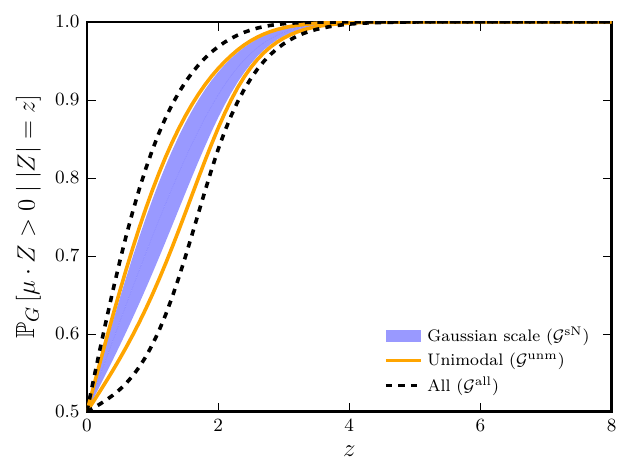}
        \label{fig:same_sign_without}
    \end{subfigure}
    \hspace{0.01\linewidth}
    \begin{subfigure}[t]{0.307\textwidth}
        \centering
        \caption{Probability of same sign (with sub-sampling)}
        \includegraphics[width=\linewidth]{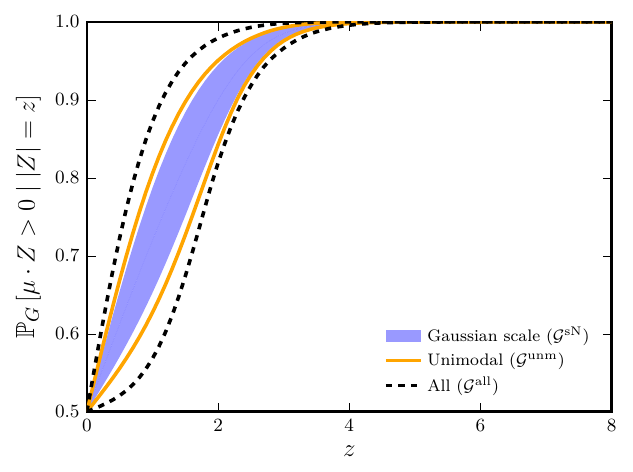}
        \label{fig:same_sign_subsample}
    \end{subfigure}
\caption{Confidence intervals analyses for Cochrane data on the first four estimands. Columns correspond to applying the truncation procedure (left), without any truncation (middle), and without truncation on a subset of the data (right).}
\label{fig:Cochrane_CIs_1}
\end{figure}

\begin{figure}
\setlength{\abovecaptionskip}{0pt}
\setlength{\belowcaptionskip}{0pt}
\centering
    \begin{subfigure}[t]{0.307\textwidth}
        \centering
        \caption{Symmetrized posterior mean (with truncation)}
        \includegraphics[width=\linewidth]{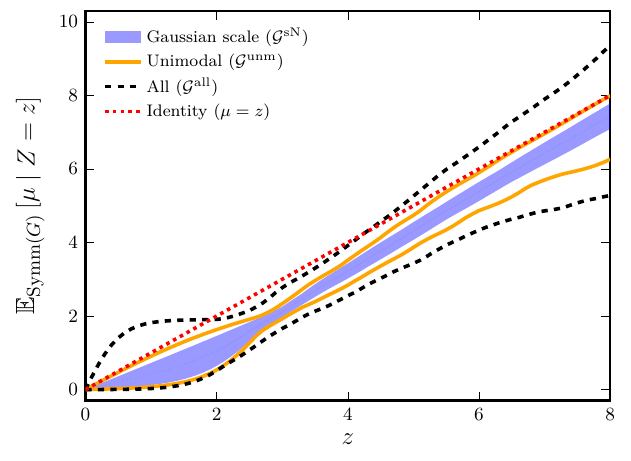}
        \label{fig:symmetrized_posterior_with}
    \end{subfigure}
    \hspace{0.01\linewidth}
    \begin{subfigure}[t]{0.307\textwidth}
        \centering
        \caption{Symmetrized posterior mean (without truncation)}
        \includegraphics[width=\linewidth]{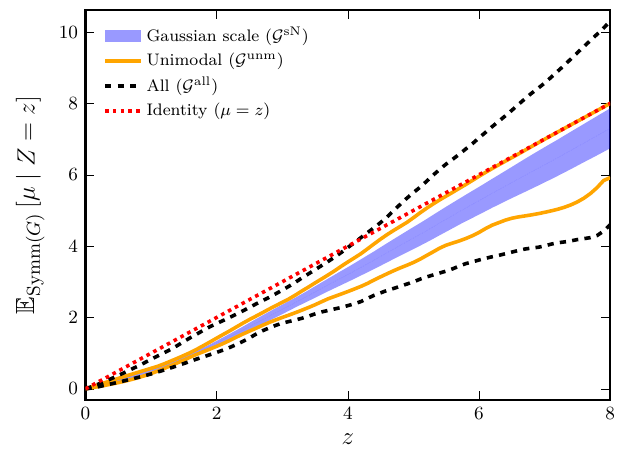}
        \label{fig:symmetrized_posterior_without}
    \end{subfigure}
    \hspace{0.01\linewidth}
    \begin{subfigure}[t]{0.307\textwidth}
        \centering
        \caption{Symmetrized posterior mean (with sub-sampling)}
        \includegraphics[width=\linewidth]{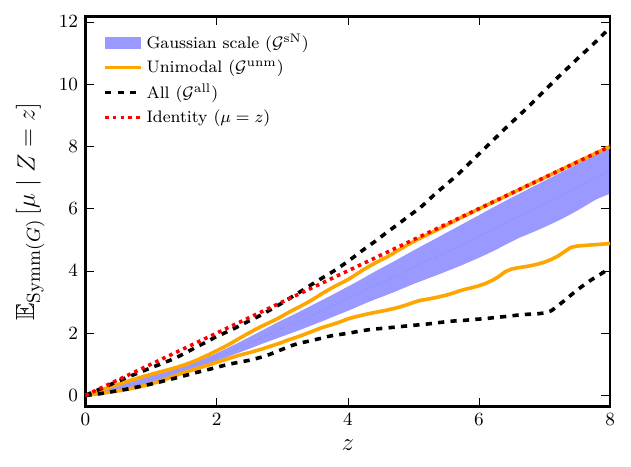}
        \label{fig:symmetrized_posterior_subsample}
    \end{subfigure}
    \\[0.005ex]
    \begin{subfigure}[t]{0.307\textwidth}
         \centering
        \caption{Replication probability (with truncation)}
        \includegraphics[width=\linewidth]{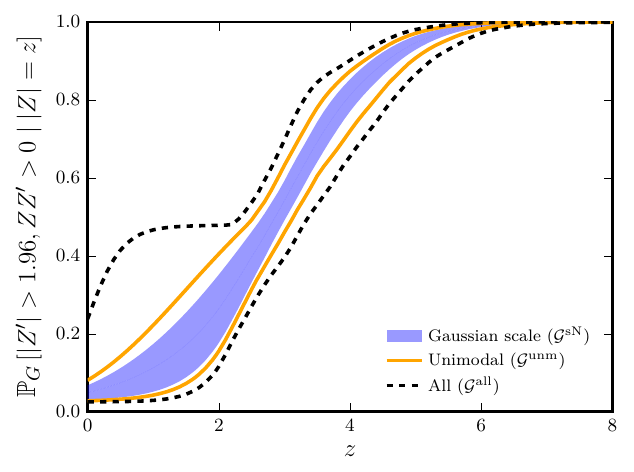}
        \label{fig:repl_prob_with}
    \end{subfigure}
    \hspace{0.01\linewidth}
    \begin{subfigure}[t]{0.307\textwidth}
          \centering
        \caption{Replication probability (without truncation)}
        \includegraphics[width=\linewidth] {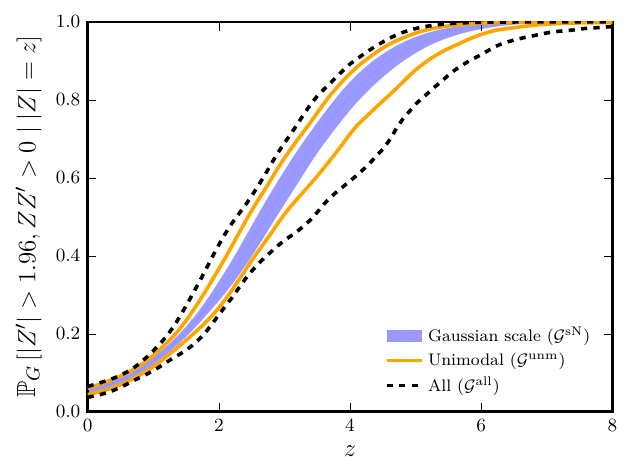}
        \label{fig:repl_prob_without}
    \end{subfigure}
    \hspace{0.01\linewidth}
    \begin{subfigure}[t]{0.307\textwidth}
          \centering
        \caption{Replication probability (with sub-sampling)}
        \includegraphics[width=\linewidth] {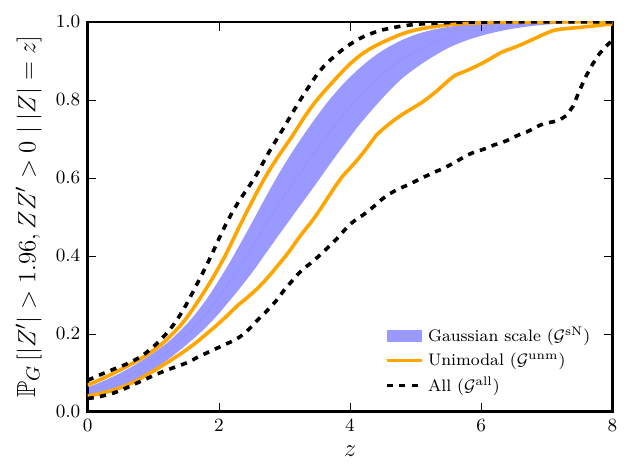}
        \label{fig:repl_prob_subsample}
    \end{subfigure}
    \\[0.005ex]
    \begin{subfigure}[t]{0.307\textwidth}
        \centering
       \caption{Future coverage probability (with truncation)}
        \includegraphics[width=\linewidth]{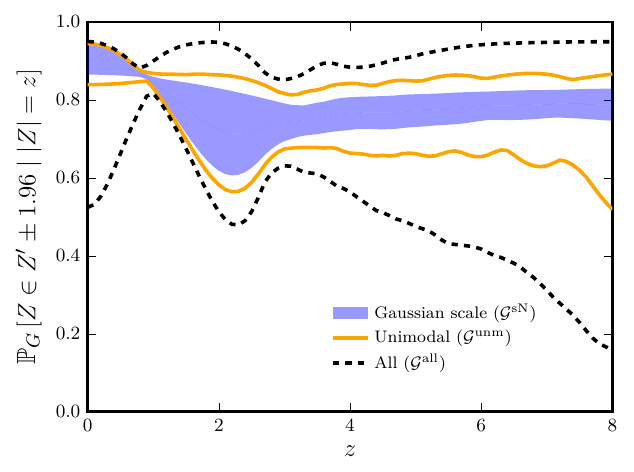}
        \label{fig:future_coverage_prob_with}
    \end{subfigure}
    \hspace{0.01\linewidth}
    \begin{subfigure}[t]{0.307\textwidth}
        \centering
         \caption{Future coverage probability (without truncation)}
        \includegraphics[width=\linewidth]{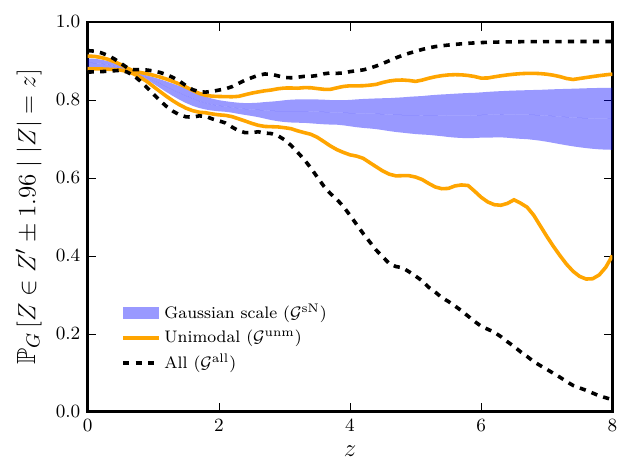}
        \label{fig:future_coverage_prob_without}
    \end{subfigure}
    \hspace{0.01\linewidth}
    \begin{subfigure}[t]{0.307\textwidth}
        \centering
        \caption{Future coverage probability (with sub-sampling)}
        \includegraphics[width=\linewidth]{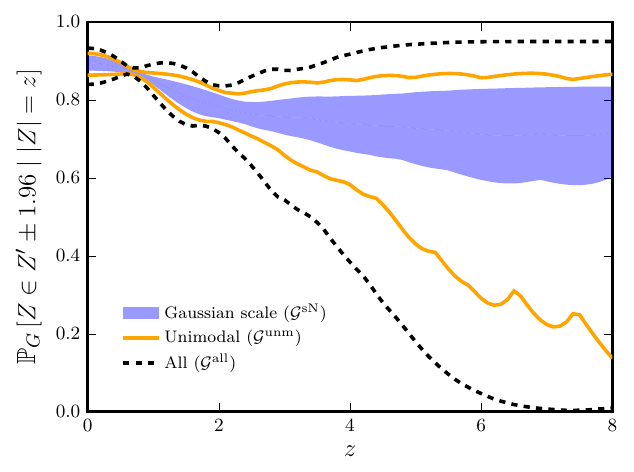}
        \label{fig:future_coverage_prob_subsample}
    \end{subfigure}
    \\[0.005ex]
    \begin{subfigure}[t]{0.307\textwidth}
        \centering
         \caption{Effect size replication probability\\ (with truncation)}
        \includegraphics[width=\linewidth]{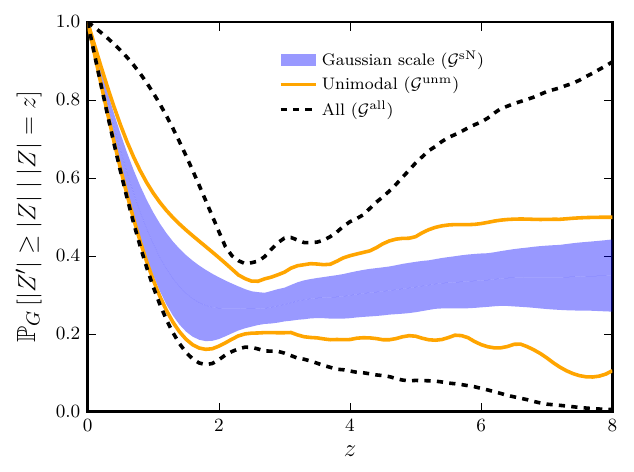}
        \label{fig:effect_repl_prob_with}
    \end{subfigure}
    \hspace{0.01\linewidth}
    \begin{subfigure}[t]{0.307\textwidth}
        \centering
        \caption{Effect size replication probability\\ (without truncation)}
        \includegraphics[width=\linewidth]{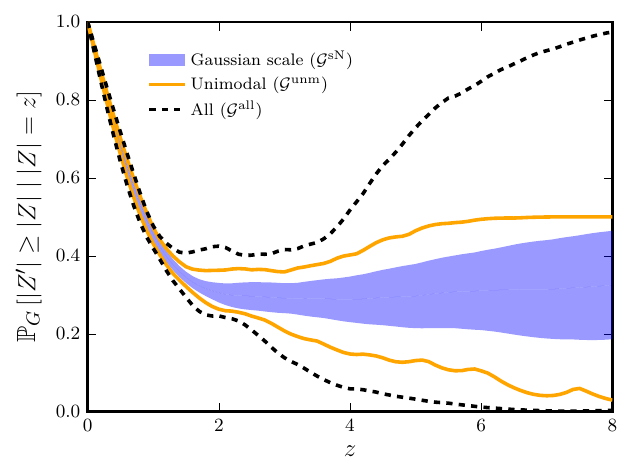}
        \label{fig:effect_repl_prob_without}
    \end{subfigure}
    \hspace{0.01\linewidth}
    \begin{subfigure}[t]{0.307\textwidth}
        \centering
         \caption{Effect size replication probability\\ (with sub-sampling)}
        \includegraphics[width=\linewidth]{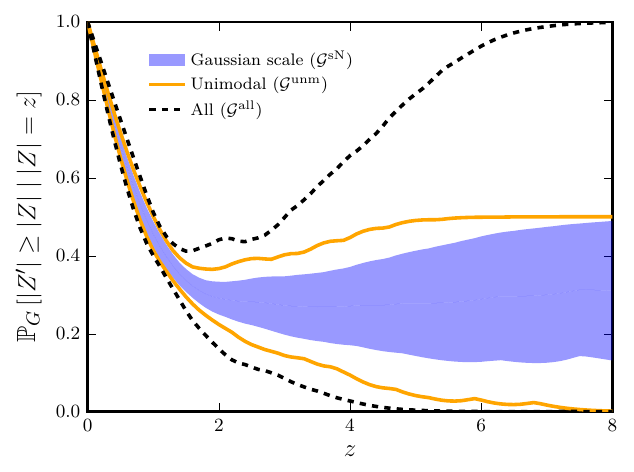}
        \label{fig:effect_repl_prob_subsample}
    \end{subfigure}
\caption{Confidence intervals analyses for Cochrane data on the other four estimands. Columns correspond to applying the truncation procedure (left), without any truncation (middle), and without truncation on a subset of the data (right).}
\label{fig:Cochrane_CIs_2}
\end{figure}

\begin{table}
\centering
\caption{95\% confidence intervals for the proportion of studies with at least 80\% power under different prior classes on Cochrane data, with and without truncation.}
\label{tab:power_above_80_Cochrane_combined}
\begin{tabular}{lcc@{\hspace{1em}}cc}
\toprule
& \multicolumn{2}{c}{\textbf{With truncation}} & \multicolumn{2}{c}{\textbf{Without truncation}} \\
\cmidrule(lr){2-3}\cmidrule(lr){4-5}
\textbf{Prior} & \textbf{FLOC} & \textbf{AMARI} & \textbf{FLOC} & \textbf{AMARI} \\
\midrule
$\mathcal{G}^{\mathrm{sN}}$  & (0.062, 0.128) & (0.067, 0.125) & (0.100, 0.131) & (0.101, 0.114) \\
$\mathcal{G}^{\mathrm{unm}}$ & (0.058, 0.145) & (0.056, 0.126) & (0.076, 0.143) & (0.088, 0.131) \\
$\mathcal{G}^{\mathrm{all}}$ & (0.023, 0.277) & (0.022, 0.253) & (0.036, 0.241) & (0.033, 0.244) \\
\bottomrule
\end{tabular}
\end{table}

\begin{table}
\centering
\caption{Confidence intervals for each estimand under different priors on Cochrane data.  
CIs for \(\omega_1\) and \(\omega_2\) are at the 97.5\% level;  
CI for \(\omega\) is at the 95\% level.}
\label{tab:omega_cochrane}
\begin{tabular}{lccccc}
\toprule
 & & \multicolumn{2}{c}{\(\omega_{2}\) (97.5\%)} & \multicolumn{2}{c}{\(\omega\) (95\%)} \\
\cmidrule(lr){3-4} \cmidrule(lr){5-6}
\textbf{Prior} & \(\omega_{1}\) (97.5\%) & \textbf{FLOC} & \textbf{AMARI} & \textbf{FLOC} & \textbf{AMARI} \\
\midrule
$\mathcal{G}^{\mathrm{sN}}$  & \multirow{3}{*}{(0.40, 0.42)} & (1.94, 6.05) & (1.90, 4.99) & (0.77, 2.55) & (0.75, 2.10) \\
$\mathcal{G}^{\mathrm{unm}}$ & & (1.53, 6.70) & (2.01, 6.09) & (0.60, 2.82) & (0.80, 2.57) \\
$\mathcal{G}^{\mathrm{all}}$ & & (0.79, 7.93) & (1.03, 6.94) & (0.31, 3.35) & (0.41, 2.93) \\
\bottomrule
\end{tabular}
\end{table}
\subsection{Cochrane robustness analysis}
\label{subsec: supp_Cochrane_analysis}
We replicate our analysis on the Cochrane database under three settings: (1) with truncation adjustment, (2) without truncation, using all the data, and (3) without truncation, using a random subset of the data of the same sample size as in (1). Across all settings, we ignore the sign information and only model $\abs{Z_i}$ as before. We note some key observations here:
\begin{itemize}
    \item By comparing the second column and the third column, we observe the effect of sample size reduction only partially accounts for the noticeably wider confidence interval with truncation. The additional factor at play here is extrapolation. For instance, comparing Fig.~\ref{fig:marginal_density_normalized_with} with Fig.~\ref{fig:marginal_density_normalized_without} and Fig.~\ref{fig:marginal_density_normalized_subsample}, we see that we are forced to extrapolate outside the truncation set $\selection$ in Fig.~\ref{fig:marginal_density_normalized_with}, whereas in Fig.~\ref{fig:marginal_density_normalized_without} and Fig.~\ref{fig:marginal_density_normalized_subsample} the same extrapolation is not needed since no truncation occurs.
    \item Although in general, confidence intervals constructed with truncation adjustment are wider than those without truncation due to both reduced sample size and extrapolation, we also note that the opposite occurs for some posterior estimands at large values of $\abs{z}$. For instance, in Fig.~\ref{fig:symmetrized_posterior_with} and Fig.~\ref{fig:symmetrized_posterior_with}, the confidence intervals constructed with truncation for the symmetrized posterior mean are shorter than those without truncation for $\abs{z}>4$. We also observe a similar phenomenon in Fig.~\ref{fig:repl_prob_with} and Fig.~\ref{fig:repl_prob_without} for large $\abs{z}$. In Section~\ref{sec:floc_posterior_detail}, we lay out an explanation for this observation.
    \item Figure~\ref{fig:marginal_density_unnormalized_with} and Figure~\ref{fig:marginal_density_unnormalized_without} showcase the marginal density of z-scores in the Cochrane data set. Notice here we also overlay the histogram of all z-scores as an empirical estimate of the marginal density, which was not plausible for the MEDLINE analysis, since we clearly observe the distortion of z-score distribution from Fig.~\ref{fig:hist_med}. Here, we believe the distribution of z-scores in Cochrane suffered relatively small publication bias as demonstrated by~\citet{zwet2021statistical} and ~\citet{Schwabe2021Cochrane}, and one can also observe from the histogram that clearly there is no heaping around the 0.05 significance threshold compared to the MEDLINE dataset. Figure~\ref{fig:marginal_density_unnormalized_with} shows that our confidence interval covers the empirical density across the entire spectrum with reasonable width, demonstrating the method's ability to account for potential selection while maintaining coverage. In particular, intervals around $z \approx 0$ are wider to accommodate selection. This is in contrast to the narrow intervals that track the histogram well we see in Fig.~\ref{fig:marginal_density_unnormalized_without}, where we ignore any selection.
    \item From Table~\ref{tab:power_above_80_Cochrane_combined}, Fig.~\ref{fig:binned_power_with} and~\ref{fig:binned_power_without}, we see that most trials in Cochrane have low power under both settings. Without truncation, we are 95\% confident that the proportion of studies with greater than 80\% power is between [10.1\%, 11.4\%]. This aligns closely with the estimate of 12\% reported by~\citet{zwet2021statistical} for the same data under similar assumption on $\mathcal{G}$ as our $\mathcal{G}^{\mathrm{sN}}$. Under the setting of truncation, the intervals are wider but the overall conclusion that most studies have low power remains unchanged.
\end{itemize}

\subsection{Simulation study: z-curve 2.0 and the bootstrap} 
\label{subsec:simulation_detail}

\begin{figure}[t]
\centering
\begin{minipage}[t]{0.32\linewidth}
      \begin{subfigure}[t]{\linewidth}
        \centering
        \includegraphics[width=\linewidth]{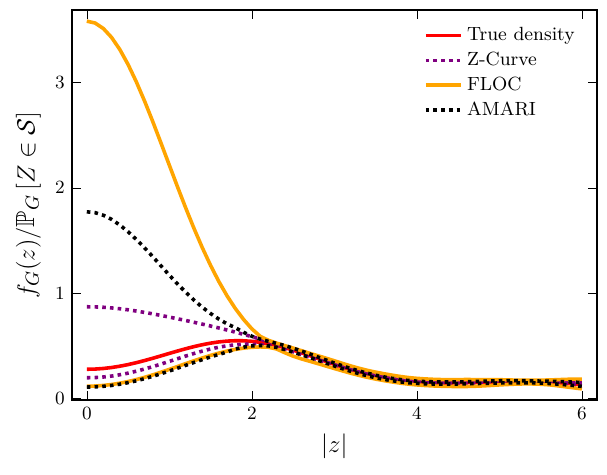}
        \caption{}
        \label{fig:average_ci}
    \end{subfigure}

    \vspace{0.8em}

    \begin{subfigure}[t]{\linewidth}
        \centering
        \includegraphics[width=\linewidth]{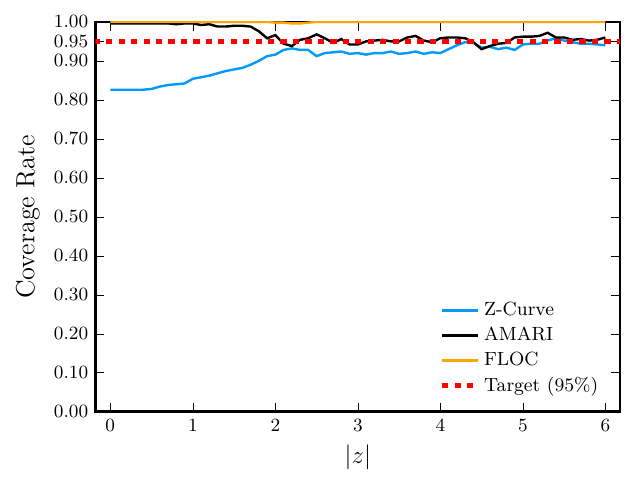}
        \caption{}
        \label{fig:pointwise_coverage}
    \end{subfigure}
\end{minipage}
\hfill
\begin{minipage}[t]{0.32\linewidth}
\centering
    \begin{subfigure}[t]{\linewidth}
        \centering
        \includegraphics[width=\linewidth]{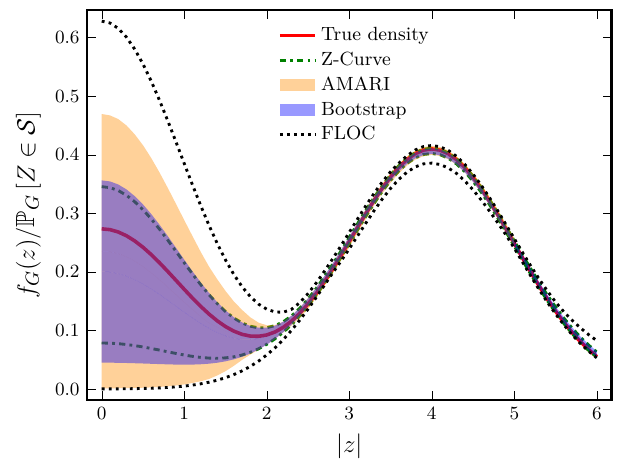}
        \caption{}
        \label{fig:average_ci_erik}
    \end{subfigure}

    \vspace{0.8em}

    \begin{subfigure}[t]{\linewidth}
        \centering
        \includegraphics[width=\linewidth]{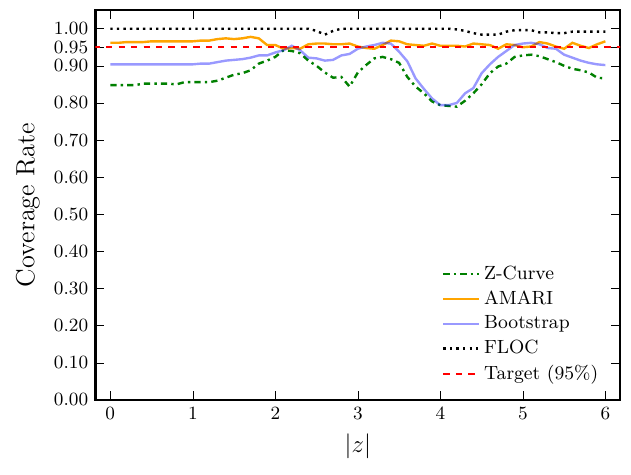}
        \caption{}
        \label{fig:pointwise_coverage_erik}
    \end{subfigure}
\end{minipage}
\hfill
\begin{minipage}[t]{0.32\linewidth}
    \centering
    \begin{subfigure}[t]{\linewidth}
        \centering
        \includegraphics[width=\linewidth]{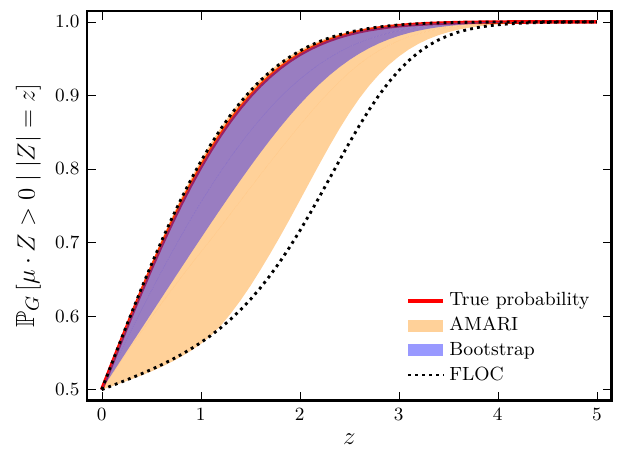}
        \caption{}
        \label{fig:average_ci_sign}
    \end{subfigure}

    \vspace{0.8em}

    \begin{subfigure}[t]{\linewidth}
        \centering
        \includegraphics[width=\linewidth]{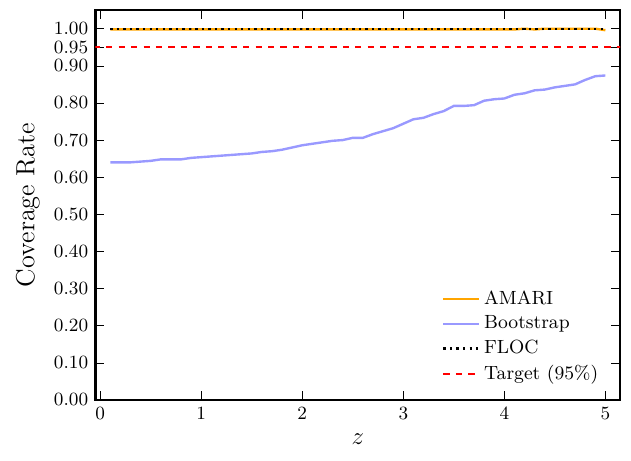}
        \caption{}
        \label{fig:pointwise_coverage_sign}
    \end{subfigure}
\end{minipage}
\caption{
Simulation results over 500 Monte Carlo replications. Columns correspond to inference for normalized marginal density under $G_1 \in \mathcal{G}^{\text{z-curve}}$ (left), inference for normalized marginal density under $G_2 \in \mathcal{G}^{\text{z-curve}}$ (middle), and inference for sign-agreement probability under $G_3 \in \mathcal{G}^{\mathrm{sN}}$ (right). (a) Average confidence intervals for z-curve, Bootstrap, AMARI, FLOC with $G_1 = 0.16\cdot \delta_1+0.59\cdot\delta_2+0.03\cdot\delta_4+0.12\cdot\delta_5+0.1\cdot\delta_6$, where $\delta_k$ denotes the point mass at $k$. (b) Point-wise coverage rate of the above confidence intervals. (c)-(d) Inference results in the simulation for z-curve, Bootstrap, AMARI, FLOC with $G_2 = 0.25\cdot\delta_0+0.75\cdot\delta_4$. (e)-(f) Inference results in the simulation for Bootstrap, AMARI, FLOC under $G_3 = 0.93\cdot\mathrm{N}(0,1.6)+0.07\mathrm{N}(0,6.5)$.
}
\label{fig:simulation_results}
\end{figure}

Here we conduct numerical studies where we know the ground truth prior $G$ of $\mu$. Our aims are threefold: (i) to connect our proposal to the z-curve method, (ii) to demonstrate that more standard inferential methods like the bootstrap can fail, and (iii) to show that our proposed methods achieve frequentist coverage.

We first evaluate the performance of each method under the z-curve setting, we specify a ground-truth prior $G_1 = 0.16\cdot \delta_1+0.59\cdot\delta_2+0.03\cdot\delta_4+0.12\cdot\delta_5+0.1\cdot\delta_6$, where $\delta_k$ denotes the point mass at $k$. Thus the class of SNR distributions used by z-curve 2.0 is well-specified in our simulation. We then set the publication probability as $\pi(z) = \ind(z \in [1.96,6])$ and generate $n_{\text{all}}=$10,000 samples according to the model in \eqref{eq:publication_bias_model}. We apply four methods: $F$-Localization, AMARI, bootstrap with interior point (IP) NPMLE, and z-curve 2.0, all using $\mathcal{G}^{\text{z-curve}}$ and $\selection = [1.96,6]$ (the defaults of z-curve 2.0) to form 95\% confidence intervals for the normalized marginal density at varying values of $z \in [0,6]$.  We report coverage averaged over 500 Monte Carlo replications of the simulation. Fig.~\ref{fig:average_ci} shows the averaged confidence intervals for each method (i.e., it shows the average of lower and upper bounds, averaged over Monte Carlo replicates). We see that z-curve has the shortest intervals, followed closely by bootstrap. We also note that AMARI produces significantly shorter confidence intervals than $F$-Localization. Fig.~\ref{fig:pointwise_coverage} shows the coverage of the four methods. The z-curve method fails to achieve the nominal coverage, particularly outside the truncation region where the method extrapolates. In contrast, the point-wise coverage rate for $F$-Localization is nearly 100\%. The coverage of AMARI is near nominal for larger values of $z$, but is also very high for smaller values of $z$. Another notable finding is that replacing the EM algorithm with IP yields improved coverage relative to the z-curve, especially outside the truncation region, where the bootstrap with IP yields coverage that is much closer to the nominal level. This improvement arises from better numerical stability and convergence of the NPMLE, whereas the EM algorithm may converge very slowly for nonparametric maximum likelihood estimation in empirical Bayes problems~\citep{koenker2014convex}.

We next repeat the preceding simulation with a new ground-truth prior $G_2 = 0.25\cdot\delta_0+0.75\cdot\delta_4 \in \mathcal{G}^{\text{z-curve}}$, which was considered by~\citet{vanzwet2026prior} to illustrate limitations of z-curve. Fig.~\ref{fig:average_ci_erik} shows the averaged confidence intervals for each method. We see that the z-curve again has the shortest intervals, followed by the bootstrap. Fig.~\ref{fig:pointwise_coverage_erik} reports coverage of all methods: the bootstrap method achieves improved or comparable coverage relative to z-curve across all $z$. Nevertheless, both methods fail to achieve nominal coverage, both outside and inside the truncation region. The point-wise coverage rate for $F$-Localization is nearly 100\%, and the coverage of AMARI is near nominal across $z$.

Finally, we evaluate the performance of our proposed methods and the bootstrap under our setting. We specify a ground-truth prior $G_3 = 0.93\cdot\mathrm{N}(0,1.6)+0.07\cdot\mathrm{N}(0,6.5) \in \mathcal{G}^{\mathrm{sN}}$. We then set the publication probability as $\pi(z) = \ind(z \in [2.1,\infty))$ and generate $n_{\text{all}}=$10,000 samples according to the model in \eqref{eq:publication_bias_model}. We consider three models: $F$-Localization, AMARI, bootstrap. For each method, we use $\mathcal{G}^{\mathrm{sN}}$ and $\selection = [2.1,\infty)$ to form 95\% confidence intervals for the sign-agreement probability $\PP[G]{\mu \cdot Z > 0 \mid \abs{Z}=z}$ over $z \in [0, 5]$.  We report coverage averaged over 500 Monte Carlo replications of the simulation in Fig.~\ref{fig:pointwise_coverage_sign}, and Fig.~\ref{fig:average_ci_sign} shows the averaged confidence intervals for each method. We see that AMARI produces shorter confidence intervals than $F$-Localization, while the bootstrap has the shortest intervals. Nevertheless, the bootstrap fails to achieve the nominal coverage across all $z$, with coverage below 70\% for $z$ outside the truncation region where the method extrapolates. In contrast, the point-wise coverage rate for $F$-Localization and AMARI are nearly 100\% for all $z$.

Overall, our simulations demonstrate that bootstrap-based methods, including z-curve, can suffer from substantial undercoverage, especially when extrapolation is needed. In contrast, our proposed methods have rigorous coverage guarantees both theoretically and empirically.
\section{More on \texorpdfstring{$F$-Localization}{F-Localization}}
\label{sec:floc_posterior_detail}
Our goal in this section is to provide a brief explanation for the finding in Section~\ref{subsec: supp_Cochrane_analysis} that, in fact for some posterior estimands, especially for large values of $|z|$, we had smaller confidence intervals using our proposed truncation-adjusted approach vs. an alternative approach that ignores truncation. (Recall that for most estimands and for most values of $|z|$ this finding was strongly reversed with confidence intervals that account for truncation being much longer.)

To explain this observation, we briefly comment on the $F$-Localization approach in a setting without publication bias, i.e., when
$$
\mu_i \simiid G,\;\; |Z_i| \mid \mu_i \simindep \mathrm{N}(\mu_i,1),\;\;i=1,\dots,n_{\text{all}}.
$$
(In other words, $D_i=1$ almost surely.) In this setting, we denote the marginal distribution of $|Z_i|$ by $F_G$. An $(1-\alpha)$-$F$-Localization is a confidence set of distributions that includes the marginal distribution of $|Z_i|$ with probability at least $1-\alpha$. Specifically, the DKW $F$-Localization includes all distributions with CDF $F(\cdot)$ that satisfies
\begin{equation}
\abs{F(t) - \hat{F}(t)} \leq \sqrt{\frac{\log(2/\alpha)}{2n_{\text{all}}}} \; \text{ for all }\; t \geq 0, \; \text{ where }\; \hat{F}(t):=\frac{1}{n_{\text{all}}} \sum_{i=1}^{n_{\text{all}}}  \ind(|Z_i| \leq t).
\label{eq:floc_general}
\end{equation}
However, many other choices are possible, and depending on the estimands of interest, other choices may yield tighter intervals with no choice choice dominating all others uniformly; see~\citet{ignatiadis2022confidence} for further discussion. (An important practical advantage of the DKW $F$-Localization is that it holds for all choices of prior class and likelihood.)

In~\eqref{eq:KS-ball} we constructed the DKW $F$-Localization for the truncated marginal distribution $F_{\Tilt{G}}^\mathrm{B}$, which is the same as $F_G^A$ by Theorem~\ref{theo:observational_equivalence}. Here we argue that this $F$-Localization can also be interpreted as an alternative $F$-Localization for the original (untruncated) $F$.

For simplicity, suppose $\selection=[c,\infty)$ (as in our main specification).
Notice that we can write~\eqref{eq:KS-ball} in terms of the (untruncated) $F$ as follows:
$$
\abs{\frac{F(t)}{1-F(c)} -  \frac{\hat{F}(t)}{1-\hat{F}(c)}} \leq \sqrt{\frac{\log(2/\alpha)}{2 (1-\hat{F}(c)) n_{\text{all}}}} \; \text{ for all }\; t \geq c.
$$
Above for simplicity, we assumed that there are no observations exactly equal to $c$. To gain further intuition, we (incorrectly) pretend that $\hat{F}(c) = F(c)$, so that we can further rearrange as follows,
$$
\abs{F(t) - \hat{F}(t)} \leq \sqrt{1-\hat{F}(c)}\sqrt{\frac{\log(2/\alpha)}{2n_{\text{all}}}} \; \text{ for all }\; t \geq c.
$$
Thus, to some order of approximation, the $F$-Localization on the truncated data can be interpreted as making the following tradeoffs compared to the $F$-Localization in~\eqref{eq:floc_general}: we do not have any bounds for $F(t)$ for all $t \in [0,c)$, but for $t \geq c$, we get slightly tighter bands around the empirical distribution. This explains why for most estimands that depend on the behavior of the marginal distribution near $0$, the confidence intervals that account for truncation are much wider. It also explains why for some posterior estimands with large $|z|$, the confidence intervals can be shorter. 

We note that all of the above comparisons only apply when there is no publication bias. In the presence of publication bias as in our main analysis, we cannot use the $F$-Localization in~\eqref{eq:floc_general}.

\section{Zero-truncated Poisson sampling and Corbet's butterflies}
\label{sec:butterflies}
In this section, we demonstrate the generality of our selective tilting framework beyond the folded normal distribution in the context of publication bias by applying it to the Poisson empirical Bayes problem under zero-truncation. Specifically, we explain the analogous observational equivalence results (which have been previously derived by~\citet{bohning2006equivalence} and~\citet{efron2019bayes}) and then demonstrate how we can form confidence intervals for empirical Bayes estimands using our techniques.

In particular, we revisit the classical Corbet's butterfly data from empirical Bayes analysis~\citep{fisher1943relation}. The butterfly data records species of butterfly that Alexander Corbet had trapped after two years in Malaysia: 118 rare species had been captured only once, 74 had been captured twice, etc. Table 3 in~\citet{efron2019bayes} records the butterfly data. Notice that we do not know how many butterfly species were never captured (this leads to the famous missing species problem). Formally, let $Z_i$ denote the number of butterflies of species $i$ Corbet captured in two years, then $Z_i$ we observe can only take values in $\cb{1,2,\dotsc}$, i.e., $0$ is truncated.

Adopting our End truncation model and Per-unit truncation model to this zero-truncated Poisson scenario:
\begin{equation}
\mbox{\textbf{End truncation:~~~~~}}
\mu_i \sim \gprior, \ \ Z_i \sim \Poisson{\mu_i},  \ \ \text{observe } Z_i \text{ only if } Z_i >0.\mbox{~~~~~~}
\label{eq:end_truncation_poisson} \tag{A$_{\text{Pois}}$}
\end{equation}

\begin{equation}
\mbox{\textbf{Per-unit truncation:~~~~~}}\mu_i \sim \gprior, \ \ Z_i \sim \TruncPoisson{\mu_i},\mbox{~~~~~~~~~~~~~~~~~~~~~~~~~~~}    
\label{eq:per_unit_truncation_poisson}\tag{B$_{\text{Pois}}$}
\end{equation}
where $\TruncPoisson{\mu_i}$ is the Poisson distribution $\Poisson{\mu_i}$ truncated to $\cb{1,2,\dotsc}$ with probability mass function:
$$p(z \mid \mu) = \frac{\exp(-\mu) \mu^z}{z! (1-\exp(-\mu))}, \; z \in \cb{1,2,\dotsc}.$$
The corresponding marginal density for models~\eqref{eq:end_truncation_poisson} and~\eqref{eq:per_unit_truncation_poisson}:
$$ f^{A_{\text{Pois}}}_G(z) = \int  \frac{\exp(-\mu) \mu^z}{z! (1-\exp(-\mu))}\, G(\dd\mu),\;\;  f^{B_{\text{Pois}}}_G(z) = \frac{\int  \exp(-\mu) \mu^z/z!\, G(\dd\mu)}{\int \p{1- \exp(-\mu)}\, G(\dd\mu)}, \,\;\, z\in \cb{1,2,\dotsc}.$$

We establish an observational equivalence between models~\eqref{eq:end_truncation_poisson} and~\eqref{eq:per_unit_truncation_poisson} similar to our Theorem~\ref{theo:observational_equivalence} in Section~\ref{subsec:simple_tilting}, by defining a tilting operation for priors adapted for the zero-truncated Poisson:

\noindent \textbf{Tilting of priors.} The tilting operation for priors under the zero-truncated Poisson,
\begin{equation}
  \label{eq:tilt_G_poisson}
\Tilt{G}(\dd\mu) := \frac{(1-\exp(-\mu)) G(\dd\mu)}{ \int (1-\exp(-\mu)) G(\dd\mu)}, 
\end{equation}
where our truncation set is $\selection = \cb{z \in \mathbb{N}^+: z \in \cb{1,2,\dotsc}}$. There is one subtlety to the $\Tilt{\cdot}$ mapping in the truncated Poisson setting: it is not injective if we allow for distributions that place mass on $0$.\footnote{ Let $G$ be a distribution on $[0,\infty)$ with $G(\{0\}) \in (0,1)$. Then there exists another distribution $H \neq G$ such that $\Tilt{G}=\Tilt{H}$.} For this reason, we restrict our attention to distributions with \smash{$G(\cb{0})=0$} and we write \smash{$\mathcal{G}_{>0}$} for the class of all such distributions. Given any \smash{$\tG \in \Tilt{\mathcal{G}_{>0}}$}, we can invert the tilting operation via the following untilting:
\begin{equation}
  \label{eq:untilt_G_poisson}
\Untilt{\tG}(\dd\mu) := \frac{(1-\exp(-\mu))^{-1} \tG(\dd\mu)}{ \int (1-\exp(-\mu))^{-1} \tG(\dd\mu)}, \;\; \tG \in \Tilt{\mathcal{G}} . 
\end{equation}

For a fixed prior $G \in \mathcal{G}$ where $\mathcal{G}$ is a convex class of prior, we have that the marginal distribution of $Z$ under model~\eqref{eq:end_truncation_poisson} with prior $G$ is equal to the marginal distribution of $Z$ under model~\eqref{eq:per_unit_truncation_poisson} with prior $\Tilt{G}$. Likewise, all the theoretical results established in Section~\ref{subsec:simple_tilting} extend to the zero-truncated Poisson setting. Hence, we can follow the inferential approaches described in Section~\ref{subsec:description_inference} to conduct inference on estimands.

Returning to the Corbet butterfly data, we use the above insight with $\mathcal{G} = \pp([0.01, 25])$, where:
$$
\pp(\mathcal{K}) = \cb{\text{$G$ distribution: support($G$) $\in \mathcal{K}$}} \text{for $\mathcal{K} \subset \mathbb{R}$.}
$$
A particular estimate of the posterior mean is obtained by applying Zipf's law, which assumes that $f(z) \propto 1/z$, where $f(z)$ is the marginal density of $Z$. Combining it with Robbins' formula, we obtain that $\mathbb{E}[\mu \mid Z=z] = z$. 
In Figure~\ref{respfig:butterfly}, we provide 95\% confidence intervals for the posterior mean using $F$-localization and AMARI. We also plot the Zipf's law estimate, which is contained in the $F$-localization intervals. Figure~5 in \citet{efron2019bayes} demonstrates other empirical Bayes estimates of the posterior mean on the same dataset; our intervals also contain those estimates well.

\begin{figure}
\centering
\begin{adjustbox}{width=0.5\linewidth} 
\includegraphics[width=\linewidth]{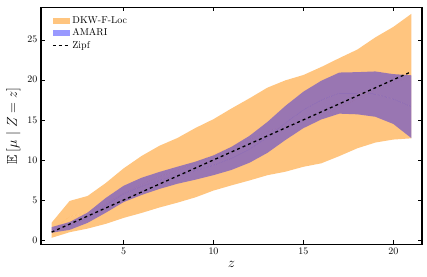}
\end{adjustbox} 
\caption{Application of $F$-Localization and AMARI for the posterior mean in Corbet's butterfly data. The black dashed line shows Zipf's estimate \smash{$\widehat{\mathbb{E}}[\mu \mid Z=z] = z$}.}
\label{respfig:butterfly}
\end{figure}

\end{document}